\documentclass[a4paper,11pt]{article}
\usepackage[T1]{fontenc}
\usepackage{lmodern}
\usepackage[dvipdfmx]{graphicx}
\usepackage[dvipdfmx,hidelinks]{hyperref}
\usepackage{amsmath,amssymb,amsthm,mathrsfs,amsfonts,bm}
\usepackage{cite}
\usepackage{longtable,booktabs,array}
\usepackage{geometry}
\usepackage{caption}
\usepackage{authblk}
\usepackage{comment}
\newcolumntype{L}[1]{>{\raggedright\arraybackslash}p{#1}}
\newcolumntype{C}[1]{>{\centering\arraybackslash}p{#1}}

\allowdisplaybreaks
\theoremstyle{definition}
\newtheorem{definition}{Definition}[section]
\newtheorem{remark}[definition]{Remark}
\theoremstyle{plain}

\newtheorem{proposition}[definition]{Proposition}

\numberwithin{equation}{section}

\makeatletter
\renewcommand\maketitle{%
  \begin{titlepage}%
  \thispagestyle{empty}%
  \begin{flushright}
  {\small CHIBA-EP-280\\
  }
  \end{flushright}
  \vspace*{1.2cm}
  \begin{center}
 {\LARGE\bfseries 
 Understanding Color Confinement \\
 through Quantum Reference Frames \\
 and Relational Observables
  \par}
  \vspace{1.5cm}
  {\large Kei-Ichi Kondo\par}
  \vspace{0.5cm}
  {\itshape Department of Physics, Graduate School of Science, Chiba University,  \\
  1-33 Yayoi-cho, Chiba, Chiba 263-8522, Japan\par}
  \vspace{0.5cm}
  {\itshape Research and Education Center for Natural Sciences, Keio University,  \\
4-1-1 Hiyoshi, Yokohama, Kanagawa 223-8521, Japan\par}
  \vspace{0.7cm}
  {\ttfamily kondok@faculty.chiba-u.jp\par}
  \end{center}
  \vspace{1.2cm}
  \begin{abstract}
  
We present a formulation for understanding color confinement on the basis of quantum reference frames (QRFs) and relational observables. 
In the QRF approach to color confinement, colored quantities are not defined as isolated local fields, but rather as relational observables with respect to a color frame or a dressing field. 
By the Gauss law, local color charge is excluded from the physical bulk algebra, whereas semi-local data such as boundary fluxes and Wilson lines may remain. 
Color confinement is characterized by the absence of a globally well-defined long-distance color QRF capable of supporting isolated non-singlet relational observables. 
This formulation preserves the insight of the Kugo--Ojima type picture, while avoiding dependence on a particular covariant gauge, an unbroken global BRST symmetry, and a specific infrared confinement criterion.

As concrete examples, we consider (1+1)-dimensional Yang--Mills theory,  (1+1)-dimensional U(1) gauge--Higgs model, and the two-dimensional U(1) gauge--Higgs model on $\mathbb{H}^2$ (Euclidean $AdS_2$) and three-dimensional SU(2) gauge--Higgs model on $\mathbb{H}^3$ (Euclidean $AdS_3$) obtained by dimensional reduction of four-dimensional SU(2) Yang--Mills theory restricted to symmetric-instanton sectors. 
Through explicit calculations in these examples and in controlled sectors, 
we provide nontrivial consistency checks for the validity of the present formulation.
We also discuss prospects for four-dimensional Yang--Mills theory and gauge--Higgs theories. 

QRF-based color confinement provides a relational formulation of why isolated colored asymptotic sectors are absent, under the stated criterion. 
At the same time, it clarifies the role played by topological defects and shows that other confinement criteria---the Wilson-loop area law, the preservation of generalized symmetry, namely center one-form symmetry, and the restoration of residual gauge symmetry---can be organized as manifestations of a common QRF structure.  

  \end{abstract}
  \end{titlepage}%
}
\begin{document}
\maketitle

\tableofcontents

%\newpage
%.

\newpage
\section{Introduction}

The confinement problem in Yang--Mills theory has several inequivalent layers.  The most familiar layer is \textbf{quark confinement}: in four-dimensional pure $SU(N)$ Yang--Mills theory, when fundamental color sources are treated as \textbf{external probes}, the expectation value of a large \textbf{Wilson loop} in a representation $R$ is expected to obey an \textbf{area law} \cite{Wilson:1974sk} for the representation $R$ with non-zero $N$-ality $q_R\not=0$:
\footnote{
It should be remarked that external sources with zero $N$-ality can be screened by gluons.
For zero $N$-ality, one must distinguish intermediate-distance Casimir scaling from asymptotic screening.
}
\begin{equation}
  \langle W_R(C)\rangle
  =\left\langle \frac{1}{d_R}\mathrm{tr}_R\,\mathrm{P}
  \exp\left(ig_{\rm YM}\oint_C \mathscr{A}_\mu(x)dx^\mu\right)\right\rangle
  \simeq \exp[-\sigma_R {\rm Area}(C)+\cdots] .
  \label{intro-area-law}
\end{equation}
Here $\sigma_R$ is the string tension and the ellipsis denotes subleading perimeter, curvature, and constant terms.  For a rectangular loop of spatial size $L$ and temporal size $T$, the area law implies the static potential
\begin{equation}
  V_R(L)=-\lim_{T\to\infty}{1\over T}\log \langle W_R(L,T)\rangle
  \simeq \sigma_R L+\cdots \to \infty \ ( L \to \infty),
\end{equation}
which expresses the energetic cost of separating a heavy quark--antiquark pair.  This is a precise and powerful diagnostic for the \textbf{confinement of external probes}.  It is, however, not identical to the full statement of color confinement.

\begin{figure}[htbp]
  \centering
  \includegraphics[width=.15\linewidth]{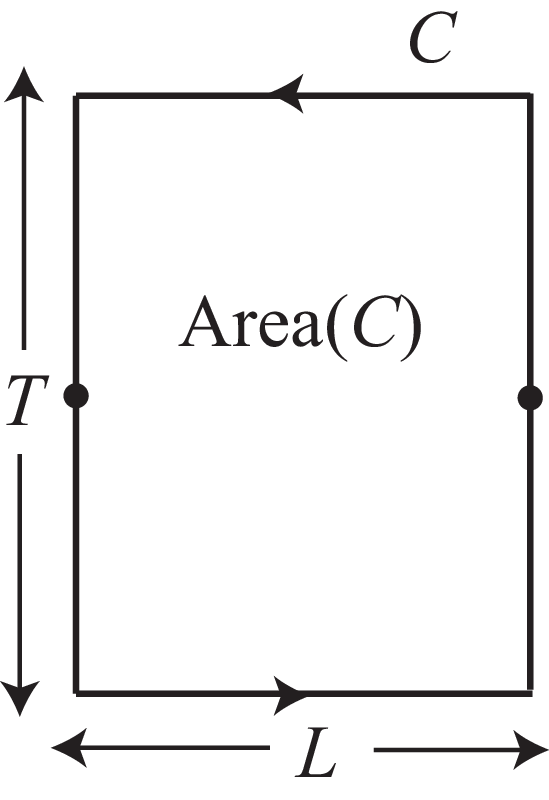}
  \caption{A Wilson loop with a rectangular loop $C$ as a probe for a pair of external charges.}
  \label{fig:Wilson-loop}
\end{figure}

By \textbf{color confinement} one usually means a stronger physical assertion: \textbf{isolated colored objects, whether elementary or composite, do not occur as physical asymptotic states.} 
 The \textbf{physical Hilbert space} contains only 
%gauge-invariant states, 
states invariant under gauge redundancies, while possibly transforming under genuine boundary/asymptotic symmetries,
or more carefully states satisfying the \textbf{Gauss law} and the relevant boundary conditions 
(The physical Hilbert space is not necessarily a singlet under boundary symmetries).
This statement is conceptually simple but technically subtle.  A local quark field $\psi_a(x)$ or a local gluon component $\mathscr{A}_\mu^A(x)$ is not itself a gauge-invariant observable.  Therefore the absence of the bare colored fields from the gauge-invariant algebra is almost tautological.  The nontrivial question is whether one can nevertheless define an isolated colored excitation by dressing it with gauge fields, with a Wilson line, or with a physical reference system that fixes the color orientation.  
In other words, \textbf{color confinement is a question about whether a color label can survive as a relational, gauge-invariant, physically meaningful label at long distance.}
This distinction is the central motivation of the present paper.  We formulate \textbf{color confinement in terms of quantum reference frames (QRF) and relational observables}.

A quantum reference frame (QRF) is a physical quantum system relative to which states, observables, and dynamical laws are specified.  The modern QRF literature builds on earlier relational approaches to quantum reference systems and develops both operational and symmetry-based formulations of relational observables, frame changes, and the dependence of subsystem descriptions on the chosen frame \cite{AharonovKaufherr1984,Rovelli1991,BartlettRudolphSpekkens2007,AngeloEtAl2011,LoveridgeMiyaderaBusch2018,GiacominiCastroRuizBrukner2019}.  In perspective-neutral and group-theoretic formulations, a particular QRF perspective is obtained by a relational reduction, while changes of QRF relate different reduced descriptions \cite{VanrietveldeEtAl2020,DeLaHametteGalley2020,KrummHoehnMueller2021}.

This viewpoint is especially relevant to gauge theories.  Gauge-invariant observables can be understood as relational quantities defined with respect to dynamical frames, while edge modes and boundary dressings may act as reference-frame data for subregions \cite{CarrozzaHoehn2022,FewsterEtAl2025}.  Moreover, globally valid relational perspectives need not exist when the reduced configuration space has nontrivial topology or Gribov-type obstructions \cite{VanrietveldeHoehnGiacomini2023}.  Recent work has extended QRF methods toward relational quantum field theory, semi-local observables, and regional algebras in gauge theories \cite{Glowacki2024,AraujoRegadoHoehnSartini2025,FewsterJanssenRejzner2025}.  These developments motivate treating a color frame or dressing field as a QRF and asking whether a globally defined long-distance frame can support isolated non-singlet relational observables.

A QRF is a physical or dynamical system used to define the meaning of an otherwise frame-dependent quantity \cite{GiacominiCastroRuizBrukner2019,AraujoRegadoHoehnSartini2025,CarrozzaHoehn2022,
FewsterJanssenRejzner2025,KabelBruknerWieland2023,
Lacambra2026,Ahmad2022,HametteKabelBrukner2025}.  
In a gauge theory, the redundancy of the local gauge description may be interpreted as the redundancy in choosing local internal frames.  The gauge-invariant information is then not an absolute color component, but the relation between the system and the chosen color frame.  This reformulation is especially natural for Yang--Mills theory because line operators, Wilson dressings, boundary electric fluxes, and holonomies are already \textbf{semi-local objects}.  They are not pointlike local observables, but neither are they merely global numbers.  They live on paths, surfaces, boundaries, or asymptotic regions.

The simplest analogy is a translation-invariant two-particle system.  If two particles $A$ and $B$ have positions $x_A$ and $x_B$ on a line, the simultaneous shift
\begin{equation}
  x_A\mapsto x_A+a,\quad x_B\mapsto x_B+a
\end{equation}
may be regarded as a gauge redundancy of the absolute origin.  Neither $x_A$ nor $x_B$ is invariant, but the difference
\begin{equation}
  X_{B|A}:=x_B-x_A
\end{equation}
is invariant and means the position of $B$ relative to the frame $A$.  Gauge fixing $x_A=0$ does not create a new observable; it merely chooses a representative of the relational observable.  Figure~\ref{fig:qrf-simple} illustrates this elementary point.

\begin{figure}[htbp]
  \centering
  \includegraphics[width=.60\linewidth]{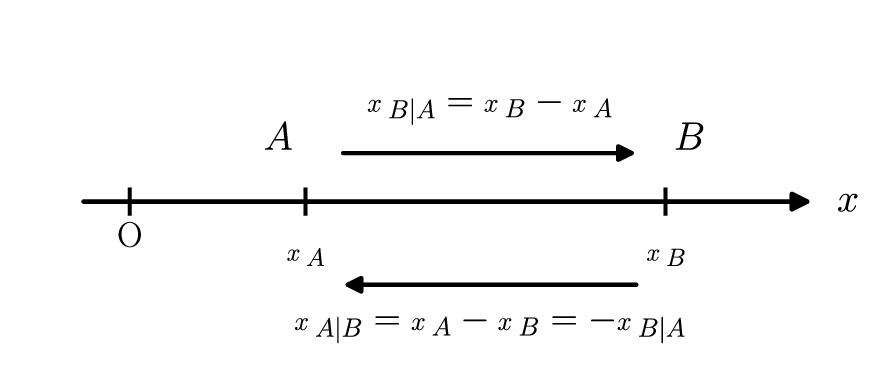}
  \caption{Two particles $A$ and $B$ on a one-dimensional straight line with positions $x_A$ and $x_B$, respectively.  The quantities $x_{B|A}=x_B-x_A$ and $x_{A|B}=x_A-x_B=-x_{B|A}$ are relational observables with respect to the frames $A$ and $B$, respectively.
  Taking $x_A=0$ is a gauge fixing.}
  \label{fig:qrf-simple}
\end{figure}

The same logic applies to color.  A statement such as ``the quark is red'' is not meaningful by itself, because the word red presupposes a color basis.  A meaningful statement must say \textbf{how the quark transforms relative to a specified color frame.}  If such a frame can be defined only locally, or if its parallel transport around a defect produces nontrivial holonomy, the color label becomes patch-dependent or path-dependent.  Then \textbf{color is not a globally definable physical attribute}.  This is the \textbf{QRF version of color confinement}.

The viewpoint is close in spirit to the Kugo--Ojima scenario \cite{Kugo:1979gm,Kugo:1995km} but differs in its logical target.  The Kugo--Ojima criterion attempts to formulate color confinement in covariant gauges by using BRST cohomology and the behavior of the global color charge \cite{NakanishiOjima}.  In a favorable situation, the color charge is BRST exact and acts trivially on the physical Hilbert space.  The famous condition $u(0)=-1$ is then interpreted as the absence of physical colored states.  This construction is elegant because it directly addresses color charge rather than only the static potential.  Nevertheless it faces several difficulties: it relies on a globally well-defined BRST charge in a gauge-fixed formulation; it is sensitive to Gribov-copy issues \cite{Gribov:1977wm,Zwanziger:1989mf}; it is not naturally phrased in terms of boundaries and edge modes; its infrared ghost criterion is not directly supported by the decoupling behavior often found in Landau-gauge lattice and continuum studies; and it does not by itself sharply distinguish a confining phase from a Higgs phase, since both have gauge-invariant physical states.

%\subsection{The relation to Kugo--Ojima is appropriate, but the discussion of the decoupling solution needs references}
%It is natural to regard the Kugo--Ojima criterion as a state-space criterion asserting that the color charge acts trivially on the BRST cohomology, and to reformulate it in the QRF language as the absence of a frame for reading color. 
%However, the statement that the decoupling behavior found in Landau-gauge lattice and continuum studies does not directly support the Kugo--Ojima ghost criterion should be accompanied by representative references to Dyson--Schwinger, functional renormalization group, and lattice studies. 
%Without citations, this statement may look too assertive.

The QRF formulation keeps the main lesson of Kugo and Ojima--that \textbf{physical color charge should disappear from the bulk physical spectrum}--but recasts it in a gauge-invariant and boundary-sensitive language.  The relevant question is not whether a particular gauge-fixed ghost dressing has an infrared singularity.  It is whether a physically admissible long-distance color frame exists.  
\textbf{If such a frame exists, one can define relational non-singlet observables.}  
In a Higgs phase, for example, a condensed scalar field can supply such a color frame, and a gauge-invariant composite such as $\Phi^\dagger\psi$ may describe a particle relative to the Higgs frame.
\textbf{In a confining phase no stable long-distance color frame exists; non-singlet relational correlations fail to define isolated physical poles, and only singlets or line-attached semi-local objects remain.}

This paper separates the following three layers.
\begin{enumerate}
\item Elimination of bulk color charge by the Gauss law.
\item Global obstruction of the QRF frame by topological defects.
\item Wilson-loop area-law behavior or disappearance of non-singlet poles through the defect ensemble.
\end{enumerate}

This paper follows the outline below.  
First of all, we set up the general framework.  Section~\ref{sec:ym-qrf} defines color charge in Yang--Mills theory, explains how a QRF or dressing field makes a colored expression relational, and states the proposed QRF confinement criterion.  The subsequent analysis proceeds from this general framework to gauge invariance versus frame independence, the exactly soluble structure of $(1+1)$-dimensional Yang--Mills theory, topological defects in QRF dressings, the $(1+1)$-dimensional $U(1)$ gauge--Higgs model, symmetric instantons and dimensional reduction, the reduced hyperbolic vortex and monopole systems, the Wilson-loop area law, and the relation with generalized-symmetry diagnostics.

\section{Yang--Mills color charge and the QRF approach}
\label{sec:ym-qrf}

\subsection{Gauss law, boundary charge, and the relational color charge}

Let $G$ be a compact non-Abelian gauge group with Lie algebra $\mathfrak{g}$, generators $T_A$ %, and invariant trace 
normalized by $2\mathrm{tr}(T_AT_B)=\delta^{AB}$ for $G=SU(N)$ in the fundamental normalization.  

The matter field $\psi(x)$ in the fundamental representation of the group $G$ transforms under a local gauge transformation $g(x)\in G$ as
\begin{equation}
  \psi(x)\mapsto g(x)\psi(x) .
\end{equation}
For a matter field $\psi(x)$ in a representation $R$, one has $\psi(x)\mapsto R(g(x))\psi(x)$.  

The Yang--Mills gauge field $\mathscr{A}_\mu(x)$ and the field strength $\mathscr{F}_{\mu\nu}(x)$ with Hermitian generators $T_A$ are denoted throughout by
\begin{align}
  \mathscr{A}_\mu(x)&=\mathscr{A}_\mu^A(x)T_A \in \mathfrak{g},
\nonumber\\
  \mathscr{F}_{\mu\nu}(x)&=\partial_\mu\mathscr{A}_\nu(x)-\partial_\nu\mathscr{A}_\mu(x)
     -ig_{\rm YM}[\mathscr{A}_\mu(x),\mathscr{A}_\nu(x)] =\mathscr{F}_{\mu\nu}^A(x)T_A \in \mathfrak{g} .
\end{align}
Here 
the Yang--Mills coupling constant $g_{\rm YM}$ is written explicitly, but it is later absorbed into the Yang--Mills field by the scaling: $\mathscr{A}_\mu \to \mathscr{A}_\mu/g_{\rm YM}$. 
They transform under a local gauge transformation $g(x)\in G$   as
\begin{equation}
  \mathscr{A}_\mu(x)\mapsto \mathscr{A}_\mu^g(x)
  =g(x)\mathscr{A}_\mu(x)g(x)^{-1}
  +{i\over g_{\rm YM}}g(x)\partial_\mu g(x)^{-1},
  \quad
  \mathscr{F}_{\mu\nu}(x)\mapsto g(x)\mathscr{F}_{\mu\nu}(x)g(x)^{-1} .
\end{equation}

The canonical color electric field is given by $\mathscr{E}_j=\mathscr{F}_{0j}$.  
%In temporal gauge, 
The Gauss-law generator $\mathscr{G}^A(x)$ is %schematically 
given by
\begin{equation}
  \mathscr{G}^A(x) :=\left(\mathscr{D}_j[\mathscr{A}]\mathscr{E}^j\right)^A(x)-\rho^A(x),
  \quad
  \mathscr{G}^A(x)|\mathrm{phys}\rangle=0,
  \label{gauss-law}
\end{equation}
where $\rho^A$ is the matter color-charge density.  Equation~\eqref{gauss-law} says that local gauge transformations connected to the identity act trivially on physical states, up to possible boundary terms.

Let $\Sigma$ be a spatial region with boundary $\partial\Sigma$.  
The formal total \textbf{color charge} associated with a spatial region $\Sigma$ is obtained by integrating the Gauss law.  If one ignores boundary subtleties, one writes
\begin{equation}
  Q^A(\Sigma)=\int_\Sigma d^{D-1}x\,\rho^A(x)
  =\int_\Sigma d^{D-1}x\,\left(\mathscr{D}_j\mathscr{E}^j\right)^A .
\end{equation}
By the non-Abelian Gauss law this can be converted into a surface term only after specifying how the color index is compared at different points.  
This is already the first appearance of the QRF issue.  
\textbf{The Lie-algebra component label $A$ has no invariant meaning unless a color frame is specified throughout the region $\Sigma$ or at least on the boundary $\partial \Sigma$.}

%Let $\Sigma$ be a spatial region with boundary $\partial\Sigma$.  
For a Yang--Mills field $\mathscr{A}_j$ with canonical electric field $\mathscr{E}^j$, the infinitesimal gauge transformation with parameter $\omega=\omega^A T_A$ is generated by
\begin{equation}
  G[\omega]=\int_\Sigma d^{D-1}x\,2\operatorname{tr}\{\omega \mathscr{D}_j\mathscr{E}^j\} .
  \label{app-G-raw}
\end{equation}
After integration by parts, we obtain
\begin{equation}
  G[\omega]
  =-\int_\Sigma d^{D-1}x\,2\operatorname{tr}\{(\mathscr{D}_j\omega)\mathscr{E}^j\}
  +\int_{\partial\Sigma}dS_j\,2\operatorname{tr}\{\omega\mathscr{E}^j\} .
  \label{app-G-boundary}
\end{equation}
The first expression shows the Gauss constraint, while the second expression is differentiable on phase space only after the boundary term has been specified.  If $\omega|_{\partial\Sigma}=0$, the generator is pure gauge and one imposes
\begin{equation}
  \mathscr{D}_j\mathscr{E}^j|\mathrm{phys}\rangle=0 .
\end{equation}
If $\omega$ is nonzero at the boundary, then the corresponding charge is
\begin{equation}
  Q_{\partial\Sigma}[\omega]
  =\int_{\partial\Sigma}dS_j\,2\operatorname{tr}\{\omega\mathscr{E}^j\} .
  \label{app-Q-boundary}
\end{equation}
This is the standard separation between gauge redundancy and asymptotic or boundary symmetry.

\begin{comment}
Then \textbf{the frame-rotated electric field
\begin{equation}
  \mathscr{E}_j^{(h)}(x):=h(x)^{-1}\mathscr{E}_j(x)h(x)
\end{equation}
is invariant under the simultaneous transformation of the system and the frame.}  
A relational color charge measured with respect to the frame $h$ may therefore be written as
\begin{equation}
  Q_h^A(\partial V)=\int_{\partial V}dS_i\,2\mathrm{tr}\left(T^A h^{-1}\mathscr{E}_i h\right).
  \label{rel-boundary-charge}
\end{equation}
This formula is not the assertion that color has become absolute.  Rather, it says that \textbf{the color charge is meaningful only relative to a frame.}  If the frame is not physically admissible, is not globally definable, or depends on a path in a topologically nontrivial way, the charge component $Q_h^A$ is not a globally defined observable.

This is the key difference between a formal color charge and a physical color charge.  The former is obtained by writing a Lie-algebra component of a gauge-covariant expression.  The latter requires a prescription for comparing Lie-algebra directions.  In an Abelian theory this distinction is hidden because the adjoint action is trivial.  In a non-Abelian theory it is essential.
\end{comment}

\subsection{Relational colored fields}

\begin{definition}[Color QRF]
Let $G$ be a non-Abelian gauge group.
Let $\Phi$ denote the collection of all fields.  
A \textbf{color QRF} is a field or dressing functional $h$ which transforms under a local gauge transformation $g(x)\in G$ as
\begin{equation}
  \boxed{ 
  h[\Phi](x) \mapsto h[\Phi^g](x)=g(x)h[\Phi](x)r_g^{-1}.
  }
  \label{h-transform}
\end{equation}
Here $r_g$ is the residual right action left at the base point, boundary, or root frame.  If the root frame is fixed for small gauge transformations, then $r_g=1$.

\end{definition}

\noindent
Case (I) 
First of all, we consider the case $r_g=1$: base-point-trivial small gauge transformations. 
By introducing \textbf{a frame field} or \textbf{QRF field} $h(x)\in G$ that transforms as (\ref{h-transform}) , 
%\begin{equation}
% \boxed{ h(x)\mapsto g(x)h(x) }.
%  \label{h-transform}
%\end{equation}
a gauge-invariant but frame-labeled expression is obtained.

First, we apply the idea of QRF to the matter field. 
A bare quark field $\psi(x)$ is gauge covariant, not gauge invariant.  

%The same idea applies to matter fields.  
\begin{definition}[Relational field (1)]
%Given a color QRF $h$,
Once a frame $h(x)$ satisfying \eqref{h-transform} is supplied, one can define the \textbf{relational field}, i.e., 
 the \textbf{relational representative} of a fundamental matter field $\psi$ by
\begin{equation}
\boxed{ \Psi_h(x):=h(x)^{-1}\psi(x) }.
  \label{rel-quark}
\end{equation}
Indeed, \textbf{the relational field $\Psi_h$ is invariant under the simultaneous transformation of $\psi$ and $h$}: 
\begin{align}
 \Psi_h \mapsto \Psi_h^g=(gh)^{-1}g\psi=h^{-1}g^{-1}g\psi=h^{-1}\psi=\Psi_h.
\end{align}
It is therefore a gauge-invariant \textbf{relational field: it is the quark field as seen from the color frame $h$.} 
\textbf{Its component label is not an absolute color.}  \textbf{It is a color component relative to the frame.}
\end{definition}

\begin{definition}[Relational field (2)]
Next, the relational Yang--Mills field $\mathscr{A}^h(x)$ and the relational field strength $\mathscr{F}^h(x)$ are defined by
%If $h\mapsto gh$, the dressed connection and curvature are
\begin{equation}
\boxed{ \mathscr{A}^h(x):=h(x)^{-1}\mathscr{A}(x)h(x)+{i\over g_{\rm YM}}h(x)^{-1}dh(x) },
  \quad
\boxed{ \mathscr{F}^h(x):=h(x)^{-1}\mathscr{F}(x)h(x) } .
  \label{dressed-AF}
\end{equation}
In the notation of components, the relational Yang--Mills connection and curvature are
\begin{equation}
  \mathscr{A}^h_\mu(x)
  :=h(x)^{-1}\mathscr{A}_\mu(x)h(x)
  +\frac{i}{g_{\rm YM}}h(x)^{-1}\partial_\mu h(x),
  \qquad
  \mathscr{F}^h_{\mu\nu}(x):=h(x)^{-1}\mathscr{F}_{\mu\nu}(x)h(x).
\end{equation}
These objects are invariant under the original local gauge redundancy, but they generally transform under the right-frame change $h\mapsto hk$, as shown later.

Indeed, a direct calculation gives
%\begin{equation}
%  (\mathscr{A}^g(x))^{gh}=\mathscr{A}^h(x),
%  \quad
%  (\mathscr{F}^g(x))^{gh}=\mathscr{F}^h(x) .
%\end{equation}
\begin{align}
 \mathscr{A}^h \mapsto (\mathscr{A}^h)^g=&(gh)^{-1}\mathscr{A}^g (gh)+{i\over g_{\rm YM}}(gh)^{-1}d(gh) 
 \nonumber\\
 =& h^{-1}g^{-1} \left(g\mathscr{A}g^{-1} +{i\over g_{\rm YM}}gdg^{-1}\right) gh + {i\over g_{\rm YM}}h^{-1}g^{-1}(dgh+gdh)
\nonumber\\
 =& h^{-1}\mathscr{A}h +{i\over g_{\rm YM}}h^{-1}dg^{-1} gh + {i\over g_{\rm YM}}h^{-1}g^{-1} dgh + {i\over g_{\rm YM}}h^{-1}dh
 \nonumber\\
 =& h^{-1}\mathscr{A}h -{i\over g_{\rm YM}}h^{-1}g^{-1} dgh + {i\over g_{\rm YM}}h^{-1}g^{-1} dgh + {i\over g_{\rm YM}}h^{-1}dh
 \nonumber\\
 =& h^{-1}\mathscr{A}h + {i\over g_{\rm YM}}h^{-1}dh = \mathscr{A}^h ,
 \\
 \mathscr{F}^h \mapsto (\mathscr{F}^h)^g=&(gh)^{-1}\mathscr{F}^g gh 
 = h^{-1}g^{-1} g\mathscr{F}g^{-1} gh
 = h^{-1}\mathscr{F}h=\mathscr{F}^h .
\end{align}
Thus \textbf{$\mathscr{A}^h(x)$ and $\mathscr{F}^h(x)$ are gauge-invariant relational fields.}
\textbf{The relational field $\mathscr{A}^h(x)$ is the gluon field as seen from the color frame $h$.} 

\end{definition}

Then \textbf{the frame-rotated electric field
\begin{equation}
  \boxed{ \mathscr{E}_j^{(h)}(x):=h(x)^{-1}\mathscr{E}_j(x)h(x) }
\end{equation}
is invariant under the simultaneous original (left) gauge transformation of the system and the frame.}  
A relational color charge measured with respect to the frame $h$ may therefore be written as
\begin{equation}
  Q_h^A(\partial V)
  =\int_{\partial V}dS_j\,2\mathrm{tr}\left(T^A \mathscr{E}_j^{(h)}\right) 
  =\int_{\partial V}dS_j\,2\mathrm{tr}\left(T^A h^{-1}\mathscr{E}_j h\right).
  \label{rel-boundary-charge}
\end{equation}
This formula is not the assertion that color has become absolute.  Rather, it says that \textbf{the color charge is meaningful only relative to a frame.}  If the frame is not physically admissible, is not globally definable, or depends on a path in a topologically nontrivial way, the charge component $Q_h^A$ is not a globally defined observable.

This is the key difference between a formal color charge and a physical color charge.  The former is obtained by writing a Lie-algebra component of a gauge-covariant expression.  The latter requires a prescription for comparing Lie-algebra directions.  In an Abelian theory this distinction is hidden because the adjoint action is trivial.  In a non-Abelian theory it is essential.

There are several ways to realize such a frame.  One can use a dressing functional $h(x;\mathscr{A})$ of the gauge field; one can use a Wilson line from a reference point $x_0$ to $x$; one can use a Higgs field when a scalar condensate supplies a direction in color space; or one can use boundary data at infinity.  
Note that the Wilson-line dressing for a chosen path $\gamma:x_0\to x$
\begin{equation}
  W(x,x_0;\gamma) :=\mathrm{P}\exp\left(ig_{\rm YM}\int_{\gamma:x_0 \to x} \mathscr{A}\right)
\end{equation}
transforms as
\begin{equation}
  W(x,x_0;\gamma)\mapsto g(x)W(x,x_0;\gamma)g(x_0)^{-1} .
\end{equation}
For example, we choose the QRF
\begin{align}
%\resizebox{0.94\linewidth}{!}{$\displaystyle
%\begin{aligned}
&h_{\gamma,h_0}(x) :=W(x,x_0 ;\gamma) h_0, \ h_0 :=h(x_0) .
%&:=\mathrm{P}\exp\left(i%g_{\rm YM} \int_\gamma \mathscr{A}\right),\\ %begin{align}-0.1em]
%\end{aligned}
%$}
\end{align}
Then the relational matter field is obtained as
\begin{equation}
  \Psi_{\gamma,h_0}(x) =h_{\gamma,h_0}(x)^{-1}\psi(x)
=h_0^{-1}W(x_0,x;\gamma)\psi(x) .
  \label{wilson-dressed-quark}
\end{equation}
If a boundary or base-point frame $h_0$ transforms as 
$h(x_0) \mapsto g(x_0)h(x_0) \Leftrightarrow h_0\mapsto g(x_0)h_0$, 
then $\Psi_{\gamma,h_0}(x)$ 
%\begin{equation}
%  \Psi_{\gamma,h_0}(x) :=h_0^{-1}W(x_0,x;\gamma)\psi(x)
%  \label{wilson-dressed-quark}
%\end{equation}
is gauge invariant under local transformations.  It is also manifestly \textbf{semi-local}: \textbf{it depends on the path $\gamma$ and on the reference data at $x_0$. } Thus \textbf{a dressed colored object is not a strictly local observable.}  
It is an endpoint of a line operator together with a choice of frame.

This semi-locality is not a defect of the construction.  It is the correct physical content.  In gauge theory, a charged excitation cannot be localized without accompanying its gauge field.  The dressing records how the charge is attached to the rest of the system, to a boundary, or to another charge.  The QRF language emphasizes that \textbf{the dressing is the frame-side data required to make the colored statement meaningful.}

\noindent
Case (II) 
Subsequently, we consider the case $r_g\not=1$
The relational fields defined in the case (I) are invariant under local left gauge transformations, but covariant under the boundary or base-point group.

In fact, $
 \Psi_h=h^{-1}\psi
$
transforms in general as
\begin{equation}
 \Psi_h\longmapsto R(r_g)\Psi_h,
\end{equation}
and is not a strict invariant. Similarly,
\begin{equation}
 \mathscr A^h=h^{-1}\mathscr Ah+\frac{i}{g_{\rm YM}}h^{-1}dh,
 \qquad
 \mathscr F^h=h^{-1}\mathscr Fh
\end{equation}
transform, if $r_g$ is spacetime dependent, as
\begin{equation}
 \mathscr A^h\longmapsto r_g\mathscr A^h r_g^{-1}
 +\frac{i}{g_{\rm YM}}r_gdr_g^{-1},
 \qquad
 \mathscr F^h\longmapsto r_g\mathscr F^h r_g^{-1}.
\end{equation}
This case is discussed shortly in more detail.

\subsection{Patchwise frames and global definability}

A frame may exist only locally.  Let $\{U_\alpha\}$ be a cover of space and let $h_\alpha$ be frames on the patches.  On an overlap $U_\alpha\cap U_\beta$, suppose
\begin{equation}
  \psi^{(\alpha)}(x) =g_{\alpha\beta}(x) \psi^{(\beta)}(x) ,
  \quad
  h_\alpha(x) =g_{\alpha\beta}(x) h_\beta(x) .
\end{equation}
Then the relational fields agree:
\begin{equation}
  h_\alpha^{-1}(x) \psi^{(\alpha)}(x)
  =h_\beta^{-1}(x) g_{\alpha(x) \beta}^{-1}(x) g_{\alpha\beta}(x)\psi^{(\beta)}(x)
  =h_\beta^{-1}(x)\psi^{(\beta)}(x) .
\end{equation}
This shows that \textbf{the relational observable itself can be globally defined even when its frame representatives are local. } The obstruction appears when this gluing cannot be done consistently, or when the result depends on homotopically inequivalent paths.  In that case the color label remains tied to the frame and cannot be promoted to a global physical label.

This point is especially important in the presence of topological defects, center vortices, monopoles, or nontrivial boundary holonomies.  Transporting a frame around a closed curve $C$ may give
\begin{equation}
  h\longmapsto h\,\Omega_C,
  \quad
  \Omega_C=\mathrm{P}\exp\left(ig_{\rm YM}\oint_C\mathscr{A}\right) .
\end{equation}
If $\Omega_C$ is nontrivial, the statement that a particular component is red becomes path dependent.  This is not merely a gauge-fixing nuisance; it is a physical obstruction to a global color QRF.  Later sections will use this observation to connect topological defects and confinement.

However, it should be remarked that even a smooth connection on a trivial principal bundle can have nontrivial holonomy if its curvature is nonzero. 
Therefore, 
nontrivial holonomy is not equivalent to the nonexistence of a global frame and 
$\Omega_C\ne\mathbf 1$ does not always imply that no global frame exists. 
%\begin{equation}
% \Omega_C\ne\mathbf 1
% \quad\Longrightarrow\quad
% \text{no global frame exists}
%\end{equation}
%is false in general. 
What should be distinguished are
\begin{enumerate}
\item path dependence due to the curvature of the connection,
\item an obstruction to a global section due to the nontriviality of a principal or associated bundle,
\item a center two-cocycle obstructing the lift of an $SU(N)/Z_N$ bundle to an $SU(N)$ bundle,
\item infrared disorder caused by a defect ensemble.
\end{enumerate}

\subsection{Bulk Gauss law and boundary color}

The Gauss law implies that local color charge in the bulk is constrained.  Let $\lambda(x)=\lambda^A(x)T_A$ be a gauge parameter with support inside $V$.  The generator
\begin{equation}
  G[\lambda]=\int_V d^{D-1}x\,2\mathrm{tr}\left[\lambda(x)\left(\mathscr{D}_i\mathscr{E}_i-\rho\right)(x)\right]
\end{equation}
vanishes on physical states when $\lambda$ vanishes on the boundary. % (i.e., small gauge transformation).  
If $\lambda$ approaches a nonzero value at $\partial V$  %(i.e., large gauge transformation), 
integration by parts gives a boundary term,
\begin{equation}
  G[\lambda]\simeq \int_{\partial V}dS_i\,2\mathrm{tr}\left(\lambda\mathscr{E}_i\right)+\hbox{bulk constraints} .
\end{equation}
To give this boundary expression a component label independent of local gauge coordinates, one again needs a boundary frame.  The relational version is precisely \eqref{rel-boundary-charge}.  Thus the QRF perspective naturally separates two statements:
\begin{enumerate}
  \item The Gauss law removes local color charge from the physical bulk algebra.
  \item The boundary color charge may remain as a semi-local observable, but only relative to boundary frame data.
\end{enumerate}
This distinction is obscured if one treats the total color charge as a single global operator without specifying the boundary conditions and the frame used to define its Lie-algebra component.

It should be remarked that transformations nontrivial at the boundary should not simply be called ``large gauge transformations''.
A gauge parameter that is nonzero at the boundary is not necessarily large in the homotopical sense. 
Standard terminology distinguishes
\begin{itemize}
\item proper or small gauge transformations: transformations trivial at the boundary and generated by constraints,
\item improper or boundary-nontrivial transformations: transformations carrying boundary charges,
\item large gauge transformations: transformations not connected to the identity component.
\end{itemize}
In Secs. 2, 6, and 8, the phrase ``boundary-nontrivial gauge transformation'' or ``asymptotic symmetry transformation'' should be used.

\subsection{Relation with the Kugo--Ojima criterion}

In the Kugo--Ojima formulation, one works in a covariantly gauge-fixed Yang--Mills theory with BRST charge $Q_{\rm BRS}$.  The global color charge $Q^A$ is formally decomposed as
\begin{equation}
  Q^A=G^A+N^A,
  \quad
  G^A :=\int d^3x\,\partial^k \mathscr{F}_{0k}^A,
  \quad
  N^A :=\left\{Q_{\rm BRS},\int d^3x\,(\mathscr{D}_0[\mathscr{A}]\bar{\mathscr{C}})^A\right\} .
  \label{ko-decomp}
\end{equation}
where $\bar{\mathscr{C}}$ is the Faddeev-Popov anti-ghost field and $\mathscr{D}_\mu[\mathscr{A}]$ is the covariant derivative in the adjoint representation.  
If the massless contribution to $G^A$ is absent and if $N^A$ is well defined as an unbroken BRST-exact charge, then $Q^A$ annihilates BRST cohomology classes.  In Landau gauge this is often expressed by the Kugo--Ojima condition:
\begin{equation}
  u(p^2=0)=-1,
\end{equation}
where $u(p^2)$ is the Kugo--Ojima function determined by a ghost--gluon correlation function.

The QRF formulation agrees with the physical ambition of \eqref{ko-decomp}: \textbf{colored generators should not create physical bulk states in a confining phase.}  But it changes the criterion.  Instead of requiring a particular infrared behavior in a particular gauge-fixed ghost sector, it asks whether a physical long-distance frame $h$ exists such that non-singlet relational observables like \eqref{rel-quark} or \eqref{wilson-dressed-quark} define isolated physical excitations.  The proposed criterion may be stated as follows.

\medskip
\noindent\textbf{QRF confinement criterion.}
\textbf{A Yang--Mills phase is color confining if no globally defined, physically admissible, long-distance color QRF exists for which non-singlet relational observables have frame-independent isolated asymptotic particle poles.}  
\textbf{Equivalently, after imposing Gauss law, boundary conditions, and the sum or average over physically indistinguishable frame/defect sectors, only color singlets or line-attached semi-local objects remain as well-defined physical observables.}
\medskip

This criterion has several advantages.  

\noindent
(i) First, it is formulated directly in terms of gauge-invariant relational quantities.  

\noindent
(ii) Second, it is sensitive to boundaries and edge modes.  

\noindent
(iii) Third, it is compatible with the possibility that the infrared ghost sector is of the decoupling type.  

\noindent
(iv) Fourth, it distinguishes confinement from Higgs behavior: 
%in a Higgs phase, a scalar condensate can provide a color QRF and make relative color observable, 
In the Higgs phase, an effective long-range QRF is selected by appropriate boundary conditions, gauge-invariant relational correlations, or long-range order in gauge-fixed representatives, 
whereas in a confining phase the required long-distance frame is absent or obstructed.

\subsection{Semi-local observables and the role of boundary data}

The main objects of the QRF formulation are semi-local observables \cite{Donnelly:2016auv,Gomes:2018shn}.  
Typical examples are
\begin{equation}
  \bar\psi(y)\,\mathrm{P}\exp\left(ig_{\rm YM}\int_{\gamma:x\to y}\mathscr{A}\right)\psi(x),
  \quad
  W_R(C),
  \quad
  Q_h^A(\partial V),
  \quad
  h_0^{-1}W(x_0,x;\gamma)\psi(x).
\end{equation}
They are gauge invariant only after their line, endpoint, or boundary data are included.  The distinction between a genuine line operator and an endable line operator is therefore not purely local.  It depends on what boundary conditions and endpoint degrees of freedom are allowed.  A Wilson line ending on dynamical matter, a Wilson line ending on a boundary frame, and a closed Wilson loop are different semi-local observables.  Their behavior diagnoses different aspects of confinement.

From this perspective the Wilson-loop area law and color confinement are related but not identical.  
The area law tells us that separating external fundamental sources costs energy proportional to area or length.  
\textbf{The QRF criterion tells us that an isolated non-singlet cannot be made into a frame-independent particle by attaching an admissible long-distance color frame.} 
 The first is a statement about the expectation value of a particular line operator.  
 The second is a statement about \textbf{the definability and spectrum of relational non-singlet observables.}
  In a complete confinement mechanism, the two should be compatible: the same infrared physics that disorders long Wilson loops should also obstruct global color frames.

The boundary also changes the meaning of color charge.  On a closed spatial manifold, Gauss law enforces singletness after the correct quotient by gauge transformations.  On a manifold with boundary, electric flux through the boundary can label superselection sectors or edge-mode states.  But even then a non-Abelian component of boundary color charge requires a boundary frame.  Thus boundary color is not an exception to relationality; it is one of its cleanest manifestations.

\subsection{Operational formulation of the QRF confinement criterion}

\begin{definition}[Non-singlet relational pole]
Let $O^I_{h,R}$ be a relational operator transforming in a nontrivial representation $R\neq\mathbf{1}$ of the right-frame group.  Consider the averaged physical two-point function
\[
  G_R^{IJ}(x-y)
  :=\langle O^I_{h,R}(x)O^{J\dagger}_{h,R}(y)\rangle_{\rm phys,av}.
\]
We say that $O_{h,R}$ creates an isolated non-singlet asymptotic particle if its momentum-space two-point function has an isolated finite-mass pole
\[
  G_R^{IJ}(p)=\frac{P_R^{IJ}}{p^2+m_R^2}+\hbox{regular},
  \qquad m_R<\infty,
\]
with nonzero residue $P_R^{IJ}$ in a non-singlet channel.
\end{definition}

The QRF confinement criterion is stated operationally as follows.

\begin{enumerate}%[label=(\roman*)]
\item An admissible QRF $h$ is a frame or dressing satisfying the boundary conditions, finite-energy conditions, Gauss law, and the chosen defect-sector conditions.
\item One considers a relational operator $O^I_{h,R}$ transforming in a nontrivial representation $R\ne1$ of the right-frame group.
\item After imposing the Gauss law and boundary conditions and summing or averaging over physically indistinguishable frame and defect sectors, confinement means that the two-point function
\begin{equation}
 G_R^{IJ}(x-y)=\bigl\langle O^I_{h,R}(x)O^{J\dagger}_{h,R}(y)\bigr\rangle_{\rm phys,av}
\end{equation}
has no isolated finite-mass pole in a non-singlet channel.
\end{enumerate}
%This makes clear why, \textbf{in the Higgs phase, the frame is physically selected and should not be averaged away, whereas in a confining phase the frame/defect average removes the non-singlet pole.}
In a massive phase with an ordinary particle interpretation, QRF confinement can be diagnosed by the absence of an isolated finite-mass pole in an invariantly contracted non-singlet relational correlator. More generally, the primary criterion is the non-existence of a finite-energy, physically accessible, isolated non-singlet superselection sector after the proper Gauss constraint, boundary conditions, and accessibility rules for QRFs have been imposed.

The details on the operational formulation of the QRF confinement criterion are given in Appendix \ref{app:operational-qrf-revised} %{app:operational_QRF} 
 and an exactly calculable example is given in Appendix \ref{app:example_QRF}.

\subsection{Comparison of viewpoints}

It is useful to summarize the logical difference between the standard viewpoints.
\begin{center}
\begin{tabular}{p{0.22\linewidth}p{0.33\linewidth}p{0.33\linewidth}}
\toprule
Viewpoint & Main diagnostic & Limitation addressed by QRF \\
\midrule
Wilson-loop criterion & Area law for the Wilson loop average $\langle W_R(C)\rangle$ and the linear static potential & Directly probes external sources, but does not by itself define the fate of all colored relational excitations. \\
Kugo--Ojima criterion & Trivial action of global color charge in BRST cohomology, often expressed by $u(p^2=0)=-1$ & Depends on covariant gauge, global BRST assumptions, and infrared ghost behavior; boundary data are not central. \\
Generalized-symmetry criterion & Realization of center one-form symmetry and behavior of genuine line operators & Powerful for line operators, but not automatically a criterion for relational non-singlet particle excitations or color frames. \\
QRF criterion & Existence or obstruction of globally defined long-distance color frames and non-singlet relational observables & Makes color confinement a statement about definability of color relative to physical frames, including boundary and defect sectors. \\
\bottomrule
\end{tabular}
\end{center}

This table should not be read as a replacement of the other criteria.  Rather, the QRF formulation organizes them.  The Wilson loop tests the energy of a line-attached pair.  Generalized symmetries classify line operators and their infrared realization.  
The Kugo--Ojima criterion emphasizes the \textbf{disappearance of color charge from the physical Hilbert space. } 
QRF language explains why these statements are connected: each concerns \textbf{the failure of a non-singlet color label to become an absolute, globally defined, frame-independent physical attribute.}

\section{Gauge invariance and QRF frame independence}
\label{sec:gauge-frame-independence}

\begin{definition}[Admissible QRF]
A QRF $h$ is called \textbf{admissible} if it satisfies the following conditions:
\begin{enumerate}
\item it is compatible with the Gauss law;
\item it is compatible with the prescribed boundary conditions;
\item it satisfies the finite-energy or finite-action requirement;
\item it is consistent with the specified defect, holonomy, or boundary sector;
\item it does not double-count physically indistinguishable frames or defect sectors.
\end{enumerate}
A long-distance color QRF is an admissible QRF which can stably compare non-singlet color labels in the infrared regime.
\end{definition}

\begin{definition}[Frame independence]
A first lesson of the QRF formulation is that \textbf{gauge invariance and frame independence are distinct notions}\cite{VanrietveldeEtAl2020}.  A QRF observable is constructed so as to be invariant under the original gauge redundancy, and hence it is a Dirac observable \cite{Dirac:1955uv} in the usual sense.  However, it is typically an observable relative to a chosen reference frame.  Therefore it may still transform under a change of the reference frame itself.  The basic logical relation is
\begin{equation}
  \boxed{\hbox{gauge invariance}\not\Rightarrow \hbox{frame independence}.}
  \label{gauge-not-frame}
\end{equation}
A relational observable $O_h$ is called \textbf{frame-independent} if it is invariant under every admissible change of frame $h\mapsto hk$.  
The frame-independent physical algebra forms the fixed-point subalgebra of the algebra of relational observables under admissible right-frame transformations and under averaging over physically indistinguishable defect sectors.
%The frame-independent observables form the fixed-point subalgebra of the algebra of gauge-invariant relational observables under admissible changes of QRF.  
This distinction is essential for color confinement.  A relational color component can be gauge invariant, but still express a color measured with respect to a particular color frame.  A statement such as $Q_h^3\neq0$ is therefore not an absolute statement unless the frame has itself been physically specified.  By contrast, \textbf{statements such as $Q_h=0$, $\mathrm{tr}(Q_h^2)=0$, or the absence of a non-singlet pole after summing over frames and defect sectors are frame-independent statements.}

\end{definition}

\subsection{General formulation}
\label{subsec:frame-general}

Let $\Phi$ denote the total set of dynamical variables and let $G$ be the gauge group.  We write a gauge transformation as
\begin{equation}
  \Phi\longmapsto \Phi^g .
\end{equation}
A dressing field, or QRF, is a field or functional $h[\Phi]$ whose transformation law compensates the transformation of the system.  With the convention used in this paper, a local gauge transformation acts on a color frame as
\begin{equation}
  h[\Phi^g](x)=g(x)h[\Phi](x)r_g^{-1},
  \label{frame-transformation-general}
\end{equation}
where $r_g$ is the residual transformation acting at the chosen root, base point, or boundary frame.  If the root frame is kept fixed, $r_g=1$ for small gauge transformations.  If the boundary frame is dynamical, $r_g$ is the transformation of that boundary frame.

A dressed field $\Phi^h$ is defined by using $h$ to express all local quantities relative to the frame.  By construction,
\begin{equation}
  (\Phi^g)^{h[\Phi^g]}=(\Phi^{h[\Phi]})^{r_g^{-1}} .
  \label{dressed-covariance-general}
\end{equation}
For gauge transformations trivial at the root, $r_g=1$, the dressed variables are invariant: $(\Phi^g)^{h[\Phi^g]}=\Phi^{h[\Phi]}$.  For transformations nontrivial at the root or boundary, the dressed variables transform under the residual boundary symmetry.  This is the precise sense in which \textbf{a QRF removes local redundancy but does not necessarily remove all global or boundary color.}

Let $O_h[\Phi]$ be a functional built from the dressed variables.  If all residual indices are contracted, then
\begin{equation}
  O_h[\Phi^g]=O_h[\Phi]
\end{equation}
for the original gauge group.  Now choose another admissible frame
\begin{equation}
  h'(x)=h(x)k(x),
  \label{frame-change}
\end{equation}
where $k$ is gauge invariant with respect to the original local gauge redundancy and represents a relative change of frame.  Then the dressed variables in the two frames are related by a new transformation,
\begin{equation}
  \Phi^{h'}=(\Phi^h)^k .
\end{equation}
Hence
\begin{equation}
  \boxed{
  O_{h'}=O_h
  \quad\Longleftrightarrow\quad
  f\bigl((\Phi^h)^k\bigr)=f(\Phi^h)
  \quad\hbox{for all admissible }k .}
  \label{frame-indep-condition}
\end{equation}
\textbf{This is the necessary and sufficient condition for frame independence.
  Gauge invariance is guaranteed by covariance of the dressing; frame independence is an additional invariance under the relative-frame transformations.}

\subsection{Yang--Mills fields}
\label{subsec:frame-yang-mills}

For a non-Abelian Yang--Mills field we use 
%Hermitian generators and the convention
the local gauge transformation:
\begin{equation}
  \mathscr{A}^g=g\mathscr{A}g^{-1}+{i\over g_{\rm YM}}g\,d g^{-1},
  \quad
  \mathscr{F}^g=g\mathscr{F}g^{-1} .
  \label{ym-gauge-transformation}
\end{equation}
%If $h\mapsto gh$, 
The dressed connection and curvature are introduced by
\begin{equation}
  \mathscr{A}^h=h^{-1}\mathscr{A}h+{i\over g_{\rm YM}}h^{-1}dh,
  \quad
  \mathscr{F}^h=h^{-1}\mathscr{F}h .
  \label{dressed-AF}
\end{equation}
%A direct calculation gives
%\begin{equation}
%  (\mathscr{A}^g)^{gh}=\mathscr{A}^h,
%  \quad
%  (\mathscr{F}^g)^{gh}=\mathscr{F}^h .
%\end{equation}
Then it was shown that $\mathscr{A}^h$ and $\mathscr{F}^h$ are gauge-invariant relational fields.
However, under a frame change $h\mapsto hk$ one obtains
\begin{equation}
  \mathscr{A}^{hk}=k^{-1}\mathscr{A}^h k+{i\over g_{\rm YM}}k^{-1}dk,
  \quad
  \mathscr{F}^{hk}=k^{-1}\mathscr{F}^h k .
  \label{frame-change-AF}
\end{equation}
Therefore the components
\begin{equation}
  \mathscr{F}^{h,A}_{\mu\nu}=2\,\mathrm{tr}\bigl(T^A\mathscr{F}^h_{\mu\nu}\bigr)
\end{equation}
are in general frame dependent.  On the other hand, the composite operator 
\begin{equation}
  \mathrm{tr}\bigl(\mathscr{F}^h_{\mu\nu}\mathscr{F}^{h,\mu\nu}\bigr)
  =\mathrm{tr}\bigl(\mathscr{F}_{\mu\nu}\mathscr{F}^{\mu\nu}\bigr)
\end{equation}
 is both gauge invariant and frame independent.  In this precise sense, QRF does not turn color components into absolute observables.  It turns them into components relative to a specified frame.

The same point is transparent for a relational color charge.  Suppose $Q$ transforms in the adjoint representation.  \textbf{Define the relational color charge by}
\begin{equation}
  Q_h=h^{-1}Qh .
\end{equation}
Then \textbf{the relational color charge $Q_h$ is gauge invariant under the simultaneous transformation of $Q$ and $h$.} 
 Under $h\mapsto hk$ it changes as
\begin{equation}
  Q_{hk}=(hk)^{-1}Qhk=k^{-1}h^{-1}Qhk=k^{-1}Q_h k .
\end{equation}
Hence \textbf{a component $Q_h^A$ is frame dependent,} whereas
\begin{equation}
  \mathrm{tr}(Q_hQ_h)=\mathrm{tr}(QQ),
  \quad
  Q_h=0=Q
\end{equation}
\textbf{are frame-independent and gauge-invariant statements.} 
\textbf{This is the algebraic reason why the confinement criterion should not be stated as a numerical value of a color component, but as the absence, triviality, or boundary localization of the entire relational color charge.}

\subsection{Abelian phase frames and non-Abelian Wilson lines}
\label{subsec:examples-frame}

(I) 
The Abelian example is useful because it removes the complication of noncommuting frames.  Let a charged scalar field $\psi$ transform as
\begin{equation}
  \psi\mapsto e^{\mathrm{i}q\lambda}\psi,
\end{equation}
while let $\theta$ be a phase reference frame which transforms as $\theta\mapsto\theta+\lambda$.  Then the variable 
\begin{equation}
  \Psi_\theta=e^{-\mathrm{i}q\theta}\psi
\end{equation}
 is gauge invariant.  If the reference phase is changed to $\theta'=\theta+\beta$, however, it is frame dependent:
\begin{equation}
  \Psi_{\theta'}=e^{-\mathrm{i}q\beta}\Psi_\theta .
\end{equation}
Thus $\Psi_\theta$ is a charged amplitude relative to a reference phase.  The density
\begin{equation}
  \Psi_\theta^\dagger\Psi_\theta=\psi^\dagger\psi
\end{equation}
 is frame independent.

(II) 
In non-Abelian Yang--Mills theory, the corresponding object is an open Wilson line.  For a path $\gamma$ from $y$ to $x$, let
\begin{equation}
  U_\gamma[\mathscr{A}]
  =\mathrm{P}\exp\left(\mathrm{i}g_{\rm YM}\int_{\gamma:y \to x}\mathscr{A}\right) .
\end{equation}
The frame-dressed open Wilson line is
\begin{equation}
  W_h(\gamma;x,y)=h^{-1}(x)U_\gamma[\mathscr{A}]h(y) .
  \label{frame-dressed-open-line}
\end{equation}
It is invariant under the original local gauge transformations, but under $h\mapsto hk$ it transforms as
\begin{equation}
  W_{hk}(\gamma;x,y)=k^{-1}(x)W_h(\gamma;x,y)k(y) .
\end{equation}
Hence \textbf{an open Wilson line is a gauge-invariant relational observable, but not a frame-independent scalar.} It measures how two endpoint color frames are related.  By contrast, \textbf{a traced closed Wilson loop,
\begin{equation}
  W_R(C) :={1\over d_R}\mathrm{tr}_R\, \left\{ \mathrm{P}\exp\left(\mathrm{i}g_{\rm YM}\oint_C\mathscr{A}\right) \right\} ,
\end{equation}
 is frame independent and gauge invariant} because the conjugation at the base point cancels inside the trace.

The physical interpretation is summarized in Table~\ref{tab:gauge-frame}.  The important point is not that frame-dependent relational observables are unphysical.  They are physical relative observables once the frame system is included.  The point is rather that an absolute claim about color confinement must be invariant under admissible changes of frame.

\begin{center}
\begin{tabular}{p{0.38\linewidth}p{0.22\linewidth}p{0.28\linewidth}}
\toprule
Quantity & Gauge invariance & Frame (in)dependence \\
\midrule
$\Psi_\theta=e^{-\mathrm{i}q\theta}\psi$ & invariant & generally dependent \\
$\Psi_\theta^\dagger\Psi_\theta$ & invariant & independent \\
$\mathscr{F}^{h,A}_{\mu\nu}$ & invariant & generally dependent \\
$\mathrm{tr}(\mathscr{F}^h_{\mu\nu}\mathscr{F}^{h,\mu\nu})$ & invariant & independent \\
open Wilson line $W_h(\gamma;x,y)$ & invariant & endpoint-frame covariant \\
traced closed Wilson loop $W_R(C)$ & invariant & independent \\
component $Q_h^A$ & invariant & generally dependent \\
$\mathrm{tr}(Q_hQ_h)$ and $Q_h=0$ & invariant & independent \\
\bottomrule
\end{tabular}
\captionof{table}{Gauge invariance and QRF frame (in)dependence.  A QRF construction first removes the original gauge redundancy.  A further fixed-point condition under changes of frame is needed for frame independence.}
\label{tab:gauge-frame}
\end{center}

For confinement this gives a useful hierarchy.  One may first construct colored relational quantities in a gauge-invariant way.  They usually remain frame dependent.  The quantities that survive as unambiguously frame-independent observables are color singlet contractions, invariant traces, Casimirs, closed line operators, or structural statements such as the vanishing of the bulk color charge by the Gauss law.  Thus QRF separates two claims that are often confused: color can be written relationally in a gauge-invariant form; color is not thereby an absolute observable.

\section{(1+1)-dimensional Yang--Mills theory as a minimal QRF test bed}
\label{sec:ym2-qrf}

Pure Yang--Mills theory in $1+1$ dimensions is the cleanest continuum example of the QRF confinement criterion.  It has no locally propagating gluons.  After imposing the Gauss law, the physical information is compressed into global holonomy, boundary electric flux, and Wilson-line dressings.  This makes it possible to see explicitly which variables are redundant, which variables remain in the physical algebra, and why a bare colored local operator does not descend to the physical Hilbert space.

This model is not a miniature copy of four-dimensional confinement.  It has no transverse gluons, no magnetic monopoles, no center-vortex world sheets, and no fluctuating flux-tube transverse modes.  Nevertheless, it is a sharp test of the algebraic part of the QRF picture: the local color index can be made relational by a Wilson-line frame, but the resulting non-singlet object is not a complete physical singlet; physical states on closed space are class functions, and external charges on the line create a uniform electric flux with linearly growing energy.

\subsection{Continuum theory and Gauss law}
\label{subsec:ym2-continuum}

Let spacetime be $M=\mathbb{R}_t\times\Sigma_1$, with compact semisimple gauge group $G$.  In the temporal gauge 
\begin{equation}
  \mathscr{A}_0=0,
\end{equation}
the canonical Hamiltonian $H$ of pure Yang--Mills theory is given by
\begin{equation}
  H={g_{\rm YM}^2\over2}\int_{\Sigma_1}dx\,\mathscr{E}^A(x)\mathscr{E}^A(x),
  \label{ym2-H}
\end{equation}
where $\mathscr{E}^A$ is the canonical color electric field.  
The Gauss law is obtained as
\begin{equation}
  \mathscr{G}^A(x):=(\mathscr{D}_1\mathscr{E})^A(x)-\rho^A(x)\approx0 .
  \label{ym2-gauss}
\end{equation}
For pure Yang--Mills theory, $\rho$ vanishes: $\rho^A=0$.  With Hermitian generators, the covariant derivative in the adjoint representation reads
\begin{equation}
  \mathscr{D}_1X=\partial_1X-\mathrm{i}g_{\rm YM}[\mathscr{A}_1,X],
  \quad
  (\mathscr{D}_1\mathscr{E})^A=\partial_1\mathscr{E}^A+g_{\rm YM}f^{ABC}\mathscr{A}_1^B\mathscr{E}^C .
\end{equation}
The equation $\mathscr{D}_1\mathscr{E}=0$ says that the electric field is covariantly constant.  Thus, once a frame is chosen, the remaining color information is constant along the one-dimensional space, unless external or dynamical charges are inserted.

\subsection{The Wilson line as a QRF frame}
\label{subsec:ym2-wilson-qrf}

Choose a base point $x_0$.  
%The continuum counterpart of a spanning-tree QRF is 
We introduce the Wilson-line frame from $x_0$ to $x$:
\begin{equation}
  \boxed{
  h(x;\mathscr{A})
  :=\mathrm{P}\exp\left(\mathrm{i} g_{\rm YM}\int_{x_0}^{x}dy \mathscr{A}_1(y) \right)} , \ h(x_0, \mathscr{A}) = \bm{1} .
  \label{ym2-h-frame}
\end{equation}
Figure~\ref{fig:Wilson_line_0}(left) shows this Wilson-line QRF.  Under a gauge transformation $\Omega(x)$, the Wilson-line frame transforms as
\begin{equation}
  h(x;\mathscr{A}^\Omega)=\Omega(x)h(x;\mathscr{A})\Omega(x_0)^{-1} .
\end{equation}
Therefore $h$ compares the color orientation at $x$ with the color orientation at the base point.  Moreover, the transformation by $h$ moves the connection to contour gauge,
\begin{equation}
  \mathscr{A}_1^h(x)
:=h(x;\mathscr{A})^{-1}\mathscr{A}_1(x)h(x;\mathscr{A})+{i\over g_{\rm YM}}h(x;\mathscr{A})^{-1}dh(x;\mathscr{A})=0,
\end{equation}
away from possible global obstructions.  Thus in $1+1$ dimensions the QRF frame is not merely symbolic: it is the Wilson line that implements axial or contour gauge relationally.

\begin{figure}[tbp]
  \centering
  \includegraphics[width=.52\linewidth]{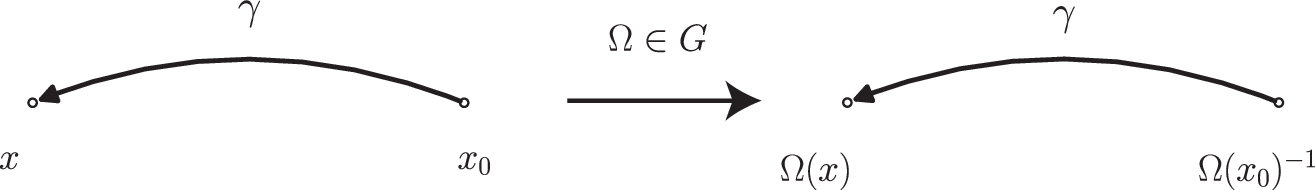}
  \caption{The Wilson line $W(x,x_0;\gamma)$ from the base point $x_0$ to the point $x$, and its gauge transformation by $\Omega\in G$.  In one spatial dimension the path is fixed once the base point and endpoint are chosen, so the Wilson line gives a particularly concrete QRF.}
  \label{fig:Wilson_line_0}
\end{figure}

For a matter field $\psi(x)$ in a representation $R$, define the relational field
\begin{equation}
  \Psi_h(x):=R\bigl(h(x;\mathscr{A})\bigr)^{-1}\psi(x) .
  \label{ym2-rel-matter}
\end{equation}
For small gauge transformations with $\Omega(x_0)=\bm{1}$, this field is invariant.  For transformations nontrivial at the base point,
\begin{equation}
  \Psi_h(x)\longmapsto R(\Omega(x_0))\Psi_h(x) .
\end{equation}
Hence the Wilson-line QRF removes local gauge redundancy but leaves a residual color action at the base point or boundary.  The bilocal singlet
\begin{align}
  M(x,y)
  &=\Psi_h^\dagger(y)\Psi_h(x)
    =\psi^\dagger(y)R\bigl(W(y,x)\bigr)\psi(x),
  \label{ym2-bilocal}
\end{align}
where
\begin{equation}
  W(y,x)=\mathrm{P}\exp\left(\mathrm{i} g_{\rm YM}\int_x^y\mathscr{A}_1(z)dz\right),
\end{equation}
 is a genuine gauge-invariant operator.  
 \textbf{A bare colored local operator does not belong to the physical algebra; only Wilson-line-dressed $\Psi_h(x)$ and endpoint-contracted combinations $M(x,y)$ do.}

\subsection{Theory on a circle: global holonomy and class functions}
\label{subsec:ym2-circle}

Let $\Sigma_1$ be a circle of length $L$: $\Sigma_1=S^1_L$, as in Figure~\ref{fig:circle_L}.  

\begin{figure}[htbp]
  \centering
  \includegraphics[width=.32\linewidth]{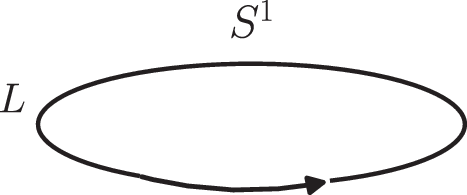}
  \caption{A circle of length $L$: $\Sigma_1=S^1$.  There is no boundary charge, but a global holonomy around the circle remains after local gauge redundancy has been removed.}
  \label{fig:circle_L}
\end{figure}

Locally, the Wilson-line frame can remove $\mathscr{A}_1$.  Globally, however, the \textbf{holonomy} remains:
\begin{equation}
  \boxed{
  U:=\mathrm{P}\exp\left(\mathrm{i} g_{\rm YM}\oint_{S^1}dx \mathscr{A}_1(x) \right).}
  \label{ym2-holonomy}
\end{equation}
Under a periodic gauge transformation, $U$ transforms as
\begin{equation}
  U\longmapsto\Omega(0)U\Omega(0)^{-1} .
\end{equation}
Therefore the physical information is not an element $U$ with a marked color basis, but its \textbf{conjugacy class}.

The \textbf{physical wave function} can be represented as a function $\Phi(U)$ satisfying
\begin{equation}
  \Phi(U)=\Phi(\Omega U\Omega^{-1})
  \quad (\forall\Omega\in G).
  \label{ym2-class-function}
\end{equation}
Thus we obtain the \textbf{physical Hilbert space} of class functions:
\begin{equation}
  \mathcal{H}_{\rm phys}(S^1)\simeq L^2(G)^G ,
  \label{ym2-H-circle}
\end{equation}
under the assumption for the trivial principal bundle and the standard quantization. 
By the Peter--Weyl theorem it decomposes into characters,
\begin{equation}
  \mathcal{H}_{\rm phys}(S^1)=\bigoplus_R\mathbb{C}\,\chi_R(U),
  \quad
  \chi_R(U)=\mathrm{tr}_R(U),
\end{equation}
where $R$ runs over irreducible representations of $G$.  The Hamiltonian $H$ reduces to the Laplace--Beltrami operator on the group manifold, and the characters satisfy the eigenvalue equation:
\begin{equation}
  H\chi_R(U)={g_{\rm YM}^2L\over2}C_2(R)\chi_R(U),
  \label{ym2-character-energy}
\end{equation}
up to the conventional normalization of the canonical electric field.

Equation~\eqref{ym2-H-circle} is the simplest form of QRF-theoretic color confinement.  
\textbf{Physical states are not color vectors; they are class functions of the holonomy. }
\textbf{The physical algebra is generated by loop observables and singlet dressed bilocals, ...}.
The subalgebra relevant to the present discussion is generated schematically by
\begin{equation}
  \mathcal{A}_{\rm phys}(S^1)
  =\left\langle
  \mathrm{tr}_R(U^n),\;
  \psi^\dagger(y)R(W(y,x))\psi(x),\;
  \psi^\dagger(x)\psi(x),\ldots
  \right\rangle .
  \label{ym2-phys-alg-circle}
\end{equation}
For $G=SU(2)$, one may diagonalize $U\sim\exp(\mathrm{i}\theta\sigma_3)$, but the Weyl group identifies $\theta\sim-\theta$.  This residual identification is a simple finite-dimensional remnant of the general fact that \textbf{a global color frame cannot be selected uniquely.}

\subsection{External charges on the line and the linear potential}
\label{subsec:ym2-line}

Next take $\Sigma_1=\mathbb{R}$ and insert static ``external'' charges $R$ and $\overline R$ at $x=0$ and $x=L$, as shown in Figure~\ref{fig:line_L}.  

\begin{figure}[htbp]
  \centering
  \includegraphics[width=.42\linewidth]{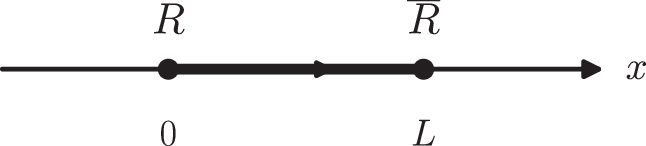}
  \caption{The line $\Sigma_1=\mathbb{R}$ with static external charges $R$ and $\overline R$, and the color electric flux generated between them.  In $1+1$ dimensions the Wilson line connecting the sources is a segment of uniform electric flux.}
  \label{fig:line_L}
\end{figure}

In a QRF frame in which $\mathscr{A}_1=0$, the Gauss law becomes
\begin{equation}
  \partial_1\mathscr{E}^A(x)
  =T_R^A\delta(x)-T_R^A\delta(x-L).
  \label{ym2-external-gauss}
\end{equation}
With the boundary condition that the electric field vanishes outside the interval $(0,L)$ between the charges, the solution is
\begin{equation}
\mathscr{E}^A(x)=
\begin{cases}
0, & x<0,\\
T_R^A, & 0<x<L,\\
0, & x>L .
\end{cases}
\label{ym2-E-solution}
\end{equation}
Substitution into the Hamiltonian $H$ gives the minimum energy in the external-charge sector:
\begin{equation}
  E_{\min}^R(L)={g_{\rm YM}^2\over2}\int_0^Ldx\,T_R^AT_R^A
  ={g_{\rm YM}^2\over2}C_2(R)L .
\end{equation}
Therefore, we have the linear potential:
\begin{equation}
  \boxed{V_R(L)={g_{\rm YM}^2\over2}C_2(R)L .}
  \label{ym2-linear-potential}
\end{equation}

From the QRF point of view, a single external color charge can be described relationally, but it leaves a residual color index at the reference point.  To make a singlet, it must be paired with an opposite charge by a Wilson line.  In one spatial dimension that Wilson line is exactly the interval filled with electric flux, and the energy grows linearly with the separation.

This conclusion is independent of the particular admissible dressing.  Any state $|\Phi_h^R\rangle$ containing an external $R$--$\overline R$ pair belongs to the same external-charge sector.  By using the variational principle, we find 
\begin{equation}
  {\langle\Phi_h^R|H|\Phi_h^R\rangle\over\langle\Phi_h^R|\Phi_h^R\rangle}-E_0
  \geq E_{\min}^R(L)
  ={g_{\rm YM}^2\over2}C_2(R)L .
  \label{ym2-variational}
\end{equation}
Thus one need not enumerate all possible QRF dressings $h$ one by one; the lowest energy of the external-charge sector gives a uniform linear lower bound.
%See Appendix  and for more details. 

The same result is obtained from the Wilson loop.  For a simple closed curve $C$ on the plane enclosing area $\mathcal{S}(C)$, the Wilson loop average exhibits the area law in the infinite volume limit
\begin{equation}
  \langle W_R(C)\rangle
  =\exp\left[-{g_{\rm YM}^2\over2}C_2(R)\mathcal{S}(C)\right]
  \label{ym2-area-law}
\end{equation}
 in pure two-dimensional Yang--Mills theory.  
 %\cite{Migdal:1975zg,Rusakov:1990rs,Witten:1991we}.  
 A rectangular loop $L\times T$ gives $\langle W_R(L,T)\rangle\sim e^{-T V_R(L)}$ and hence again \eqref{ym2-linear-potential}.  
 \textbf{The Wilson loop is a frame-independent observable.}  Therefore, \textbf{the linear electric flux seen in a Wilson-line QRF reduces to a standard frame-independent confinement diagnostic.}

\subsection{Open interval and boundary color charge}
\label{subsec:ym2-open-interval}

Let $\Sigma_1=I=(0,L)$ be an open interval of length $L$, as in Figure~\ref{fig:interval_L}. 

\begin{figure}[htbp]
  \centering
  \includegraphics[width=.42\linewidth]{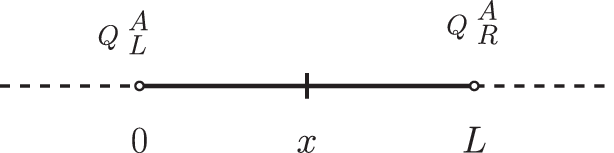}
  \caption{The open interval $\Sigma_1=I=(0,L)$, with $0<x<L$.  Local color charge in the bulk is constrained by Gauss law, while boundary electric flux can remain as a boundary observable.}
  \label{fig:interval_L}
\end{figure}

 In temporal gauge, the Hamiltonian and Gauss law are again \eqref{ym2-H} and \eqref{ym2-gauss}.  Smearing the Gauss law with a test function $\omega^A(x)$ gives
\begin{align}
  G[\omega]
  &:=\int_0^Ldx\,\omega^A\bigl((\mathscr{D}_1\mathscr{E})^A-\rho^A\bigr)  \\
  &=-\int_0^Ldx\,(\mathscr{D}_1\omega)^A\mathscr{E}^A
    -\int_0^Ldx\,\omega^A\rho^A
    +\left[\omega^A\mathscr{E}^A\right]_{0}^{L} .
  \label{ym2-smeared-gauss}
\end{align}
Only transformations satisfying $\omega(0)=\omega(L)=0$ are pure gauge redundancies.  Transformations nonzero at the boundary $\omega(0)\not=0, \omega(L)\not=0$ generate boundary symmetries.

The \textbf{differentiable improved generator} may be written as
\begin{equation}
  \widetilde G[\omega]
  :=\int_0^Ldx\,\omega^A\bigl((\mathscr{D}_1\mathscr{E})^A-\rho^A\bigr)
  -\omega^A(L)\mathscr{E}^A(L)+\omega^A(0)\mathscr{E}^A(0).
  \label{ym2-smeared-gauss2}
\end{equation}
On the constraint surface,
\begin{equation}
  \widetilde G[\omega]\approx Q_\partial[\omega],
  \quad
  Q_\partial[\omega]=\omega^A(0)\mathscr{E}^A(0)-\omega^A(L)\mathscr{E}^A(L) .
\end{equation}
Thus it is natural to define
\begin{equation}
  Q_L^A:=\mathscr{E}^A(0),
  \quad
  Q_R^A:=-\mathscr{E}^A(L),
  \label{ym2-boundary-Q}
\end{equation}
where the sign in $Q_R$ is determined by the outward normal convention.  
The first conclusion is that \textbf{the Gauss law removes local bulk color charge, but boundary color can remain as electric flux.}
For the details on the differentiable improved generator, see Appendix \ref{app:differentiable_generator}.%\ref{app:4.27a},\ref{app:4.27b}.

Take the left endpoint $x=0$ as the QRF root and define
\begin{equation}
  h(x)=\mathrm{P}\exp\left(-\mathrm{i} g_{\rm YM}\int_0^x\mathscr{A}_1(y)dy\right),
  \quad h(0)=\mathbf{1} .
  \label{ym2-left-frame}
\end{equation}
Then we find a relation:
\begin{equation}
  \partial_x\bigl(h^{-1}\mathscr{E}h\bigr)=h^{-1}(\mathscr{D}_1\mathscr{E})h .
  \label{ym2-integrated-gauss-eq}
\end{equation}
Using the Gauss law gives
\begin{equation}
  h(L)^{-1}\mathscr{E}(L)h(L)-\mathscr{E}(0)
  =\int_0^Ldx\,h(x)^{-1}\rho(x)h(x) .
  \label{ym2-integrated-gauss}
\end{equation}
For pure Yang--Mills theory with $\rho(x)\equiv 0$ the right-hand side vanishes, so the electric flux at the right endpoint, parallel transported to the left-end frame, equals the left boundary color charge.  
Therefore, \textbf{the bulk carries no independent color charge.}

We introduce
\begin{equation}
  U_{0L}:=h(L)^{-1}
  =\mathrm{P}\exp\left(\mathrm{i} g_{\rm YM}\int_0^L\mathscr{A}_1(x)dx\right),
\end{equation}
so that it transforms as
\begin{equation}
  U_{0L}\longmapsto \Omega(0)U_{0L}\Omega(L)^{-1} .
\end{equation}
Therefore the left and right boundary symmetries act by left and right multiplication.  In the quantum theory, up to conventional signs, we find
\begin{equation}
  [Q_L^A,U_{0L}]=T^AU_{0L},
  \quad
  [Q_R^A,U_{0L}]=-U_{0L}T^A .
  \label{ym2-boundary-action}
\end{equation}
The boundary charges are not merely constraints; they act on the open Wilson line ending at the boundary.

If the physical state belongs to an irreducible boundary-flux sector $R$, the Hamiltonian on the interval reduces to
\begin{equation}
  H={g_{\rm YM}^2L\over2}Q_L^AQ_L^A={g_{\rm YM}^2L\over2}C_2(R) .
  \label{ym2-interval-spectrum}
\end{equation}
This demonstrates a point that is sometimes hidden in slogans about confinement: color charge is absent as a local bulk observable, but it can remain as boundary electric flux.  The frame-independent statement is not that every boundary color component is absolute, but that the boundary sector and its Casimir are physical once the boundary conditions and frame data are specified.

\subsection{QRF confinement criterion in (1+1)-dimensional Yang--Mills theory}
\label{subsec:ym2-criterion}

As shown in the above, the $1+1$-dimensional model makes the QRF criterion concrete in three statements.  

\noindent
(i) First, on a circle, we find 
\begin{equation}
  \Phi(U)=\Phi(\Omega U\Omega^{-1}),
\end{equation}
therefore, no bare color index appears in the physical Hilbert space.  

\noindent
(ii) Second, the physical algebra is generated by loops and singlet dressed operators, not by bare colored local fields:
\begin{equation}
  \mathcal{A}_{\rm phys}
  =\left\langle
  \mathrm{tr}_R(U^n),\;
  \psi^\dagger(y)R(W(y,x))\psi(x),\;
  \psi^\dagger(x)\psi(x),\ldots
  \right\rangle .
\end{equation}
(iii) Third, if an external colored sector is forced, the energy has the lower bound
\begin{equation}
  V_R(L)={g_{\rm YM}^2\over2}C_2(R)L .
\end{equation}
Thus the QRF criterion in $1+1$ dimensions is not an infrared ghost condition.  It is a statement about the physical algebra after Gauss law and QRF dressing: the only surviving data are the conjugacy class of global holonomy, boundary electric flux, and color-singlet Wilson-line-dressed operators.  Operators with a single uncontracted color index do not descend to the physical algebra.

This model also clarifies the limitation of the lower-dimensional test.  The absence of local gluons is automatic, and magnetic confinement mechanisms are absent.  The model is therefore not a substitute for four-dimensional Yang--Mills theory.  Its value is that the QRF logic can be checked exactly: the Wilson-line frame removes local redundancy, the Gauss law removes bulk color, and the remaining global or boundary data are precisely semi-local.

\section{Topological defects and global definability of QRF observables}
\label{sec:defects-qrf}

Topological defects affect QRF observables in two logically different ways.  They may disorder the vacuum dynamically and may screen or confine charges, while more fundamentally for the present approach, they may obstruct the global definition of the reference frame needed to define a relational observable.  
In this sense \textbf{a defect is not only a source that changes the value of an observable; it is also a constraint on whether the observable can be defined single-valuedly over the whole space.}

\subsection{Path-dependent dressings}
\label{subsec:defect-path}

If a  topological defect is absent, %a simple defect-free region, 
a charged bilocal can be made gauge invariant by inserting a Wilson line between the two points $x,y$:
\begin{equation}
  \Phi^\dagger(y)\,\mathrm{P}\exp\left(\mathrm{i}\int_{\gamma:x\to y}\mathscr{A}\right)\Phi(x),
  \label{defect-bilocal}
\end{equation}
with representation matrices suppressed.  From the QRF viewpoint, the path $\gamma$ is part of the reference-frame data.  A spanning tree or a contour gauge is a systematic choice of such reference paths.  When all relevant paths are homotopic and no nontrivial holonomy is present, this construction gives a single-valued relational observable.

\begin{figure}[htbp]
  \centering
  \includegraphics[width=.42\linewidth]{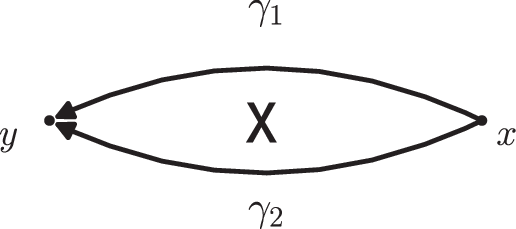}
  \caption{The Wilson lines $W(y;x;\gamma_1)$ and $W(y;x;\gamma_2)$ along two paths $\gamma_1,\gamma_2$ from $x$ to $y$, which cannot be moved into each other by a continuous deformation because of the presence of a defect denoted by the cross.  The two curves have the same endpoints.}
  \label{fig:QRF_defect}
\end{figure}

If a  topological defect is present, however, two paths with the same endpoints need not be continuously deformable into each other.  Figure~\ref{fig:QRF_defect} shows the situation.  Let $\gamma_1$ and $\gamma_2$ be two paths from $x$ to $y$. 
If the two paths pass to the left and right of a defect, the Wilson lines have the different values:
\begin{equation}
  \mathrm{P}\exp\left(\mathrm{i}\int_{\gamma_1:x\to y}\mathscr{A}\right)
  \not= 
  \mathrm{P}\exp\left(\mathrm{i}\int_{\gamma_2:x\to y}\mathscr{A}\right).
  \label{defect-holonomy-ratio-abelian-0}
\end{equation}
For  Abelian dressings, the ratio is given by
\begin{equation}
  \exp\left(\mathrm{i}\int_{\gamma_1:x\to y}\mathscr{A}\right)
  \left[\exp\left(\mathrm{i}\int_{\gamma_2:x\to y}\mathscr{A}\right)\right]^{-1}
  =\exp\left(\mathrm{i}\oint_{\gamma_1\circ\gamma_2^{-1}:x\to x}\mathscr{A}\right).
  \label{defect-holonomy-ratio-abelian}
\end{equation}
For a non-Abelian gauge field the corresponding statement is path ordered:
\begin{equation}
  U_{\gamma_1}U_{\gamma_2}^{-1}
  =U_{\gamma_1\circ\gamma_2^{-1}},
  \label{defect-holonomy-ratio-nonabelian}
\end{equation}
up to the usual convention for orientation and base point.  
If the closed curve winds around a defect, the holonomy can be nontrivial.  Then \textbf{the phrase ``the charged field dressed to the reference point'' is not unique unless one specifies on which side of the defect the dressing passes.}

\subsection{Restriction of global definability}
\label{subsec:defect-definability}

The preceding observation can be stated geometrically.  
In the absence of defects one may choose a family of reference paths or local frames smoothly over the whole region.  The relational observable is then globally defined.  
In the presence of a defect, however, a single global choice may be impossible.  One may need branch cuts, patches, transition functions, or boundary conditions.  Going around the defect can change the frame by a \textbf{monodromy}:
\begin{equation}
  h\longmapsto h\,\Omega_{\rm def},
  \label{defect-frame-monodromy}
\end{equation}
where $\Omega_{\rm def}$ is the \textbf{defect holonomy}.  A relational field then changes by the corresponding representation matrix.  If this change is nontrivial in the sector under consideration, the relational field is not a globally single-valued observable.

This is the precise meaning of the phrase that \textbf{topological defects restrict the global definability of QRF observables.}  The obstruction is not merely that the vacuum fluctuates strongly.  It is that \textbf{the reference frame itself cannot be chosen as a single smooth object.} 
\textbf{In non-Abelian language, a global color frame would amount to a global trivialization of the relevant color bundle together with compatible boundary data.}
  Center vortices, monopoles, and other defects can obstruct this trivialization.  As a result, \textbf{the color component of a relational operator is meaningful only patchwise or pathwise.}

This gives a QRF interpretation of a standard confinement intuition.  In a confining phase, long-distance color labels fail to be globally transportable.  A Wilson line may still define a semi-local observable, and a Wilson loop may still be a frame-independent observable, but \textbf{an isolated non-singlet relational field does not become a globally defined particle operator. }
 If the path or patch dependence is averaged over a gas of defects, nontrivial center or color sectors can vanish or decay exponentially with area.  Thus the defect picture and the QRF picture are not separate mechanisms; they are two descriptions of the same obstruction.
This observation gives strong justification for introducing defects in understanding color confinement as suggested and tried in the recent works of Kondo and Fukushima \cite{KondoFukushima:2022,FukushimaKondo:2025rgs,FukushimaKondo:2023rgs} based on the restoration of residual gauge symmetries.

\subsection{Closed operators versus open relational operators}
\label{subsec:defect-open-closed}

Defects make particularly clear why closed and open line operators play different roles.  \textbf{A traced Wilson loop detects the defect holonomy in a frame-independent way:}
\begin{equation}
  W_R(C)={1\over d_R}\mathrm{tr}_R\,U_C .
\end{equation}
It is a genuine observable whose value may be disordered by the defect ensemble.  An open line, by contrast, is an operator with endpoints.  It becomes physical only after endpoint degrees of freedom, boundary frames, or singlet contractions are specified.  In the presence of a defect, changing the path between the same endpoints can change the operator by the holonomy encircled between the two paths.  Hence the open relational operator remembers the global topology of the frame choice.

This distinction is important for confinement.  
\textbf{The Wilson-loop area law is a statement about closed, frame-independent line operators.} 
\textbf{Color confinement is a statement about the impossibility of promoting open, non-singlet, frame-relative operators to isolated frame-independent asymptotic particle operators.} 
\textbf{Topological defects connect these statements because the same holonomy that disorders closed loops also obstructs the global definition of open QRF dressings.}

\section{The (1+1)-dimensional $U(1)$ gauge--Higgs model}
\label{sec:u1-higgs-qrf}

The $1+1$-dimensional $U(1)$ gauge--Higgs model is the Abelian counterpart of the preceding Yang--Mills example.  Because the gauge group is Abelian, the relation among QRF dressing, Gauss law, boundary charge, and global holonomy can be written explicitly.  At the same time, the presence of dynamical Higgs matter changes the physical meaning of confinement: electric flux can be screened by charged matter, and for a charge-one Higgs field the Higgs and confinement regions may be analytically connected in the Fradkin--Shenker sense \cite{Fradkin:1978dv}.  
\textbf{The QRF criterion for color confinement is therefore not simply the existence of a strictly linear potential.  It is the structure of the physical algebra: bare charged local operators are absent, while boundary fluxes, holonomies, and Wilson-line-dressed neutral composites remain.}

\subsection{Continuum model and boundary charge algebra}
\label{subsec:u1-continuum}

We consider the (1+1)-dimensional $U(1)$ gauge--Higgs model where the complex Higgs field $\phi$ has charge $q\in\mathbb{Z}_{>0}$.
In the temporal gauge $A_0=0$, the continuum action is given by
\begin{equation}
 S=\int d^2x  \mathscr{L}, \quad 
  \mathscr{L}={e^2\over2}E^2+|\partial_0\phi|^2-|D_1\phi|^2-V(|\phi|),
  \quad
  D_1\phi=(\partial_1-\mathrm{i}qA_1)\phi , 
  \quad E=\frac{F_{01}}{e^2} .
  \label{u1-lagrangian}
\end{equation}
%The electric field and 
The Gauss law is written as 
\begin{equation}
%  E=e^2F_{01},
%  \quad
  G(x)=\partial_1E(x)-\rho(x)\approx0,
  \quad
  \rho=\mathrm{i}q\bigl(\pi_\phi\phi-\phi^\dagger\pi_\phi^\dagger\bigr),
  \label{u1-gauss}
\end{equation}
where $\pi_\phi=(D_0\phi)^\dagger$.

On an open interval $I=(0,L)$ of length $L$, we introduce the Gauss law smeared by $\alpha(x)$:
\begin{equation}
  G[\alpha]=\int_0^Ldx\,\alpha(x)(\partial_1E-\rho).
\end{equation}
Integration by parts gives
\begin{equation}
  G[\alpha]
  =-\int_0^Ldx\,(\partial_1\alpha)E-\int_0^Ldx\,\alpha\rho
  +[\alpha E]_0^L .
\end{equation}
%Thus the small gauge transformations with $\alpha(0)=\alpha(L)=0$ are pure gauge, whereas the boundary-nontrivial transformations nonvanishing $\alpha(0)\not=0, \alpha(L)\not=0$ at the endpoints generate boundary symmetries.  
Thus gauge transformations with $\alpha(0)=\alpha(L)=0$ are proper gauge transformations generated by the constraint. Transformations with nonzero boundary values are boundary-nontrivial transformations; they generate boundary symmetries and boundary charges. They are not necessarily large in the homotopical sense.

By introducing a \textbf{differentiable improved generator},
\begin{equation}
  \widetilde G[\alpha]
  =\int_0^Ldx\,\alpha(\partial_1E-\rho)-\alpha(L)E(L)+\alpha(0)E(0),
  \label{improved generator}
\end{equation}
we obtain on the constraint surface
\begin{equation}
  \widetilde G[\alpha]\approx Q_\partial[\alpha],
  \quad
  Q_\partial[\alpha]=\alpha(0)E(0)-\alpha(L)E(L).
\end{equation}
Equivalently,  under the identification of the boundary electric flux:
\begin{equation}
  Q_L:=E(0),
  \quad
  Q_R:=-E(L),
  \label{u1-boundary-charges}
\end{equation}
integrating the Gauss law gives the relationship:
\begin{equation}
  E(L)-E(0)=\int_0^Ldx\,\rho(x) 
  \Leftrightarrow 
  Q_L+Q_R+\int_0^Ldx\,\rho(x)=0 .
  \label{u1-integrated-gauss}
\end{equation}
Thus \textbf{the total bulk charge is balanced by boundary electric flux.}
This is the Abelian version of the statement that \textbf{charge remains as {boundary data}, not as an independent local bulk observable.}  
%This is the Abelian version of the statement that charge remains as a boundary quantity. 
For the details on the differentiable improved generator, see Appendix \ref{app:differentiable_generator}.%\ref{app:4.27a},\ref{app:4.27b}. 

\subsection{QRF dressing and physical algebra on the interval}
\label{subsec:u1-open-qrf}

On an open interval $I=(0,L)$ of length $L$, take the left endpoint $x=0$ as the base point and define the QRF variable
\begin{equation}
  \boxed{h_q(x):=\exp\left(\mathrm{i}q\int_0^xdy\,A_1(y)\right),} \quad h_q(0)=1 .
  \label{u1-R}
\end{equation}
Under the $U(1)$ gauge transformation,\begin{equation}
  A_1\mapsto A_1+\partial_1\alpha,
  \quad
  \phi\mapsto e^{\mathrm{i}q\alpha(x)}\phi,
\end{equation}
the QRF variable transforms as 
\begin{equation}
  h_q(x)\mapsto e^{\mathrm{i}q(\alpha(x)-\alpha(0))}h_q(x).
\end{equation}
Therefore the dressed Higgs field defined by
\begin{equation}
  \boxed{\Phi(x):=h_q(x)^{-1}\phi(x)}
  \label{u1-dressed-phi}
\end{equation}
transforms only under the \textbf{residual left-boundary symmetry},
\begin{equation}
  \Phi(x)\mapsto e^{\mathrm{i}q\alpha(0)}\Phi(x) .
\end{equation}
Therefore, \textbf{the local gauge redundancy has disappeared, but the global boundary charge still remains.}%has not disappeared.}

A gauge-invariant bilocal relational observable is given by
\begin{align}
  \phi^\dagger(y)\exp\left(\mathrm{i}q\int_x^y dz\,A_1(z)\right)\phi(x)
  &=\Phi^\dagger(y)\Phi(x).
  \label{u1-bilocal}
\end{align}
Thus the physical algebra on the interval relevant to the present discussion is generated schematically by
\begin{equation}
  \mathcal{A}_{\rm phys}(I)
  =\left\langle
  E(0),  E(L), 
  |\phi(x)|^2,
  \Phi^\dagger(y)\Phi(x)
    \right\rangle .
  \label{u1-physical-algebra-interval}
\end{equation}
The boundary charge acts on the dressed field as
\begin{equation}
  [Q_L,\Phi(x)]=-q\Phi(x),
  \quad
  [Q_L,\Phi^\dagger(x)]=q\Phi^\dagger(x),
  \label{u1-boundary-action}
\end{equation}
up to conventional factors of $\mathrm{i}$.  This displays the same structure as the non-Abelian interval: 
\textbf{QRF dressing removes local gauge redundancy, but the boundary symmetry is still physically represented.}

In the gauge--Higgs model the interpretation differs from pure Yang--Mills theory.  Dynamical Higgs particles can screen electric flux, therefore, strict linear confinement of all charges is not the generic statement.  
The robust QRF statement is that \textbf{the bare charged field $\phi(x)$ is not in the physical algebra; only boundary fluxes and dressed neutral bilocals are in the physical algebra.}

\subsection{Global holonomy on the circle}
\label{subsec:u1-circle}

On a circle $S^1$ of length $L$ there is no boundary charge.  Instead, there remains the global holonomy:
\begin{equation}
  U:=\exp\left(\mathrm{i}\oint_{S^1} dx\,A_1(x)\right)=e^{\mathrm{i}\Theta},
  \quad
  \Theta\in[0,2\pi).
  \label{u1-holonomy}
\end{equation}
In the pure gauge sector the local Gauss law implies that the electric field is constant, $E(x)=E_0$, and has the commutation relation:
\begin{equation}
  [E_0,U]=U,
  \quad
  E_0=-\mathrm{i}{d\over d\Theta} .
\end{equation}
The pure gauge Hilbert space is the rotor Hilbert space
\begin{equation}
  \mathcal{H}^{\rm gauge}_{\rm phys}\simeq L^2(U(1)),
  \quad
  \psi_n(\Theta)=e^{\mathrm{i}n\Theta},
  \quad n\in\mathbb{Z},
\end{equation}
with the Hamiltonian and the eigenvalues:
\begin{equation}
  H_{\rm gauge}={e^2L\over2}E_0^2,
  \quad
  H_{\rm gauge}\psi_n={e^2L\over2}n^2\psi_n .
\end{equation}

Even when the Higgs field is included, the local phase can be removed by the tree-like QRF.  But on the circle the dressing is not periodic unless the holonomy is trivial.  Defining
\begin{equation}
  h_q(x) :=\exp\left(\mathrm{i}q\int_0^xdy\,A_1(y)\right),
\end{equation}
and 
\begin{equation}
  \Phi(x) :=h_q(x)^{-1}\phi(x),
\end{equation}
we find
\begin{equation}
  \Phi(L)=e^{-\mathrm{i}q\Theta}\Phi(0).
  \label{u1-twisted-boundary}
\end{equation}
Indeed, we find $\Phi(L) :=h_q(L)^{-1}\phi(L)=(e^{i\Theta})^{-q} \phi(0)=e^{-iq\Theta}\Phi(0)$.
Thus \textbf{the QRF removes local redundancy but leaves the loop degree of freedom as physical data.} 
  The physical algebra on the circle relevant to the present discussion is generated schematically by 
\begin{equation}
  \mathcal{A}_{\rm phys}(S^1)
  =\left\langle
  U=e^{\mathrm{i}\Theta},\;|\phi(x)|^2,\;\Phi^\dagger(y)\Phi(x)
  \right\rangle .
  \label{u1-phys-alg-circle}
\end{equation}

The same content is seen on a lattice.  On an open chain the spanning-tree QRF is the product of link variables from the left root to the site, and Gauss law expresses the total charge as the difference of endpoint fluxes.  On a periodic chain, removing one link leaves a single loop variable
\begin{equation}
  W=\prod_{n=1}^N U_n,
\end{equation}
while the local gauge phases are absorbed into tree dressings.  The lattice physical algebra is generated by the loop variable, local neutral densities, and dressed bilocals.  This is the discrete counterpart of \eqref{u1-physical-algebra-interval} and \eqref{u1-phys-alg-circle} \cite{Kogut:1974ag}.

\subsection{Compact Euclidean model, vortices, and $N$-ality}
\label{subsec:u1-vortices}

To see defect obstruction explicitly, consider the two-dimensional Euclidean compact $U(1)$ gauge--Higgs model with a charge-$N$ Higgs field,
\begin{equation}
  S=\int d^2x\left[
  {1\over4e^2}F_{\mu\nu}F_{\mu\nu}
  +| (\partial_\mu-\mathrm{i}N A_\mu)\Phi |^2
  +\lambda(|\Phi|^2-v^2)^2
  \right].
  \label{u1-euclidean-action}
\end{equation}
In the London limit $\lambda \to \infty$, the radial model of the Higgs field is $x$-independently fixed, $|\Phi(x)|=v$:
\begin{equation}
  \Phi(x)=v e^{\mathrm{i}\chi(x)},
  \quad
  S_L=\int d^2x\left[
  {1\over4e^2}F_{\mu\nu}F_{\mu\nu}
  +{v^2\over2}(\partial_\mu\chi-NA_\mu)^2
  \right].
  \label{u1-london}
\end{equation}
The gauge transformation is
\begin{equation}
  A\mapsto A+d\alpha,
  \quad
  \chi\mapsto\chi+N\alpha .
\end{equation}

\noindent
(i) 
The phase of the Higgs field defines a local QRF,
\begin{equation}
  \boxed{h_N(x):= \left( \frac{\Phi(x)}{|\Phi(x)|} \right)^{1/N} 
  =\exp\left({\mathrm{i}\chi(x)\over N}\right).}
  \label{u1-h-frame-vortex}
\end{equation}
For a charge-$q$ field $\psi_q$ with the local gauge transformation
\begin{equation}
  \psi_q\mapsto e^{\mathrm{i}q\alpha}\psi_q, 
\end{equation}
the relational operator defined by
\begin{equation}
  \Psi_{q,h}(x) :=h_N(x)^{-q}\psi_q(x)
  \label{u1-rel-charge-vortex}
\end{equation}
 is locally gauge invariant.

 \begin{figure}[htbp]
  \centering
  \includegraphics[width=.22\linewidth]{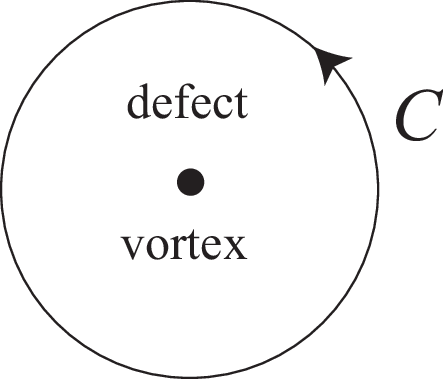}
  \caption{A vortex and the contour $C$ winding around the vortex.}
  \label{fig:vortex}
\end{figure}

\textbf{In a vortex background the frame is not globally single-valued.} 
  Around a vortex of winding number $m$, $\chi$ obeys
\begin{equation}
  \oint_C d\chi=2\pi m, \quad m \in \mathbb{Z}.
\end{equation}
Therefore we find
\begin{equation}
 h_N\longmapsto e^{2\pi\mathrm{i}m/N}h_N ,   \quad
   \Psi_{q,h}\longmapsto e^{-2\pi\mathrm{i}mq/N}\Psi_{q,h} .
  \label{u1-psi-monodromy}
\end{equation}
If $q\not\equiv0\pmod N$, the relational operator is not globally single-valued on the configuration space containing vortices.  This is the Abelian toy-model version of QRF confinement: 
\textbf{a defect obstructs the global construction of the frame, and nontrivial $\mathbb{Z}_N$ charge cannot be represented by a globally well-defined isolated relational operator.}

\noindent
(ii) 
The same obstruction is detected by Wilson loops.  
Outside a finite-action vortex, it holds
\begin{equation}
  \partial_\mu\chi-NA_\mu\simeq0,
\end{equation}
which implies that \textbf{a vortex of winding number $m$ carries flux}
\begin{equation}
  \oint_C A={1\over N}\oint_C d\chi={2\pi m\over N} .
\end{equation}
See Figure \ref{fig:vortex}. 
When a vortex of winding $m$ is enclosed, 
the Wilson loop of charge $q$ defined by
\begin{equation}
  W_q(C):=\exp\left(\mathrm{i}q\oint_C A\right),
\end{equation}
acquires the \textbf{Aharonov--Bohm phase}
\begin{equation}
  W_q(C)\longmapsto e^{2\pi\mathrm{i}mq/N}W_q(C).
\end{equation}

Assume a dilute gas of vortices and anti-vortices of the real density $\zeta$.  If vortices of $m=+1$ and $m=-1$ are independently Poisson distributed in the area $A(C)$ enclosed by $C$, then
\begin{align}
  \langle W_q(C)\rangle_{\rm vort}
  &=\exp\left[
  \zeta A(C)(e^{2\pi\mathrm{i}q/N}-1)
  +\zeta A(C)(e^{-2\pi\mathrm{i}q/N}-1)
  \right] \\
  &=\exp\left[-2\zeta A(C)\left(1-\cos{2\pi q\over N}\right)\right].
  \label{u1-vortex-area-law}
\end{align}
The string tension is therefore
\begin{equation}
  \sigma_q=2\zeta\left(1-\cos{2\pi q\over N}\right),
  \label{u1-sigma-q}
\end{equation}
therefore, $\sigma_q=0$ for $q\equiv0\pmod N$ and $\sigma_q>0$ for nontrivial $\mathbb{Z}_N$ charge $q\not=0\pmod N$.  This gives a direct equivalence in the toy model:
\begin{equation}
  \boxed{
  \hbox{vortex monodromy of the QRF frame}
  \quad\Longleftrightarrow\quad
  \hbox{Wilson-loop area law for nontrivial }\mathbb{Z}_N\hbox{ charge}.}
  \label{u1-qrf-vortex-equivalence}
\end{equation}

The dictionary is clear:
\begin{equation}
\begin{array}{rcl}
\hbox{Higgs phase }\chi &\longleftrightarrow& \hbox{local QRF frame},\\
\hbox{vortex} &\longleftrightarrow& \hbox{global monodromy of the frame},\\
\hbox{nontrivial }\mathbb{Z}_N\hbox{ charge} &\longleftrightarrow& \hbox{globally undefinable relational operator},\\
\hbox{vortex gas} &\longleftrightarrow& \hbox{Wilson-loop area law}.
\end{array}
\end{equation}
If the Higgs charge is one $(N=1)$, the Aharonov--Bohm phase for integer-charge $q$ Wilson loops is trivial and this $N$-ality selection disappears.  Therefore the charge-$N$ model is the minimal Abelian example in which QRF frame monodromy and confinement-type selection can both be seen explicitly.

%\noindent{\bf Remark on notation and on the two symmetric reductions.}
%There are two closely related reductions of four-dimensional $SU(2)$ Yang--Mills theory that will be used below.  With the conventions used in the source material, the $SO(3)$-symmetric, or spherically symmetric, sector reduces to a two-dimensional $U(1)$ gauge--Higgs model on $\mathbb{H}^2$ and gives hyperbolic vortices.  The $SO(2)$-symmetric, or circle-symmetric, sector reduces to a three-dimensional $SU(2)$ gauge--Higgs model on $\mathbb{H}^3$ and gives hyperbolic magnetic monopoles.  This is the convention followed in the formulae below.

\section{Symmetric instantons and dimensional reduction}
\label{sec:symmetric-instantons}

\subsection{Why symmetric instantons are relevant for confinement}

The dual-superconductor picture \cite{Nambu:1974,tHooft:1975,Mandelstam:1976} for quark confinement suggests that lower-dimensional magnetic defects, monopoles \cite{Dirac:1931} and vortices, should play a central role in the disordering of color at long distance.  In a pure four-dimensional Yang--Mills theory, however, there is no elementary scalar field and no ordinary smooth finite-action soliton of the Nielsen--Olesen \cite{NO:1973} type for vortices or 't Hooft--Polyakov \cite{tHP:1974} type for magnetic monopoles.  The intrinsic finite-action Euclidean topological configurations of pure four-dimensional Yang--Mills theory are instantons.  A possible way to reconcile these two statements is to regard lower-dimensional defects as shadows of instantons after imposing a spacetime symmetry and using the \textbf{conformal invariance of the classical Yang--Mills equations}.

Let $\mathscr{A}=\mathscr{A}_\mu dx^\mu$ be a four-dimensional $SU(2)$ gauge field on Euclidean $\mathbb{R}^4$, with field strength
\begin{equation}
  \mathscr{F}=d\mathscr{A}-i\mathscr{A}\wedge\mathscr{A}.
\end{equation}
In what follows, the Yang--Mills coupling constant $g_{\rm YM}$ is absorbed into the gauge field $g_{\rm YM}\mathscr{A} \to \mathscr{A}$.
The self-duality equation
\begin{equation}
  \mathscr{F}=*_{\mathbb{R}^4}\mathscr{F}
  \label{eq:self-dual-four}
\end{equation}
is conformally invariant.  Therefore, after removing the fixed set of a rotation and rewriting the flat metric up to a conformal factor, one can interpret a \textbf{symmetric instanton as a lower-dimensional defect on a hyperbolic space}.  This is the basic mechanism behind Atiyah's relation between circle-symmetric instantons and hyperbolic magnetic monopoles, and Witten's relation between spherically symmetric instantons and hyperbolic vortices \cite{Atiyah:1984,Witten:1977,ForgacsManton:1980,MantonSutcliffe:2004,Maldonado:2015}.
See e.g. \cite{Kondo:2025symmetric} for a review. 

For the QRF interpretation, the point is not merely that the equations are reduced.  
The point is that \textbf{the field component along the compact orbit of the symmetry becomes a Higgs-like reference direction}.  
\textbf{It is a candidate color frame.}  
\textbf{The defect number of the reduced solution measures the obstruction to extending this frame globally.}  
Thus the symmetric-instanton sector supplies a controlled bridge:
\begin{align}
  &\text{four-dimensional instanton sector}
  \longrightarrow
  \text{hyperbolic defects} \nonumber\\
  &\longrightarrow
  \text{obstruction to a global QRF frame}
  \longrightarrow
  \text{Wilson-loop disorder}.
  \label{eq:symmetric-qrf-bridge}
\end{align}
This bridge is sector-limited, but it is much more rigid than an analogy: the reduced actions, the boundary terms, and \textbf{the defect equations are obtained by dimensional reduction of the Yang--Mills action and the self-duality equation.}

\subsection{The $SO(2)$ reduction: $\mathbb{R}^4\setminus\mathbb{R}^2\simeq \mathbb{H}^3\times S^1$}
\label{subsec:so2-reduction-general}

First consider the circle symmetric case, i.e., spatial rotation group $SO(2)\simeq S^1$.  
Following the coordinate convention used throughout this paper which is the same as that adopted in \cite{Kondo:2025symmetric}, 
we start from the four-dimensional Euclidean space $(x^1,x^2,x^3,x^4) \in \mathbb{R}^4$ and introduce polar coordinates in the $(x^1,x^2)$ plane:
\begin{equation}
  x^1=\rho\cos\varphi,
  \quad
  x^2=\rho\sin\varphi,
  \quad
  \rho>0,
  \quad
  \varphi\in[0,2\pi),
  \label{eq:h3-polar-convention}
\end{equation}
and keep the remaining coordinates $x^3$ and $x^4$.  
Then factoring out the $\rho^2$ as the \textbf{conformal factor} yields 
\begin{align}
  ds^2(\mathbb{R}^4)
  &=(d\rho)^2+\rho^2(d\varphi)^2+(dx^3)^2+(dx^4)^2  \nonumber\\
  &=\rho^2\left[\frac{(d\rho)^2+(dx^3)^2+(dx^4)^2}{\rho^2}+d\varphi^2\right].
  \label{eq:r4-h3-s1-metric}
\end{align}
Thus, after removing the $\rho=0$ region, i.e., $x^3-x^4$ plane $\mathbb{R}^2$, the \textbf{conformal equivalence} is given by
\begin{equation}
  \mathbb{R}^4\setminus\mathbb{R}^2
  \simeq \mathbb{H}^3\times S^1,
  \quad
  \mathbb{H}^3=\{ X=(\rho,x^3,x^4)\,|\,\rho>0\}, \ S^1 = \{ \varphi | \varphi \in [0, 2\pi)  \} .
  \label{eq:r4-h3-s1}
\end{equation}
With the \textbf{curvature radius} $\ell$, the hyperbolic metric on the three-dimensional \textbf{hyperbolic space} $\mathbb{H}^3$ can be written as
\begin{equation}
  ds^2_{\mathbb{H}^3}={\ell^2\over \rho^2}\left((d\rho)^2+(dx^3)^2+(dx^4)^2\right).
  \label{eq:h3-metric}
\end{equation}
If it is continued to the Lorentzian region, we shall often write $(x^3,x^4)=(z,t)$ and use the metric of the three-dimensional \textbf{anti de Sitter space} $AdS_3$:
\begin{equation}
  ds^2_{AdS_3}={\ell^2\over \rho^2}\left((d\rho)^2+(dz)^2-(dt)^2\right).
\end{equation}

An $SO(2)$-symmetric configuration is independent of $\varphi$ up to a gauge transformation.  The component $\mathscr{A}_\varphi$ along the circle $S^1(\varphi)$ becomes an adjoint scalar $\Phi$:
\begin{equation}
  \Phi(X):=\mathscr{A}_\varphi(X) \in\mathfrak{su}(2), \ X=(\rho,x^3,x^4) \in \mathbb{H}^3,  
  \label{eq:phi-from-a-varphi}
\end{equation}
and the remaining components define a three-dimensional gauge field $\mathscr{A}_\alpha$ on $\mathbb{H}^3$:
\begin{equation}
  \mathscr{A}_\alpha(X) \in\mathfrak{su}(2) , \ X=(\rho,x^3,x^4) \in \mathbb{H}^3 .
  \label{eq:phi-from-a-varphi2}
\end{equation}

%The corresponding reduced action has the schematic form:
The four-dimensional Euclidean Yang--Mills action reduces to the $SU(2)$ gauge--Higgs model with the adjoint scalar field on the three-dimensional hyperbolic space $\mathbb{H}^3$:
\begin{equation}
  S_{4d}\big|_{SO(2)}
  ={2\pi\ell_{S^1}\over g_4^2}
  \int_{\mathbb{H}^3}d^3X\sqrt{g}\,
  \mathrm{tr}\left(
  {1\over4}\mathscr{F}_{\alpha\beta}\mathscr{F}^{\alpha\beta}
  +{1\over2}\mathscr{D}_\alpha\Phi\mathscr{D}^\alpha\Phi
  \right) +S_\vartheta, 
  \label{eq:so2-reduced-action-general}
%eq:ads3-reduced-action
\end{equation}
where $\alpha,\beta$ are $\mathbb{H}^3$ indices and $\ell_{S^1}$ is the curvature radius of $S^1$.  
Here we have used 
\begin{equation}
  \mathscr{F}_{\alpha\beta}=\partial_\alpha\mathscr{A}_\beta-\partial_\beta\mathscr{A}_\alpha-i[\mathscr{A}_\alpha,\mathscr{A}_\beta],
  \quad
  \mathscr{D}_\alpha\Phi=\partial_\alpha\Phi-i[\mathscr{A}_\alpha,\Phi].
\end{equation}

By completing the square in the reduced action, it turns out that this action has the \textbf{Bogomolny bound} and is saturated when the \textbf{Bogomolny equation} is satisfied in the similar way that the four-dimensional Yang--Mills action has the Bogomolny bound and is saturated when the \textbf{self-dual (anti-self-dual) Yang--Mills equation} is satisfied. 
Through the $SO(2)$ symmetric dimensional reduction, therefore, the self-dual (anti-self-dual) equation \eqref{eq:self-dual-four} in four-dimensional Euclidean space $\mathbb{R}^4$ becomes the \textbf{Bogomolny equation} in three-dimensional hyperbolic space $\mathbb{H}^3$:
\begin{equation}
  \mathscr{F}=*_{\mathbb{H}^3}\mathscr{D}\Phi,
%  \mathscr{F}_{\mathscr{A}}=*_{\mathbb{H}^3}\mathscr{D}_{\mathscr{A}}\Phi,
  \label{eq:h3-bogomolny-general}
\end{equation}
%up to the conventional conformal factor associated with the upper-half-space metric (\ref{eq:h3-metric}).  
%Indeed, the self-dual (anti-self-dual) equation in four-dimensional Euclidean space $\mathbb{R}^4$ becomes the Bogomolny equation in three-dimensional hyperbolic space $\mathbb{H}^3$ through the $SO(2)$ symmetric dimensional reduction 
With the metric of $\mathbb{H}^3$ coordinates $X^\alpha=(x^4,x^3,\rho)$, the Bogomolny equation reads
\begin{equation}
  B_\alpha(X)=\pm {1\over\rho}\mathscr{D}_\alpha\Phi(X),
  \quad
  B_\alpha(X):={1\over2}\epsilon_{\alpha\beta\gamma}\mathscr{F}_{\beta\gamma}(X),
  \label{eq:ads3-bogomolny}
\end{equation}
where the factor $1/\rho$ reflects the conformal metric.
Therefore, \textbf{a finite-action $SO(2)$-symmetric instanton on $\mathbb{R}^4$ becomes a hyperbolic magnetic monopole on $\mathbb{H}^3$.}  
Since the $S^1$ orbit is compact, a finite three-dimensional action given by the solution of the Bogomolny equation in the reduced theory gives a finite four-dimensional action and may enter the symmetric-instanton sector of the four-dimensional path integral.

%The solution of the Bogomolny equation gives a finite four-dimensional action and may enter the symmetric-instanton sector of the four-dimensional path integral.

The remaining boundary term is proportional to $\int_{\partial \mathbb{H}^3}\mathrm{tr}(\Phi\mathscr{F})$, which measures the magnetic charge.  
The Higgs direction $\widehat{\Phi}=\Phi/|\Phi|$ gives a local color direction and hence a natural QRF frame.  \textbf{A monopole is precisely an obstruction to choosing that frame globally.}

\subsection{The $SO(3)$ reduction: $\mathbb{R}^4\setminus\mathbb{R}^1\simeq \mathbb{H}^2\times S^2$}
\label{subsec:so3-reduction-general}

The second reduction is the spherically symmetric case, i.e., the spatial rotation group $SO(3)$.
  Now we introduce polar coordinates in the three-dimensional subspace $(x^1,x^2,x^3)$:
\begin{equation}
  x^1=r\sin\theta\cos\varphi,
  \quad
  x^2=r\sin\theta\sin\varphi,
  \quad
  x^3=r\cos\theta,
  \quad r>0, \ \theta \in [0, \pi), \ \varphi \in [0, 2\pi) .
  \label{eq:h2-polar-convention}
\end{equation}
Then by factoring out the conformal factor $r^2$ the flat metric on $\mathbb{R}^4$ becomes 
\begin{align}
  ds^2(\mathbb{R}^4)
  &=(dx^4)^2+(dr)^2+r^2((d\theta)^2+\sin^2\theta\,(d\varphi)^2)\\
  &=r^2\left[{(dx^4)^2+(dr)^2\over r^2}
  +(d\theta)^2+\sin^2\theta\,(d\varphi)^2\right].
  \label{eq:r4-h2-s2-metric}
\end{align}
Thus we have the conformal equivalence after removing the $r=0$ region, i.e., $x^4$ axis $\mathbb{R}$:
\begin{equation}
  \mathbb{R}^4\setminus\mathbb{R}^1
  \simeq \mathbb{H}^2\times S^2,
  \quad
  \mathbb{H}^2=\{(r,x^4)\,|\, r>0\}, \ S^2 = \{ (\theta, \varphi) | \theta \in [0, \pi),  \varphi \in [0, 2\pi)  \} .
  \label{eq:r4-h2-s2}
\end{equation}
with hyperbolic metric on the two-dimensional hyperbolic space $\mathbb{H}^2$ with the curvature radius $\ell$:
\begin{equation}
  ds^2_{\mathbb{H}^2}={\ell^2\over r^2}\left((dr)^2+(dx^4)^2\right).
  \label{eq:h2-metric}
\end{equation}
After Lorentzian continuation $x^4\to t$, this is read as the two-dimensional anti de Sitter space $AdS_2$ in the coordinates $(r,t)$.

The most general $SO(3)$-invariant $SU(2)$ gauge field, up to gauge transformations, is given by the Witten ansatz.  With $j,A=1,2,3$ and $r^2=x^jx^j$, one writes
\begin{align}
  \mathscr{A}_4^A(x)
  &= {x^A\over r}A_0(r,x^4),
  \label{eq:witten-ansatz-a4-general}\\
  \mathscr{A}_j^A(x)
  &= {x^j x^A\over r^2}A_1(r,x^4)
     +{\delta^{jA}r^2-x^jx^A\over r^3}\phi_1(r,x^4)
     +{\epsilon_{jAk}x^k\over r^2}\left(1+\phi_2(r,x^4)\right).
  \label{eq:witten-ansatz-aj-general}
\end{align}

The four-dimensional Euclidean Yang--Mills action reduces to a  $U(1)$ gauge--Higgs model with the complex scalar field on the two-dimensional hyperbolic space $\mathbb{H}^2$:
\begin{equation}
  \phi:=\phi_1+i\phi_2,
\end{equation}
given in a common normalization  by
\begin{equation}
  S_{4d}\big|_{SO(3)}
  =4\pi\int dx^4\int_0^\infty dr\left[
  {1\over4}r^2F_{\mu\nu}F_{\mu\nu}
  +(D_\mu\phi)^*D_\mu\phi
  +{1\over2r^2}\left(|\phi|^2-1\right)^2
  \right]+S_\vartheta ,
  \label{eq:so3-reduced-action-general}
\end{equation}
where $\mu,\nu$ run over $(x^4,r)\in \mathbb{H}^2$.  
%$\alpha,\beta$ are $\mathbb{H}^3$ indices and $\ell_{S^1}$ is the curvature radius of $S^1$.  
The BPS equations are the hyperbolic vortex equations.  In a complex coordinate on $\mathbb{H}^2$, they may be written schematically as
\begin{equation}
  D_{\bar z}\phi=0,
  \quad
  B={\Omega\over2}\left(1-|\phi|^2\right),
  \label{eq:h2-vortex-general}
\end{equation}
where $\Omega$ is the conformal factor of $\mathbb{H}^2$.  
%Hence an $SO(3)$-symmetric instanton on $\mathbb{R}^4$ becomes a hyperbolic vortex on $\mathbb{H}^2$,
Therefore, \textbf{a finite-action $SO(3)$-symmetric instanton on $\mathbb{R}^4$ becomes a hyperbolic vortex on $\mathbb{H}^2$.}

The two reductions may be summarized as (see e.g. \cite{Kondo:2025symmetric} for more details.)
\begin{center}
\begin{tabular}{@{}llll@{}}
\toprule
symmetry & conformal form & reduced theory & defect \\
\midrule
$SO(2)$ & $\mathbb{R}^4\setminus\mathbb{R}^2\simeq \mathbb{H}^3\times S^1$ & $SU(2)$ gauge--Higgs on $\mathbb{H}^3$ & hyperbolic monopole \\
$SO(3)$ & $\mathbb{R}^4\setminus\mathbb{R}^1\simeq \mathbb{H}^2\times S^2$ & $U(1)$ gauge--Higgs on $\mathbb{H}^2$ & hyperbolic vortex \\
\bottomrule
\end{tabular}
\end{center}
For QRF purposes, both reductions have the same conceptual structure: \textbf{a component or phase of the reduced field supplies a local frame, while the topological defect obstructs its global single-valuedness or global smoothness.}

\section{$SO(3)$-symmetric reduction to a two-dimensional $U(1)$ gauge--Higgs model}
\label{sec:so3-u1-ads2}

\subsection{Reduced $AdS_2$ model and boundary data}

We now specialize the general $SO(3)$ reduction of Sec.~\ref{subsec:so3-reduction-general} to the QRF problem.  
In the two-dimensional $U(1)$ gauge--Higgs model obtained from a spherically symmetric instanton, \textbf{the radial Wilson line explicitly supplies a QRF, while vortices obstruct the global definition of the Higgs-phase frame.}  
\textbf{The physical information that remains after the local gauge orientation has been moved to the frame side is carried by the boundary holonomy, boundary electric flux, dressed Higgs operators, and vortex monodromies.}

We first fix the Abelian charge normalization used throughout this section.  The residual $U(1)$ connection will be denoted by $a_\mu$ and is normalized so that a four-dimensional fundamental probe has Abelian charge $q=1$.  In this normalization the reduced Higgs field coming from the off-diagonal adjoint components has charge $2$.  Equivalently, if $A^{\rm std}_\mu$ denotes the Abelian connection in the standard $T^3$-weight normalization of the reduced action, then
\begin{equation}
  A^{\rm std}_\mu=2a_\mu .
\end{equation}
Thus, under a residual gauge transformation with parameter $\alpha$,
\begin{equation}
  a\mapsto a-d\alpha,
  \quad
  \phi\mapsto e^{-2i\alpha}\phi,
  \quad
  \psi_+\mapsto e^{-i\alpha}\psi_+,
  \label{eq:ads2-fundamental-charge-normalization}
\end{equation}
where $\psi_+$ denotes a fundamental probe component of charge $+1$.  In this sign convention a charge-$q$ field transforms as $\Phi_q\mapsto e^{-iq\alpha}\Phi_q$ and has covariant derivative $D\Phi_q=(d-iq a)\Phi_q$.

Let $\tau$ denote Lorentzian time and regularize the radial interval by $r\in[\epsilon,R]$.  With the above normalization, the Lorentzian action corresponding to \eqref{eq:so3-reduced-action-general} is
\begin{equation}
  S_M=4\pi\int d\tau\int_\epsilon^R dr\left[
    2r^2f_{01}^2
 + (D_0\phi)^*D_0\phi-(D_1\phi)^*D_1\phi
  -{1\over2r^2}\left(|\phi|^2-1\right)^2
  \right]+S_\vartheta,
  \label{eq:ads2-lorentzian-action}
\end{equation}
where
\begin{equation}
  D_\mu\phi:=(\partial_\mu-2ia_\mu)\phi,
  \quad
  f_{01}:=\partial_0a_1-\partial_1a_0.
\end{equation}
The factor $2r^2f_{01}^2$ is the same term as ${1\over2}r^2(F^{\rm std}_{01})^2$ written with $F^{\rm std}_{01}=2f_{01}$.  The topological term is correspondingly
\begin{equation}
  S_\vartheta= 4\pi\int dt\int_0^\infty dr \frac{i\vartheta}{8\pi^2}\epsilon_{\mu\nu}f_{\mu\nu} .
\end{equation}
The field $a_0$ is a Lagrange multiplier.  By using the electric displacement
\begin{equation}
  \Pi^1(r,\tau):={\partial\mathscr{L}_M\over\partial(\partial_0a_1)}
  =16\pi r^2f_{01}+\kappa_\vartheta,
  \label{eq:ads2-electric-displacement}
\end{equation}
with $\kappa_\vartheta$ being a constant coming from $S_\vartheta$, and the matter charge density
\begin{equation}
  \rho(r,\tau)=2i(\pi_\phi\phi-\phi^*\pi_{\phi^*}),
\end{equation}
the Gauss law is given by
\begin{equation}
  G(r,\tau):=\partial_r\Pi^1(r,\tau)-\rho(r,\tau)\approx0.
  \label{eq:ads2-gauss}
\end{equation}
For a smearing parameter $\alpha(r)$, the generator of the gauge transformation reads
\begin{align}
  G[\alpha]
  =\int_\epsilon^Rdr\,\alpha(r)G(r)  %\\
  =-\int_\epsilon^Rdr\,(\partial_r\alpha)\Pi^1
    -\int_\epsilon^Rdr\,\alpha\rho
    +[\alpha\Pi^1]_\epsilon^R.
  \label{eq:ads2-gauss-parts}
\end{align}
Thus a gauge parameter that does not vanish at the boundary generates a boundary symmetry.  The corresponding \textbf{boundary charge} is
\begin{equation}
  Q_\partial[\alpha]=[\alpha\Pi^1]_\epsilon^R
  =\alpha(R)\Pi^1(R)-\alpha(\epsilon)\Pi^1(\epsilon),
  \label{eq:ads2-boundary-charge}
\end{equation}
and the left and right \textbf{endpoint charges} may be written as
\begin{equation}
  Q_L=\Pi^1(\epsilon),
  \quad
  Q_R=-\Pi^1(R).
  \label{eq:ads2-end-charges}
\end{equation}

The \textbf{open Wilson line} along the radial interval, normalized for a fundamental charge, is
\begin{equation}
  U_{\epsilon\to r}(\tau)
  :=\exp\left(i\int_\epsilon^r ds\,a_1(s,\tau)\right),
  \label{eq:ads2-open-wilson}
\end{equation}
and the \textbf{boundary holonomy} connecting the two ends is
\begin{equation}
  U_\partial(\tau)=U_{\epsilon\to R}(\tau)
  =\exp\left(i\int_\epsilon^R ds\,a_1(s,\tau)\right).
  \label{eq:ads2-boundary-holonomy}
\end{equation}
Figure~\ref{fig:QRF_AdS2_1_revised} represents this radial Wilson-line QRF.

\begin{figure}[tbp]
  \centering
\includegraphics[width=.30\linewidth]{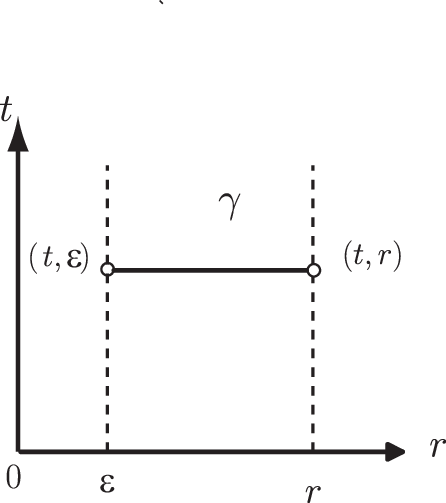}
  \caption{The open Wilson line $U_\gamma(t,r)$ along the interval $[\epsilon,r]$ from $(t,\epsilon)$ to $(t,r)$.  In the limits $\epsilon\to0$ and $r\to\infty$, it gives the boundary holonomy connecting the two ends of $\partial AdS_2$.}
  \label{fig:QRF_AdS2_1_revised}
\end{figure}

Under the local gauge transformation in \eqref{eq:ads2-fundamental-charge-normalization}, this line transforms as
\begin{equation}
  U_{\epsilon\to r}\mapsto e^{-i\alpha(r)}U_{\epsilon\to r}e^{i\alpha(\epsilon)}.
\end{equation}
Therefore it is precisely the object needed to compare a fundamental phase at $r$ with the phase at the reference endpoint $\epsilon$.  For a charge-two field such as $\phi$, the corresponding dressing is $U_{\epsilon\to r}^{-2}\phi$.

Here we comment on charge normalization and the residual $\mathbb Z_2$ center.
We use the conventions:
\begin{equation}
 T^A={\sigma^A\over2},\quad g_\beta=e^{-i\beta T^3},
 \quad \alpha={\beta\over2},
 \quad a={1\over2}A_{\rm red}^3 .
\end{equation}
Then we have
\begin{equation}
 \psi_+\mapsto e^{-i\alpha}\psi_+,
 \quad
 \phi\sim A_{\rm off}^+\mapsto e^{-2i\alpha}\phi .
\end{equation}
It follows that, after the condensation of the charge-two Higgs field, the residual transformation with $\alpha=\pi$,
\begin{equation}
 g_{\alpha=\pi}=e^{-i\pi\sigma^3}=-\mathbf 1,
\end{equation}
is precisely the $\mathbb Z_2$ center of four-dimensional $SU(2)$.

\subsection{Radial Wilson line as a QRF reduction map}

Using the left endpoint $r=\epsilon$ as the root, define the QRF by
\begin{equation}
  \theta_h(r,\tau):=\int_\epsilon^r ds\,a_1(s,\tau).
  \label{eq:ads2-thetaA}
\end{equation}
Then the finite transformation by the radial Wilson-like QRF with parameter $\alpha=\theta_h$ gives
\begin{align}
  \widehat a_1(r,\tau)&:=a_1(r,\tau)-\partial_r\theta_h(r,\tau)=0,
  \label{eq:ads2-a1zero}\\
  \widehat a_0(r,\tau)&:=a_0(r,\tau)-\partial_0\theta_h(r,\tau),
  \label{eq:ads2-a0red}\\
  \Psi(r,\tau)&:=e^{-2i\theta_h(r,\tau)}\phi(r,\tau)
  =U_{\epsilon\to r}^{-2}\phi(r,\tau).
  \label{eq:ads2-psired}
\end{align}
This is the QRF dressing in the fundamental-charge normalization.  A fundamental probe of charge $+1$ would be dressed as $U_{\epsilon\to r}^{-1}\psi_+$, whereas the reduced Higgs field must be dressed with the square of the inverse fundamental Wilson line because it has charge $2$.
\textbf{The field $\Psi$ is the Higgs field viewed relative to the endpoint frame.  The bulk phase has been absorbed by the Wilson-line QRF, while the residual endpoint phase and the total boundary holonomy remain.}

In the reduced gauge $\widehat a_1=0$, 
\footnote{
The continuum analogue of fixing all tree Wilson lines on a lattice is the contour gauge.  
}
we have 
\begin{equation}
  f_{01}=-\partial_r \widehat a_0,
  \quad
  \Pi^1_{\rm red}=-16\pi r^2\partial_r \widehat a_0+\kappa_\vartheta,
\end{equation}
and the Gauss law becomes
\begin{equation}
  \partial_r\left(-16\pi r^2\partial_r \widehat a_0+\kappa_\vartheta\right)-\rho_{\rm red}=0,
  \quad
  \rho_{\rm red}=2i(\pi_\Psi\Psi-\Psi^*\pi_{\Psi^*}).
  \label{eq:ads2-gauss-red}
\end{equation}
Since $\kappa_\vartheta$ is constant, the local equation is essentially
\begin{equation}
  \partial_r(r^2\partial_r \widehat a_0)=-{1\over16\pi}\rho_{\rm red}.
\end{equation}
This makes the QRF interpretation transparent:
\begin{align}
  \text{bulk gauge redundancy} &:\quad a_1\sim a_1-\partial_r\alpha,\\
  \text{QRF} &:\quad U_{\epsilon\to r}=\exp\left(i\int_\epsilon^r ds\,a_1(s,\tau)\right),\\
  \text{reduction map} &:\quad (a_1,a_0,\phi)\mapsto(0,\widehat a_0,\Psi),\\
  \text{boundary data} &:\quad Q_L,\ Q_R,\ U_\partial.
\end{align}
Even after the QRF absorbs the radial gauge phase in the bulk, the boundary electric flux and boundary holonomy cannot be erased.  They are semi-local data, exactly of the type emphasized in the general QRF formulation.

The physical algebra contains the electric displacement $\Pi^1$, the boundary holonomy $U_\partial$, neutral local composites, and dressed bilocal fields.  For a charge-$q$ field $\Phi_q$, a typical dressed bilocal is
\begin{equation}
  O_q(x,y;\gamma)=\Phi_q^*(y)W_{q,\gamma}[x,y]\Phi_q(x),
  \quad
  W_{q,\gamma}[x,y]=\exp\left(iq\int_{\gamma:x\to y}a\right).
  \label{eq:ads2-dressed-bilocal}
\end{equation}
The local charged field $\Phi_q(x)$ itself is not an element of the physical algebra.  The reduced $AdS_2$ model therefore realizes in equations the principle stated earlier: the charged bare variable is replaced by a relational object tied to a frame or to a boundary.

\subsection{Higgs-phase QRF, vortices, and the $\mathbb{Z}_2$ center}
\label{subsec:higgs-qrf-vortices-z2}

The $SO(3)$-symmetric reduction contains a reduced Abelian gauge--Higgs system.
For the purpose of relating this reduced model to four-dimensional $SU(2)$ Yang--Mills theory, it is important to fix the normalization of the residual $U(1)$ charge once and for all.
This normalization is not a harmless convention, because the identification of the residual $\mathbb{Z}_2$ center and the monodromy of a fundamental relational operator depend on it.

We use Hermitian $SU(2)$ generators
\begin{equation}
  T^A=\frac{\sigma^A}{2},
  \quad
  [T^A,T^B]=i\epsilon^{ABC}T^C,
\end{equation}
and take the residual Abelian subgroup to be generated by $T^3$.
In order to match the sign convention used in \eqref{eq:ads2-fundamental-charge-normalization}, we write the residual transformation as
\begin{equation}
  g_\beta=\exp(-i\beta T^3).
\end{equation}
Then a fundamental doublet $\psi=(\psi_+,\psi_-)^t$ transforms as
\begin{equation}
  \psi_+ \mapsto e^{-i\beta/2}\psi_+,
  \quad
  \psi_- \mapsto e^{i\beta/2}\psi_- .
\end{equation}
Thus the standard $T^3$-weights of the fundamental representation are $\pm1/2$.
On the other hand, the off-diagonal adjoint generators
\begin{equation}
  T^\pm:=T^1\pm iT^2
\end{equation}
satisfy
\begin{equation}
  [T^3,T^\pm]=\pm T^\pm,
  \quad
  g_\beta T^\pm g_\beta^{-1}=e^{\mp i\beta}T^\pm .
\end{equation}
Therefore an off-diagonal adjoint component has twice the Abelian charge of a fundamental component.

In what follows we use the normalization in which a four-dimensional fundamental probe has Abelian charge $q=1$.
Equivalently, we introduce
\begin{equation}
  \alpha:=\frac{\beta}{2},
  \quad
  g_\alpha:=\exp(-2i\alpha T^3)=\exp(-i\alpha\sigma^3),
\end{equation}
and define the reduced Abelian connection $a$ by
\begin{equation}
  a:=\frac{1}{2}\mathscr{A}^3_{\rm red} .
\end{equation}
Then the reduced Abelian connection $a$ transforms as
\begin{equation}
  a\mapsto a-d\alpha,
\end{equation}
while a fundamental component of charge $+1$ transforms as
\begin{equation}
  \psi_+\mapsto e^{-i\alpha}\psi_+ .
\end{equation}
With this fundamental-charge normalization, the off-diagonal adjoint field produced by the $SO(3)$ reduction has charge $2$.
If
\begin{equation}
  \phi\sim \mathscr{A}^{+}_{\rm off}
\end{equation}
denotes the reduced complex Higgs field coming from the off-diagonal adjoint components, then, up to the sign convention used to choose $T^+$ rather than $T^-$,
\begin{equation}
  \phi\mapsto e^{-2i\alpha}\phi,
  \quad
  D\phi=(d-2ia)\phi .
  \label{eq:charge-two-Higgs}
\end{equation}
This is the convention used in the whole of Sec.~\ref{sec:so3-u1-ads2}.
If instead one keeps the standard $T^3$-weight normalization, the same statement would be that the fundamental has charge $1/2$ and the reduced Higgs has charge $1$.
The physics is unchanged, but the $\mathbb{Z}_2$ center is most transparent in the fundamental-charge normalization adopted here.

In the London regime the radial mode of the reduced Higgs field is frozen and one may write
\begin{equation}
  \phi(x)=v e^{i\chi(x)} .
\end{equation}
Equation \eqref{eq:charge-two-Higgs} gives
\begin{equation}
  D\phi
  =iv e^{i\chi}(d\chi-2a),
  \quad
  \chi\mapsto \chi-2\alpha .
\end{equation}
The Higgs phase therefore supplies a local color QRF for a fundamental probe:
\begin{equation}
  h_f(x):=e^{i\chi(x)/2},
  \quad
  h_f\mapsto e^{-i\alpha}h_f .
  \label{eq:higgs-frame-fundamental}
\end{equation}
More invariantly, for the fundamental doublet one may introduce the $SU(2)$-valued diagonal frame
\begin{equation}
  H_f(x):=\exp(i\chi(x)T^3)
  =
  \begin{pmatrix}
    e^{i\chi(x)/2} & 0 \\
    0 & e^{-i\chi(x)/2}
  \end{pmatrix},
  \quad
  H_f\mapsto g_\alpha H_f .
\end{equation}
Then the relational field
\begin{equation}
  \Psi_h(x):=H_f(x)^{-1}\psi(x)
\end{equation}
is invariant under the residual local $U(1)$ transformation.
Equivalently, for the charge-$+1$ component one has
\begin{equation}
  \Psi_{h,+}(x)=h_f(x)^{-1}\psi_+(x).
\end{equation}
This operator is a local relational expression relative to the Higgs QRF.
It is not an absolute color component; it is the fundamental probe as measured relative to the phase of the charge-two Higgs field.

(i) 
Now consider a vortex of winding number $m$ in the reduced Abelian Higgs system.
On a large circle $C$ surrounding the vortex,
\begin{equation}
  \oint_C d\chi=2\pi m,
  \quad
  m\in\mathbb{Z} .
  \label{eq:vortex-winding}
\end{equation}
Therefore the fundamental Higgs frame has the monodromy
\begin{equation}
  h_f\longmapsto e^{i\pi m}h_f=(-1)^m h_f,
\end{equation}
or, equivalently,
\begin{equation}
  H_f\longmapsto H_f\exp(2\pi i m T^3)=(-1)^m H_f .
\end{equation}
For an odd vortex, $m\in2\mathbb{Z}+1$, the monodromy is precisely the nontrivial element of the $SU(2)$ center.
Consequently the fundamental relational operator changes sign:
\begin{equation}
  \Psi_h\longmapsto (-1)^m\Psi_h .
\end{equation}
Thus \textbf{a single fundamental relational operator is not globally single-valued on a field space that includes odd vortices.}
This is the reduced-model expression of the statement that \textbf{an isolated fundamental color label cannot be made into a globally well-defined QRF observable in the presence of nontrivial center monodromy.}
By contrast, an adjoint or charge-two object is center neutral.
It uses the single-valued frame
\begin{equation}
  h_{\rm adj}(x):=e^{i\chi(x)},
\end{equation}
and is insensitive to the sign picked up by $h_f$.

The same obstruction is detected by Wilson loops.
Outside the core of a finite-action vortex the London equation implies
\begin{equation}
  d\chi-2a\simeq 0 .
\end{equation}
Combining this with \eqref{eq:vortex-winding} gives the Abelian flux carried by the vortex:
\begin{equation}
  \oint_C a
  =\frac{1}{2}\oint_C d\chi
  =\pi m .
  \label{eq:z2-vortex-flux}
\end{equation}
Thus the vortex number in the fundamental-charge normalization is
\begin{equation}
  Q_v={1\over\pi}\int_{\Sigma} f = {1\over\pi}\int_{\Sigma}  da 
  = {1\over\pi}\oint_{\partial\Sigma=C}  a  = m,
  \quad
  f=da,
\end{equation}
so that a unit vortex has $Q_v=1$.
A Wilson loop of Abelian charge $q$ in the same normalization is
\begin{equation}
  W_q(C) :=\exp\left(iq\oint_C a\right) .
\end{equation}
If $C$ encloses a vortex of winding number $m$, then
\begin{equation}
  W_q(C)\longmapsto e^{i\pi q m}W_q(C)=(-1)^{qm}W_q(C) .
\end{equation}
Hence odd-charge probes, in particular the four-dimensional fundamental probe with $q=1$, detect the odd vortex by the nontrivial center phase, whereas even-charge probes do not.

If vortices and anti-vortices with $m=\pm1$ form a dilute Poisson gas of fugacity $\zeta$, their contribution to a large Wilson loop is
\begin{align}
  \langle W_q(C)\rangle_{\rm vort}
  &=
  \exp\left[
  \zeta \operatorname{Area}(C)(e^{i\pi q}-1)
  +
  \zeta \operatorname{Area}(C)(e^{-i\pi q}-1)
  \right]
  \\
  &=
  \exp\left[
  -2\zeta \operatorname{Area}(C)(1-\cos\pi q)
  \right] .
\end{align}
Thus the vortex-induced string tension $\sigma_q$ is
\begin{equation}
  \sigma_q=2\zeta(1-\cos\pi q) 
  = \begin{cases} 0 &(\text{q:even}) \\ 4\zeta \not=0 &(\text{q:odd})
  \end{cases} .
\end{equation}
It vanishes for even $q$ and is nonzero for odd $q$.
This is exactly the $\mathbb{Z}_2$ selection expected from the center of $SU(2)$.

The charge normalization fixed above is therefore the bridge between the $SO(3)$-reduced Abelian Higgs model and four-dimensional $SU(2)$ color confinement:
\begin{equation}
  \hbox{fundamental $SU(2)$ probe}
  \quad\longleftrightarrow\quad
  q=1,
  \quad
  \hbox{off-diagonal adjoint Higgs}
  \quad\longleftrightarrow\quad
  q=2,
\end{equation}
so that
\begin{equation}
\begin{aligned}
  \text{odd vortex monodromy of }h_f
  &\longleftrightarrow
  \text{nontrivial }\mathbb{Z}_2\text{ center phase} \\
  &\hspace{3.5em}\text{of a fundamental Wilson loop} .
\end{aligned}
\end{equation}
In the QRF language, \textbf{the Higgs phase provides a local color frame, but odd vortices obstruct its global lift to a single-valued fundamental frame.}
This obstruction is not seen by center-neutral operators, but it is seen by fundamental relational operators and fundamental Wilson loops.

\subsection{Reading the $AdS_2$ reduction back into four dimensions}

In the $SO(3)$-symmetric sector, the two-dimensional vortex number $Q_v$ is the topological number $Q_{4d}$ inherited from the four-dimensional symmetric instanton.  Schematically,
\begin{equation}
  Q_{4d}=\pm Q_v, \quad Q_v:= {1\over\pi}\int_{\mathbb{H}^2} da.
\end{equation}
Thus a two-dimensional vortex is not an artificial defect added to the theory; it is the reduced image of a four-dimensional topological configuration.  The QRF dictionary in this sector is
\begin{align}
  \text{$SO(3)$-symmetric instanton}
  &\longrightarrow \text{hyperbolic vortex},\\
  \text{phase of the reduced Higgs field}
  &\longrightarrow \text{local QRF frame},\\
  \text{vortex number}
  &\longrightarrow \text{center monodromy of the frame},\\
  \text{fundamental Wilson-loop sign}
  &\longrightarrow \text{detection of nontrivial center charge}.
\end{align}
This model gives a concrete calculation supporting the claim that
\textbf{a defect obstructs the global construction of the QRF frame} and that 
\textbf{the same obstruction appears as the area law of the center Wilson loop.}  
However, it is not a complete proof of confinement in the full four-dimensional theory, because the argument is restricted to the symmetric sector, the reduced model is Abelian, and the vortex-gas approximation is \textbf{semi-classical}.  Nevertheless it is a highly useful reduced laboratory in which QRFs, defects, center charge, boundary data, residual gauge symmetry restoration, and the Wilson-loop area law are present in the same equations.

\section{$SO(2)$-symmetric reduction to a three-dimensional $SU(2)$ gauge--Higgs model}
\label{sec:so2-su2-ads3}

We next turn to the $SO(2)$-symmetric reduction as described in Sec.~\ref{subsec:so2-reduction-general}.

\subsection{The Higgs direction as a QRF and the monopole obstruction}

In a region where $\Phi\neq0$, define the normalized Higgs direction as a QRF:
\begin{equation}
  \widehat{\Phi}(x)={\Phi(x)\over |\Phi(x)|}\in SU(2)/U(1)\simeq S^2.
  \label{eq:phihat-ads3}
\end{equation}
Figure~\ref{fig:QRF_AdS3_Higgs1_revised} represents this Higgs-direction QRF.

\begin{figure}[htbp]
  \centering
  \includegraphics[width=.30\linewidth]{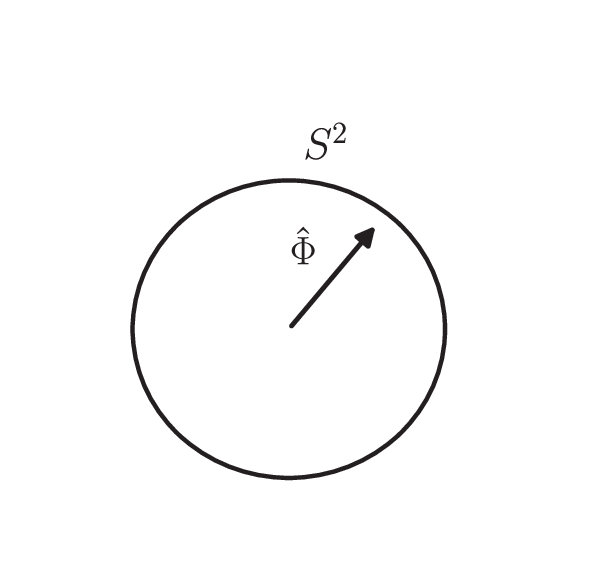}
  \caption{The normalized Higgs field $\widehat{\Phi}$ in the reduced $AdS_3$ or $\mathbb{H}^3$ theory.  It takes values in $SU(2)/U(1)\simeq S^2$ and supplies a local color direction, hence a partial QRF.}
  \label{fig:QRF_AdS3_Higgs1_revised}
\end{figure}

Locally one may choose $h(x)\in SU(2)$ such that
\begin{equation}
  \widehat{\Phi}(x)=h(x)T_3h(x)^{-1}, \quad h(x)\in SU(2).
  \label{eq:phihat-h-frame}
\end{equation}
This $h(x)$ is the \textbf{color frame} constructed from the Higgs field.  Under a gauge transformation $\Omega(x)\in SU(2)$, the color frame $h(x)$ transforms as
\begin{equation}
  h(x)\mapsto \Omega(x)h(x)e^{i\beta(x)T_3}, \ \Omega(x) \in SU(2),
\end{equation}
where the right multiplication $e^{i\beta(x)T_3}$ by $U(1)$ is the residual Abelian ambiguity.  
A fundamental field $\psi(x)$ with the transformation law $\psi(x)\mapsto \Omega(x)\psi(x)$ may be written relationally as
\begin{equation}
  \psi_h(x)=h(x)^{-1}\psi(x).
\end{equation}
The local $SU(2)$ gauge transformation  acting from the left $\Omega(x)$ has been removed, while the remaining components carry charges under the residual $U(1)$ acting from the right $e^{i\beta(x)T_3}$.  In this sense the Higgs direction Abelianizes the color frame.

Now surround a magnetic monopole by a sphere $S_\infty^2$.  The \textbf{magnetic charge} (monopole number) $k$ is the degree of the map $\widehat{\Phi}:S_\infty^2\to SU(2)/U(1)\simeq S^2$,
\begin{equation}
  k :={1\over8\pi}\int_{S_\infty^2}\widehat{\Phi}\cdot(d\widehat{\Phi}\wedge d\widehat{\Phi}).
  \label{eq:monopole-charge-phihat}
\end{equation}
For non-vanishing magnetic charge $k\neq0$, no globally smooth $h:S_\infty^2\to SU(2)$ satisfying \eqref{eq:phihat-h-frame} exists.  Indeed, such an $h$ would trivialize the \textbf{pullback} of the \textbf{Hopf fibration}:
\begin{equation}
  U(1)\longrightarrow SU(2)\longrightarrow SU(2)/U(1)\simeq S^2,
  \label{Hopf-fibration}
\end{equation}
but the \textbf{first Chern number} of the \textbf{pulled-back $U(1)$ bundle} is precisely $k$.  Hence
\begin{equation}
  \boxed{
  \text{monopole charge } k\neq0
  \quad\Longleftrightarrow\quad
  \text{nonexistence of a global smooth color QRF frame}.}
  \label{eq:monopole-qrf-obstruction}
\end{equation}
The nontrivial object is the magnetic/topological obstruction of the frame $\widehat{\Phi}$.
The monopole number $k$ is the obstruction to choosing a globally smooth color-axis QRF.  
It should be remarked that this fact alone cannot be the proof of confinement.
The area law requires an additional dynamical statement: a monopole--anti-monopole ensemble must disorder the dual photon or, more generally, the long-distance frame comparison, as supplemented in the final part of this section.

\begin{figure}[htbp]
  \centering
\includegraphics[width=.30\linewidth]{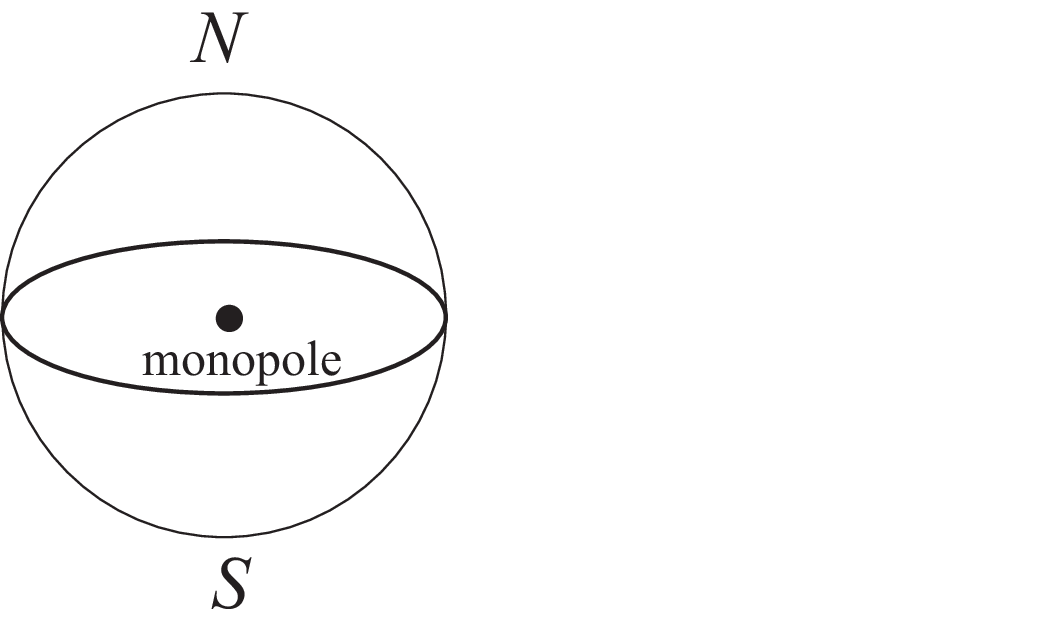}
  \caption{A magnetic monopole and the northern and the southern semi-sphere patches.}
  \label{fig:monopole}
\end{figure}

For a unit monopole, take
\begin{equation}
  \widehat{\Phi}(x)=\hat r^AT_A.
\end{equation}
On a northern patch choose
\begin{equation}
  h_N(\theta,\varphi)=e^{-i\varphi T_3}e^{-i\theta T_2}e^{i\varphi T_3},
\end{equation}
so that 
\begin{equation}
h_N T_3 h_N^{-1}=\hat r^AT_A .
\end{equation}
A southern patch $h_S$ is needed 
so that 
\begin{equation}
h_S T_3 h_S^{-1}=\hat r^AT_A .
\end{equation}
Thus, on the equator $\theta=\pi/2$ the \textbf{transition function} is
\begin{equation}
  h_S^{-1}h_N=e^{2i\varphi T_3}.
\end{equation}
The winding of this transition function is the monopole charge.

The \textbf{Berry connection} of the upper component in the fundamental representation,
\begin{equation}
  a_N=-i\langle +|h_N^{-1}dh_N|+\rangle,
\end{equation}
may be chosen as
\begin{equation}
  a_N={1\over2}(1-\cos\theta)d\varphi,
  \quad
  f=da_N={1\over2}\sin\theta\,d\theta\wedge d\varphi,
\end{equation}
and therefore
\begin{equation}
  {1\over2\pi}\int_{S^2}f=1.
\end{equation}
This is the explicit calculation showing that \textbf{a single monopole prevents one from covering the whole sphere by a single smooth color frame.}
%See Appendix \ref{app:Hopf-fibration} for more details. 

In the static Bogomolny sector,
\begin{equation}
  \mathscr{F}_{\mathscr{A}}=*_{3}D_{\mathscr{A}}\Phi,
  \quad \partial_t\mathscr{A}_j=0,
  \quad D_0\Phi=0,
\end{equation}
so that the electric momenta vanish and the Gauss law is automatically satisfied.  This does not prove confinement.  It only says that \textbf{the Bogomolny monopole does not carry an independent bulk electric color charge.}

\subsection{Wilson lines as non-Abelian QRFs}

The reduced three-dimensional theory has a second, and in some respects more flexible, QRF.  One may use a Wilson line from a reference point $x_0$ to $x$,
\begin{equation}
  U(x;x_0,\gamma)=\mathcal{P}\exp\left(i\int_{\gamma:x_0\to x}\mathscr{A}\right),
  \label{eq:ads3-wilson-qrf}
\end{equation}
as a frame and transport local operators with it.  
The relational adjoint Higgs field is introduced as 
\begin{equation}
  \Phi_{\rm rel}(x)=U(x;x_0,\gamma)^{-1}\Phi(x)U(x;x_0,\gamma).
  \label{eq:ads3-rel-higgs}
\end{equation}
This construction is represented by  Fig.~\ref{fig:QRF_AdS3_Higgs2_revised}.

\begin{figure}[htbp]
  \centering
  \includegraphics[width=.42\linewidth]{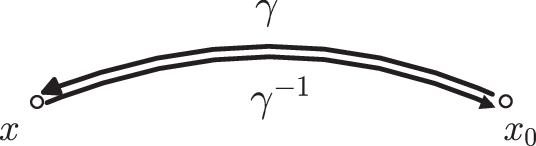}
  \caption{The relational Higgs field $\Phi_{\rm rel}(x)$ with a QRF given by a Wilson line $W(x;x_0;\gamma)$ along a path $\gamma:x_0\to x$ from a reference point to a bulk point in $\mathbb{H}^3\simeq AdS_3$.}
  \label{fig:QRF_AdS3_Higgs2_revised}
\end{figure}

For the $AdS_3$ interpretation it is natural to anchor the frame at the boundary.  Let $b\in\partial AdS_3$ be \textbf{a boundary reference point} and let $\gamma_b(X)$ be a chosen path to   \textbf{a bulk point} $X$.  Define the frame
\begin{equation}
  U_{\gamma_b}(X;\mathscr{A})
  =\mathcal{P}\exp\left(i\int_{\gamma_b(X)}\mathscr{A}\right), \ b \in \in\partial AdS_3 , \ X \in AdS_3.
  \label{eq:ads3-boundary-frame}
\end{equation}
Fig.~\ref{fig:QRF_AdS3_Wilson_revised} depicts this boundary-anchored Wilson-line frame.

\begin{figure}[htbp]
  \centering
  \includegraphics[width=.42\linewidth]{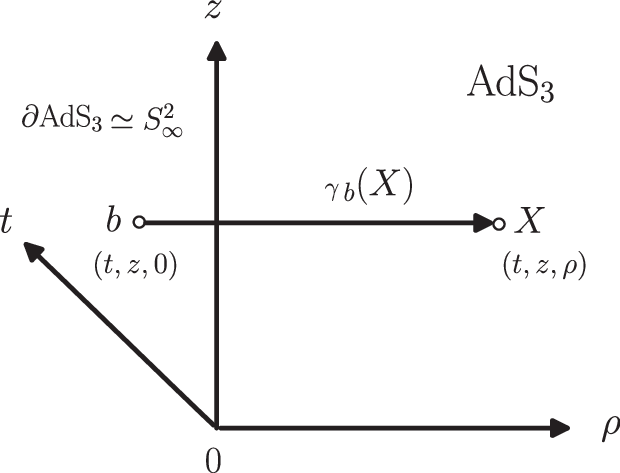}
  \caption{The Wilson line $W_{\gamma_b}(X)$ as a boundary-anchored frame along a path from a base point $b\in\partial AdS_3$ to a bulk point $X\in AdS_3$.  The base point is on the $(t,z)$ boundary plane at $\rho=0$.}
  \label{fig:QRF_AdS3_Wilson_revised}
\end{figure}

At the level of fields, the QRF reduction map is the contour-gauge-type transformation generated by
\begin{equation}
  g_{\gamma_b}(X)=U_{\gamma_b}(X;\mathscr{A})^{-1}.
\end{equation}
%It sends
The relational Yang--Mills field, the adjoint scalar field and the fundamental scalar field are respectively defined by
\begin{align}
  \mathscr{A}_\mu^{(\gamma)}(X)
  &=g_{\gamma_b}(X)\mathscr{A}_\mu (X) g_{\gamma_b}(X)^{-1}
  +ig_{\gamma_b}(X) \partial_\mu g_{\gamma_b}(X)^{-1},\\
  \Phi^{(\gamma)}(X)&=g_{\gamma_b} (X)\Phi (X) g_{\gamma_b}(X)^{-1} ,
  \quad
  \Psi^{(\gamma)}(X)=g_{\gamma_b}(X) \Psi(X) .
\end{align}
On each chosen path, the tangential component of the Yang--Mills field to the curve $\gamma$ vanishes:
\begin{equation}
  \dot\gamma^\mu\mathscr{A}_\mu^{(\gamma)}(X)=0.
\end{equation}
For the radial path $\gamma_{b}:b=(t,z,0)\to X=(t,z,\rho)$, this gives the radial gauge:
\begin{equation}
  \mathscr{A}_\rho^{(\gamma)}(X)=0.
\end{equation}
The relational matter fields are then
\begin{equation}
  \Phi_{\rm rel}(X)=U_{\gamma_b}(X)^{-1}\Phi(X)U_{\gamma_b}(X),
  \quad
  \Psi_{\rm rel}(X)=U_{\gamma_b}(X)^{-1}\Psi(X).
\end{equation}
Thus \textbf{the non-Abelian local orientation in the bulk is absorbed into the frame side}; loop holonomies, boundary values, and dressed gauge-invariant combinations remain.

\subsection{Gauss law, boundary holonomy, and boundary charge on $AdS_3$}

The canonical momenta are obtained from \eqref{eq:so2-reduced-action-general}%eq:ads3-reduced-action}  
\begin{equation}
  \Pi^j={2\pi\ell_{S^1}\over g_4^2}\sqrt{-g}\,\mathscr{F}^{j0},
  \quad
  \Pi_\Phi={2\pi\ell_{S^1}\over g_4^2}\sqrt{-g}\,\mathscr{D}^0\Phi.
\end{equation}
The Gauss constraint is given by
\begin{equation}
  \mathcal{G}:=\mathscr{D}_j\Pi^j+[\Phi,\Pi_\Phi]=0.
  \label{eq:ads3-gauss}
\end{equation}
If this generator is smeared with a Lie-algebra-valued test function $\omega$, then integration by parts gives
\begin{equation}
  G[\omega]
  =-\int_{\Sigma_t}d^2x\,\mathrm{tr}\{(\mathscr{D}_j\omega)\Pi^j\}
  +\int_{\Sigma_t}d^2x\,\mathrm{tr}\{\omega[\Phi,\Pi_\Phi]\}
  +\oint_{\partial\Sigma_t}ds_j\,\mathrm{tr}(\omega\Pi^j).
  \label{eq:ads3-gauss-boundary-parts}
\end{equation}
If $\omega$ vanishes on $\partial\Sigma_t$, this is a constraint and annihilates physical states.  If $\omega$ does not vanish at the boundary, the last term remains as a boundary charge,
\begin{equation}
  Q_\partial[\omega]=\oint_{\partial\Sigma_t}ds_j\,\mathrm{tr}(\omega\Pi^j).
  \label{eq:ads3-boundary-charge}
\end{equation}
Therefore, as in the general Yang--Mills discussion, \textbf{the Gauss law removes local bulk color charge but does not erase boundary flux.}

At the conformal boundary $\rho\to 0$, it is pointed out in \cite{Kondo:2025symmetric} that the reduced $SU(2)$ field is Abelianized to its maximal-torus component.  In a boundary coordinate system $(t,z)$ one has schematically
\begin{equation}
  \mathscr{A}_t(t,z,\rho)\to {\sigma_3\over2}a_t(t,z),
  \quad
  \mathscr{A}_z(t,z,\rho)\to {\sigma_3\over2}a_z(t,z),
\end{equation}
and hence
\begin{equation}
  \mathscr{F}_{tz}(t,z,\rho)\to {\sigma_3\over2}f_{tz}(t,z),
  \quad
  f_{tz} :=\partial_ta_z-\partial_za_t.
\end{equation}
A fundamental Wilson loop on the boundary reduces to
\begin{align}
  W_C[\mathscr{A}]
  &={1\over2}\mathrm{tr}_F \left\{ \,\mathcal{P}
  \exp\left(i\oint_Cdx^\mu\mathscr{A}_\mu\right) \right\} \nonumber\\
  &\longrightarrow {1\over2}\mathrm{tr}_F \left\{ \exp\left(i{\sigma_3\over2}\oint_Ca\right) \right\} 
  =\cos\left({1\over2}\oint_Ca\right) 
  =\cos\left({1\over2}\int_{\Sigma:\partial\Sigma=C}f\right).
  \label{eq:ads3-boundary-wilson}
\end{align}
\textbf{This is the boundary holonomy that survives after the bulk orientation has been moved to the QRF frame.}

\subsection{Monopole gas and Wilson-loop area law}

This section gives a flat-space-inspired semiclassical scenario for quark confinement in the sense of Wilson-loop area law.
The equations below are understandable as an analogy with the flat three-dimensional Polyakov mechanism, but they are not a derivation on $H^3$. 
We will actually derive the dualization and saddle on $H^3$ in a subsequent paper. 

A single monopole exhibits the obstruction \eqref{eq:monopole-qrf-obstruction}, but a single monopole is not by itself an area-law derivation.  To obtain the area law one needs an ensemble of monopoles and antimonopoles.  In the Abelianized low-energy description, as in Polyakov's three-dimensional mechanism \cite{Polyakov:1977}, the monopole gas produces an effective \textbf{dual-photon} $\sigma$ action up to convention-dependent numerical factors:
\begin{equation}
  S_{\rm dual}=\int d^3x\left[
  {g_3^2\over32\pi^2}(\partial\sigma)^2-2\zeta\cos\sigma
  \right],
  \label{eq:dual-photon-action}
\end{equation}
where $\zeta$ is the \textbf{monopole fugacity}.  The cosine term gives a mass to the dual photon.  A fundamental Wilson loop imposes a discontinuity of $\sigma$ across a spanning surface $\Sigma$; the expectation value is controlled by the tension $T_{\rm DW}$ of the corresponding \textbf{domain wall},
\begin{equation}
  \langle W_{\rm fund}(C)\rangle
  \sim\exp[-T_{\rm DW}\operatorname{Area}(\Sigma)].
  \label{eq:ads3-monopole-area-law}
\end{equation}
In this sense, we find
\begin{align}
  \text{monopole gas}
  \Longrightarrow
  \text{dual-photon mass}
  \Longrightarrow
  \text{Wilson-loop area law}.
\end{align}

For the QRF interpretation, the important statement is the combined one:
\begin{align}
%\begin{array}{c}
  &\text{$SO(2)$-symmetric instanton} \nonumber\\
  &\longrightarrow
  \text{hyperbolic monopole}\nonumber\\ %\begin{align}0.2em]
  &\longrightarrow
  \text{nontrivial QRF frame bundle}\nonumber\\ %\begin{align}0.2em]
  &\longrightarrow
  \text{boundary holonomy and Wilson-loop area law in the monopole ensemble}.
%\end{array}
\label{eq:so2-final-correspondence}
\end{align}
The conclusion is sector-limited but precise.  In the $\mathbb{H}^3/AdS_3$ reduction of circle-symmetric instantons, defects obstruct the global construction of the color QRF frame; when the corresponding defect ensemble is included, the same structure yields boundary holonomy disorder and an area law.  This does not yet prove confinement in the whole four-dimensional Yang--Mills theory, but it gives a concrete and geometrically controlled model in which confinement, QRF obstruction, Gauss law, and boundary data are tied together.

\section{Four-dimensional Yang--Mills theory and color-direction fields as QRFs}
\label{sec:four-dimensional-color-direction}

The reduced examples discussed above have a common lesson.  A colored field is never observed as an absolute component.  It becomes meaningful only after one supplies a color frame, a dressing, or a boundary reference system.  In the two-dimensional Abelian model the frame was an exponential of the gauge field.  In the $SO(3)$-symmetric reduction the Higgs phase of the reduced $U(1)$ model supplied a phase frame and vortices obstructed its globalization.  
In the $SO(2)$-symmetric reduction, on the other hand, the adjoint scalar of the hyperbolic monopole supplied an $SU(2)/U(1)$ color axis, and monopoles obstructed its globalization.  The natural question is how the same structure can be formulated in the full four-dimensional Yang--Mills theory, where no elementary Higgs field is present.

\subsection{Pure Yang--Mills theory}

The proposal is to use \textbf{a color-direction field} $\bm{n}(x)$ extracted from the Yang--Mills field itself.
See e.g. \cite{Kondo:2014sta} for a review.
For $SU(2)$ one introduces a unit field $\bm{n}(x)$ 
\begin{equation}
  \bm{n}(x)=n^A(x)T_A,
  \quad
  2\operatorname{tr}\bm{n}(x)^2=1,
  \label{eq:color-direction}
\end{equation}
which transforms according to the adjoint representation of the gauge group $G$ under the gauge transformation:
\begin{equation}
  \bm{n}(x)\mapsto \bm{n}^\Omega(x)=\Omega(x)\bm{n}(x)\Omega(x)^{-1} .
  \label{eq:color-direction-transform}
\end{equation}
Equivalently, at least locally, the \textbf{color direction field} $\bm{n}(x)$ is written as 
\begin{equation}
  \bm{n}(x)=h_{\bm{n}}(x)T_3h_{\bm{n}}(x)^{-1},
  \quad
  h_{\bm{n}}(x)\in SU(2) .
  \label{eq:n-as-frame}
\end{equation}
The field $h_n(x)$ is \textbf{a local color frame}, whereas $\bm{n}(x)$ is only a partial frame because it fixes the Cartan axis but leaves the right action of the stabilizer $U(1)$ unfixed:
\begin{equation}
  h_{\bm{n}}(x)\sim h_{\bm{n}}(x)e^{i\alpha(x)T_3} .
  \label{eq:partial-frame-u1}
\end{equation}
Thus $\bm{n}$ is not itself a complete $SU(2)$ QRF.  
It is \textbf{a partial QRF} valued in the coset $SU(2)/U(1)$.  This residual $U(1)$ is not a defect of the construction; it is precisely the remnant Abelian frame in which monopoles and compact Abelian fluxes can appear.

A convenient dynamical way to extract $\bm{n}$ is the \textbf{Cho--Duan--Ge--Faddeev--Niemi (CDGFN) decomposition} \cite{Cho:1980,DuanGe:1979,FaddeevNiemi:1999,Kondo:2014sta}.  With the convention that the generators are Hermitian, the Yang--Mills field $\mathscr{A}_\mu(x)$ is decomposed in the gauge covariant way as
\begin{equation}
  \mathscr{A}_\mu(x)=\mathscr{V}_\mu(x)+\mathscr{X}_\mu(x),
  \label{eq:cdg-decomposition}
\end{equation}
where the restricted field $\mathscr{V}_\mu(x)$ and the remining field $\mathscr{X}_\mu(x)$ are respectively defined by
\begin{align}
  \mathscr{V}_\mu(x)
  &:=c_\mu(x) n(x)+ig_{\rm YM}^{-1}[\bm{n}(x),\partial_\mu \bm{n}(x)],
  \quad
  c_\mu(x):=2\operatorname{tr}(\bm{n}(x)\mathscr{A}_\mu(x)),
  \label{eq:restricted-field}\\
  \mathscr{X}_\mu(x)
  &:=-ig_{\rm YM}^{-1}[\bm{n}(x),\mathscr{D}_\mu[\mathscr{A}]\bm{n}(x)].
  \label{eq:remaining-field}
\end{align}
The defining equations are given by
\begin{equation}
  \mathscr{D}_\mu[\mathscr{V}]\bm{n}(x)=0,
  \quad 
    \quad
  2\operatorname{tr}(\bm{n}(x)\mathscr{X}_\mu(x))=0  .
  \label{eq:n-covariantly-constant-v}
\end{equation}
Thus \textbf{$\mathscr{V}$ is the connection that transports the partial color frame $\bm{n}$ without rotating it out of the Cartan axis, while $\mathscr{X}$ is the component that changes the frame.} 
A standard \textbf{reduction condition} is obtained by minimizing the   functional of the remaining part:
\begin{equation}
  \mathcal{R}[\mathscr{A},\bm{n}]
  =\int d^4x\,2\operatorname{tr}\{(\mathscr{D}_\mu[\mathscr{A}]\bm{n})\,(\mathscr{D}^\mu[\mathscr{A}]\bm{n})\}
  =g_{\rm YM}^2\int d^4x\,2\operatorname{tr}\,(\mathscr{X}_\mu\mathscr{X}^\mu)
  \label{eq:reduction-functional}
\end{equation}
with respect to $\bm{n}$.  The corresponding Euler equation can be written as
\begin{equation}
  [\bm{n}(x),\mathscr{D}_\mu[\mathscr{A}]\mathscr{D}^\mu[\mathscr{A}]\bm{n}(x)]=0,
  \label{eq:reduction-condition}
\end{equation}
or, equivalently, $\mathscr{D}^\mu[\mathscr{V}]\mathscr{X}_\mu(x)=0$.  This condition fixes the additional local rotations introduced by the color field $\bm{n}(x)$ without fixing the original Yang--Mills gauge degrees of freedom.  It is therefore naturally interpreted as a prescription for constructing an intrinsic QRF $h_{\bm{n}}[\mathscr{A}]$ from the gauge orbit of $\mathscr{A}$.

In this language a colored quantity is not made gauge invariant by declaring a component to be absolute.  It is made relational.  For a matter field in the fundamental representation one would write, locally,
\begin{equation}
  \Psi_{\bm{n}}(x)=h_{\bm{n}}(x)^{-1}\psi(x) .
  \label{eq:fundamental-relational-n}
\end{equation}
For the gluonic field strength one writes
\begin{equation}
  \mathscr{F}_{\bm{n},\mu\nu}(x)=h_{\bm{n}}(x)^{-1}\mathscr{F}_{\mu\nu}(x)h_{\bm{n}}(x) .
  \label{eq:gluon-relational-n}
\end{equation}
These expressions are gauge invariant under the simultaneous transformation of the object and the frame, but they still carry the right-frame index associated with $h_{\bm{n}}$.  If the QRF is physically readable, the components of \eqref{eq:fundamental-relational-n} or \eqref{eq:gluon-relational-n} have operational meaning.  If the QRF is not globally definable or if its right-frame orientation is averaged over, only singlet combinations survive.

The restricted field $\mathscr{V}_\mu$ has an Abelianized curvature.  From \eqref{eq:restricted-field} one obtains
\begin{equation}
  \mathscr{F}_{\mu\nu}[\mathscr{V}](x)=\bm{n}(x)\,G_{\mu\nu}(x), 
\label{eq:restricted-curvature}
\end{equation}
where $G_{\mu\nu}(x)$ is the gauge-invariant Abelian-like curvature given by
\begin{equation}
  G_{\mu\nu}(x)
  :=\partial_\mu c_\nu(x)-\partial_\nu c_\mu(x)
  +ig_{\rm YM}^{-1}2\operatorname{tr}\bigl(\bm{n}(x)[\partial_\mu \bm{n}(x),\partial_\nu \bm{n}(x)]\bigr) .
  \label{eq:gauge-invariant-abelian-field}
\end{equation}
The second term is the topological curvature of the color frame.  It is nonzero when the map $\bm{n}:M\to SU(2)/U(1)\simeq S^2$ has nontrivial winding.  The magnetic current is obtained as the violation of the Bianchi identity up to the normalization:
\begin{equation}
  k^\mu(x)={1\over2}\epsilon^{\mu\nu\rho\sigma}\partial_\nu G_{\rho\sigma}(x) .
  \label{eq:magnetic-current-n}
\end{equation}
If $\bm{n}$ were globally smooth and single-valued, the topological part of $G$ would be locally exact and $k^\mu$ would vanish away from singularities.  Therefore, \textbf{nonzero $k^\mu$  measures precisely the obstruction to globalizing the color QRF.}

The connection with the Wilson loop is also direct.  The non-Abelian Stokes theorem \cite{Kondo:2008} expresses a Wilson loop as a surface integral of a gauge-invariant Abelian-like curvature $G_{\mu\nu}$ built from $\bm{n}$, which has the  schematic expression:
\begin{equation}
  W_C[\mathscr{A}]
  =
  \int \mathcal{D}\mu_\Sigma [\bm{n}]\,
  \exp\left
  \{ig_{\rm YM}\int_{\Sigma:\partial\Sigma=C}G\right\},
  \label{eq:non-abelian-stokes-qrf}
\end{equation}
where the symbol $\mathcal{D}\mu_\Sigma [\bm{n}]$ denotes the appropriate measure over color frames along the surface.  
In a restricted-field-dominance regime \cite{Kondo:2008,Kondo:2014sta}, $\mathscr{V}$ carries the long-distance confining field and $\mathscr{X}$ is massive or short-ranged.  
The Wilson loop then probes the flux of the QRF curvature $G$ and the defects of $\bm{n}$.

This gives a four-dimensional version of the mechanism seen in the reduced models:
\begin{equation}
\boxed{
\begin{array}{c}
  \text{dynamical color-direction field }\bm{n}[\mathscr{A}]\text{ as a partial QRF}\\ %begin{align}0.2em]
  \Longrightarrow  \text{restricted connection }\mathscr{V}\text{ and curvature }G\\ %begin{align}0.2em]
  \Longrightarrow \text{monopole and vortex defects as obstructions to the QRF}\\ %begin{align}0.2em]
  \Longrightarrow \text{loss of globally readable non-singlet relational color} .
\end{array}}
\label{eq:four-dimensional-qrf-chain}
\end{equation}
The advantage of this formulation is that it does not require adding an external Higgs field.  The reference frame is extracted from the gauge field itself.  This is why the construction is potentially applicable to the pure Yang--Mills theory.

There are, however, important qualifications.  

\noindent
(i) First, the color direction field $\bm{n}[\mathscr{A}]$ defined by \eqref{eq:reduction-functional} may suffer from Gribov-copy ambiguities.  In QRF language, different local minima correspond to different candidate frames.  A confinement criterion cannot depend on an arbitrary choice among them.  What should be physical is either a frame-independent relational statement or an average over all admissible frames.  

\noindent
(ii) Second, $\bm{n}$ is only a partial QRF; a complete $SU(2)$ frame requires the lifting \eqref{eq:n-as-frame} and the treatment of the residual $U(1)$ redundancy.  

\noindent
(iii) Third, a defect of $\bm{n}$ is not by itself a proof of an area law.  One still needs a dynamical statement: condensation, percolation, a monopole gas, a center-vortex ensemble, a BF-type effective theory, or another mechanism that produces a nonzero string tension.

Nevertheless the QRF reading is powerful.  It makes precise what the color-direction field measures.  It is not an absolute color direction, and its order or disorder is not a violation of Elitzur's theorem \cite{Elitzur:1975}.  It is a dynamical frame used to compare color orientations.  When this frame is globally coherent, relational color can have long-distance meaning.  When this frame is disordered by defects, or when the frame orientation is averaged over, non-singlet relational observables disappear from the physical readout.  This is the four-dimensional version of color confinement in the QRF language.

\subsection{Including adjoint-scalar and fundamental-scalar examples}
\label{subsec:color-direction-examples}

It is useful to compare the pure Yang--Mills construction with gauge--scalar models.  Suppose first that an adjoint scalar field $\phi(x)$ is present.  The normalized scalar direction
\begin{equation}
  \hat\phi(x)={\phi(x)\over \sqrt{2\operatorname{tr}(\phi(x)^2)}}
\end{equation}
is another \textbf{partial color frame}.  The gauge-field-derived color direction $\bm{n}(x)$ and the scalar direction $\hat\phi(x)$ can then be compared by the gauge-invariant relational quantity
\begin{equation}
  q(x)=2\operatorname{tr}\{\bm{n}(x)\hat\phi(x)\} .
  \label{eq:adjoint-relational-q}
\end{equation}
On a lattice this becomes an order parameter of the schematic form \cite{ShibataKondo:2024adjoint}
\begin{equation}
  Q={1\over N_{\rm site}}\sum_x {1\over2}\operatorname{tr}(n_x\phi_x) .
  \label{eq:lattice-q-qrf}
\end{equation}
In QRF terms, $Q$ is the average alignment of two frames: the frame $\bm{n}$ extracted from the gauge field and the frame $\phi$ supplied by the adjoint scalar.  A nonzero value of $\langle |Q|\rangle$ is not an expectation value of a gauge-variant scalar.  It is a relational order parameter and is therefore compatible with Elitzur's theorem \cite{Elitzur:1975}.

This interpretation is also useful in lattice gauge--scalar models \cite{ShibataKondo:2024adjoint,IkedaKatoKondoShibata:2024fundamental}.  It explains why a confinement region can split into subregions.
%without contradicting the usual statement that local gauge-invariant observables may not see a sharp Higgs-confinement boundary.  
One may have
\begin{equation}
\begin{array}{rcl}
  \text{frame-disordered confinement}&:& \langle |Q|\rangle\simeq0,\\ %begin{align}0.2em]
  \text{frame-locked confinement}&:& \langle |Q|\rangle>0, \text{ but residual compact }U(1)\text{ still confines},\\ %begin{align}0.2em]
  \text{Higgs or Coulomb-like regime}&:& \langle |Q|\rangle>0, \text{residual long-distance gauge force is not confining} .
\end{array}
\label{eq:three-region-qrf}
\end{equation}
The new information is not simply whether a Wilson loop confines fundamental probes.  It is which QRF structure realizes confinement and which topological defects dominate.

With a fundamental scalar $\Theta(x)$, one can form the relational direction
\begin{equation}
  r(x)=\Theta(x)^{-1}n(x)\Theta(x) .
  \label{eq:fundamental-scalar-relational-direction}
\end{equation}
The averaged quantity
\begin{equation}
  R={1\over {\rm Vol}(M)}\int_M d^dx\,r(x)
  \label{eq:fundamental-scalar-r}
\end{equation}
measures the color direction of the gauge-field QRF as seen from the scalar-field QRF.  In a Higgs regime, the two frames can be locked at long distances.  In a confinement regime, local relational directions may be defined, but they fail to become a globally stable reference system.  This is precisely the type of distinction that is invisible to strictly local gauge-invariant observables but visible to semi-local or nonlocal relational observables.

The lesson for pure four-dimensional Yang--Mills theory is that the absence of an elementary scalar does not eliminate the QRF question.  
It shifts the burden to the dynamical construction of $\bm{n}[\mathscr{A}]$ or $h[\mathscr{A}]$.  The color frame must be built out of the gauge field itself, and confinement is then the statement that this intrinsic frame cannot be used to define isolated non-singlet relational color at long distance.

%\subsection{The QRF criterion is attractive, but the distinction from the Higgs phase should be made more precise}
%The criterion proposed in the paper asks whether there exists a physically admissible long-distance color QRF such that a non-singlet relational observable has a frame-independent isolated pole. 
%On the other hand, the paper states that in a Higgs regime a scalar condensate or a boundary condition selects a frame, so that a relational non-singlet pole remains as a conditional observable. 
%This part should be understood carefully so as not to conflict with Fradkin--Shenker complementarity \cite{Fradkin:1978dv}.

This viewpoint is useful, but it must be formulated carefully in order not to conflict with Fradkin--Shenker complementarity \cite{Fradkin:1978dv}. 
In a gauge theory with a fundamental Higgs field, the Higgs and confinement regions need not be separated by a thermodynamic singularity in local gauge-invariant observables. 
The QRF criterion should therefore be understood as an operational and semi-local distinction depending on the availability of a long-distance reference frame.

In particular, in a lattice gauge theory with a fundamental Higgs field, the Higgs and confinement regions cannot in general be separated by a thermodynamic singularity in local gauge-invariant observables. 
Therefore, the QRF criterion is not a universal local gauge-invariant phase distinction, but an operational and semi-local distinction relative to a chosen long-distance reference frame.

\section{Wilson-loop area law as a QRF statement}
\label{sec:area-law-qrf}

The Wilson loop is normally introduced as a gauge-invariant operator measuring the response of the vacuum to an external static color source.  For a large rectangular loop $C=L\times T$, the area law of the Wilson loop average 
\begin{equation}
  \langle W(C)\rangle\sim e^{-\sigma LT}
  \label{eq:area-law-rectangular}
\end{equation}
implies a linear potential $V(L)\sim\sigma L$.  
This is the standard quark-confinement interpretation due to Wilson \cite{Wilson:1974sk}.  
The QRF viewpoint does not deny this interpretation.  
%The QRF viewpoint does not replace this interpretation.  
It adds a second reading: a Wilson loop measures the self-consistency of a color frame transported around a closed curve.

Indeed, the parallel transporter
\begin{equation}
  U_\Gamma(y,x)=\mathrm{P}\exp\left(ig_{\rm YM}\int_\Gamma \mathscr{A}\right)
  \label{eq:open-transporter-area-section}
\end{equation}
compares the color frame at $x$ with the color frame at $y$ along the path $\Gamma$.  An open Wilson-line-dressed mesonic operator
\begin{equation}
  M_\Gamma(x,y)=\bar\psi(y)U_\Gamma(y,x)\psi(x)
  \label{eq:meson-relational-area}
\end{equation}
is therefore a relational comparison of two colored endpoints.  A closed Wilson loop is the special case in which the comparison is brought back to the initial frame and traced.  In this sense
\begin{equation}
  W(C)= {1\over N}\operatorname{tr}\,\left\{ \mathrm{P}\exp\left(ig_{\rm YM}\oint_C\mathscr{A}\right) \right\}
  \label{eq:wilson-loop-area-section}
\end{equation}
asks whether the boundary frame on $C$ can be transported consistently through a complete circuit.

If the vacuum supports a stable long-distance color frame, this comparison is not strongly degraded by enlarging the loop. 
%degrad=rekka suru 
Perimeter-law behavior is then natural, because only fluctuations near the contour contribute to the loss of overlap.  In a confining vacuum the situation is different.  The interior of the spanning surface $\Sigma$ is filled with fluctuations, monopoles, center vortices, or other effective topological defects that disturb the relational comparison of color frames.  If each cell of area $a^2$ contributes an average overlap factor $\kappa$ ($0<\kappa<1$), one obtains
\begin{equation}
  \langle W(C)\rangle
  \simeq \prod_{\hbox{cells in }\Sigma}\kappa
%  =\prod_{\hbox{cells in }\Sigma} \exp (\ln \kappa )
  = \kappa^{\operatorname{Area}(\Sigma)/a^2}
  = \exp (\ln \kappa )^{\operatorname{Area}(\Sigma)/a^2}
  =\exp\left[- |\log\kappa|{\operatorname{Area}(\Sigma) \over a^2} \right] .
  \label{eq:cellular-area-law}
\end{equation}

\begin{figure}[htbp]
  \centering
\includegraphics[width=.30\linewidth]{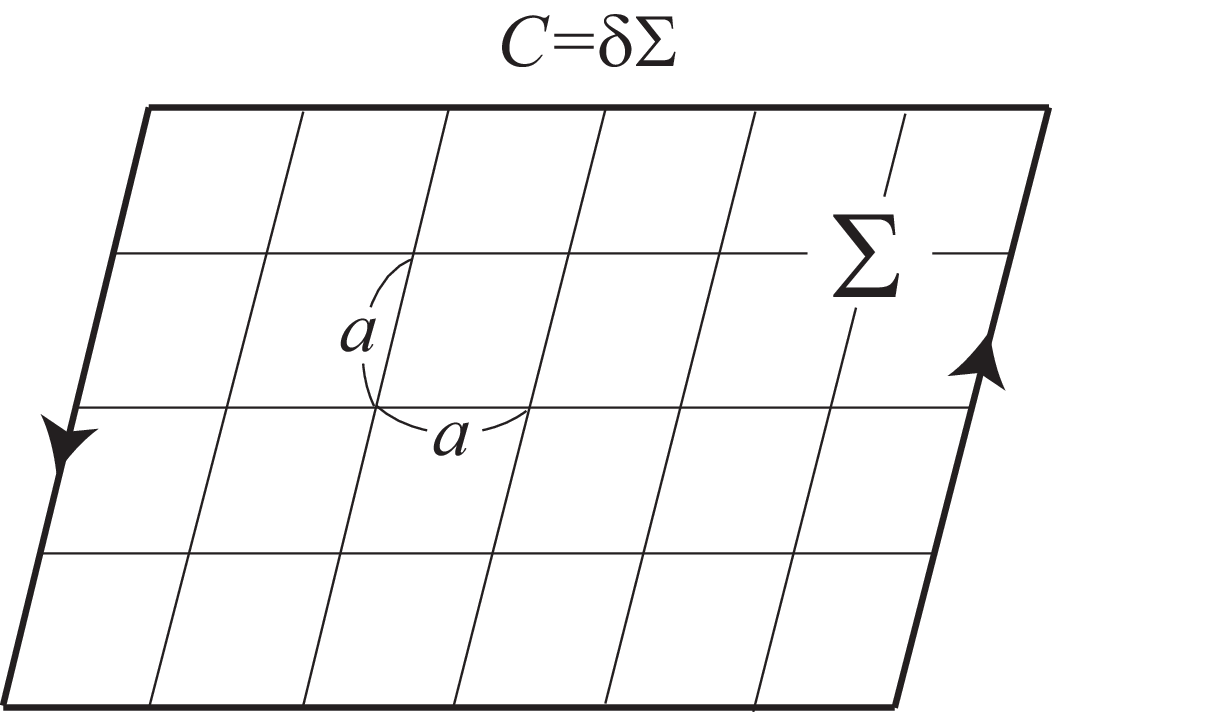}
  \caption{The surface $\Sigma$: $C=\partial \Sigma$ surrounded by the Wilson loop $C$ is divided into a collection of a small area $a^2$.}
  \label{fig:area-law}
\end{figure}

This is the simplest \textbf{QRF interpretation of the area law}: 
\textbf{
%QRF %does not by itself compute $\sigma_R$.  It 
%explains what the area law means operationally: 
a loss of the area-density of relational color-frame information.}
\begin{equation}
 \text{area law}\Longleftrightarrow\text{information loss}
\end{equation}
\textbf{The string tension is the rate at which relational color information is lost per unit area.}
\begin{equation}
 \text{string tension}
 =\text{loss rate of relational color-frame information per unit area}
\end{equation}
The Wilson-loop area law admits a QRF interpretation as loss of relational color-frame coherence per unit area.
However, at present it is not a quantitative statement and it remains a physical metaphor in the present form, 
since $\kappa$ has not been defined as an information-theoretic quantity. 
To make the statement quantitative, one could model parallel transport through a fluctuating vacuum as a quantum channel and define a fidelity or depolarizing parameter between boundary color frames.
The parameter $\kappa$ has not been defined as a fidelity, relative entropy, mutual information, channel depolarization parameter, or another precise information-theoretic quantity.
To make this statement quantitative, one should define, for example,
\begin{itemize}
\item the fidelity between two boundary-frame states,
\item the mutual information between color frames,
\item the relative entropy before and after defect averaging,
\item the depolarizing parameter of the parallel-transporter channel,
\end{itemize}
and show the relation to $-\log\kappa$. If this is not done, the statement should be explicitly described as a ``heuristic interpretation''.

The same logic can be made more topological in a center-vortex picture \cite{tHooft:1978,Greensite:2003bk}.  Suppose a center vortex piercing $\Sigma$ multiplies a fundamental Wilson loop by a center element $z=e^{2\pi i/N}$.  If each plaquette of the surface is pierced with probability $p$, then the Wilson loop average  for $SU(2)$, i.e., $N=2$  behaves schematically as
\footnote{
For the general $SU(N)$, i.e., $N>2$
the Wilson-loop average is not guaranteed to be real valued. 
If only a center phase $z \in \mathbb{C}$ with one orientation  is included, then
$
 [(1-p)+pz]^{A/a^2}
$
is generally complex. 
In a CP-symmetric or orientation-symmetric ensemble, $z$ and $z^{-1}$ should be included simultaneously. For example, one may introduce a probability distribution $P(z)$ and write
\begin{equation}
 \langle W_R(C)\rangle
 \simeq
 \left[(1-p)+p\sum_{z\in Z_N}P(z)\omega_R(z)\right]^{A/a^2}
 =\exp[-\sigma_R\operatorname{Area}(\Sigma)+i\theta_R \operatorname{Area}(\Sigma)] ,
\end{equation}
with $P(z)=P(z^{-1})$ imposed. Ultimately it is better to write something like
$
 \sigma_R=-\frac{1}{a^2}
 \log\left|(1-p)+p\langle\omega_R(z)\rangle\right|.
$
}
\begin{equation}
  \langle W_{\rm fund}(C)\rangle
  \sim \bigl((1-p)\cdot 1+p\cdot z\bigr)^{\operatorname{Area}(\Sigma)/a^2}
=\exp[-\sigma_R\operatorname{Area}(\Sigma)].
%=\exp[-\sigma_{\rm vort}\operatorname{Area}(\Sigma)+i\theta_{\rm vort}\operatorname{Area}(\Sigma)] .
  \label{eq:center-vortex-area-law}
\end{equation}
The usual phrase is that the Wilson-loop phase is disordered by vortices.  \textbf{The QRF phrase is that the color frame cannot be globally extended through $\Sigma$ without encountering nontrivial monodromy.}  The two statements are not alternatives.  They are the same phenomenon expressed at different levels: phase disorder in the line operator is frame disorder in the relational description.

A monopole or dual-superconductor mechanism \cite{Nambu:1974,tHooft:1975,Mandelstam:1976} admits an analogous reading.  In the Abelianized low-energy theory, 
it has been shown by Polyakov \cite{Polyakov:1977} that monopole condensation gives a mass to the dual gauge field, or equivalently a dual photon.  The Wilson loop imposes a discontinuity or a domain wall on a spanning surface, and the wall tension gives the area law.  
\textbf{In the QRF language the same domain wall is the cost of enforcing a color-frame comparison across a region in which the preferred frame is not globally coherent.  
The flux tube is the physical remnant of the attempt to keep a non-singlet relational object isolated.}

This reading clarifies the relation between the area law and color confinement.  In pure Yang--Mills theory the fundamental Wilson-loop area law strongly indicates that an external fundamental color source cannot be isolated at finite energy.  QRF sharpens the statement as follows:
\begin{align}
%\boxed{
%\begin{array}{c}
  &\text{a fundamental source requires a dressing to a distant color frame}\nonumber\\%[0.2em]
  \Longrightarrow \quad  &\text{the dressing is a physical string in a confining vacuum}\nonumber\\%[0.2em]
  \Longrightarrow \quad &\text{the cost grows with separation or with the area swept by the loop} .
%\end{array}
%}
\label{eq:qrf-area-source-chain}
\end{align}
Thus the area law is not merely a force law.  It is the closed-loop trace of the failure to globalize non-singlet relational color.

At the same time, the area law and color confinement must not be identified too quickly.  The area law is a statement about external static probes and genuine line operators.  The absence of colored asymptotic states is a statement about the physical Hilbert space $\mathcal{H}_{\rm phys}$.  In pure Yang--Mills theory these two statements are expected to be closely related, but they are not logically identical.  When dynamical fundamental matter is present, the asymptotic Wilson-loop area law is replaced by string breaking, while the operational question of whether isolated colored particles exist remains meaningful.  QRF therefore separates the statements
\begin{align}
  &\text{quark (external-source) confinement: } 
  \text{the Wilson loop average} \langle W_{\rm fund}(C)\rangle\text{ obeys an area law},
  \label{eq:external-source-confinement}\\
  &\textbf{color confinement:}\nonumber\\
   &\textbf{no globally readable non-singlet relational state exists in the physical Hilbert space}~\mathcal{H}_{\rm phys}.
  \label{eq:color-confinement-hilbert}
\end{align}
The first is one diagnostic of the second in the pure theory; it is not a universal definition in all gauge theories with matter.

This also explains why the Kugo--Ojima criterion does not directly produce the Wilson-loop area law.  The Kugo--Ojima construction is a criterion on the physical state space and the global color charge.  Symbolically it asks whether
\begin{equation}
  Q^A|\mathrm{phys}\rangle=0
  \label{eq:kugo-state-space-area}
\end{equation}
holds in the BRST physical space.  It does not primarily ask how an extended line operator loses correlation with the area of a surface.  QRF lies between these two descriptions.  It keeps the state-space question, because it asks whether a non-singlet relational state can be read out physically.  But it also makes contact with Wilson loops, because \textbf{an open Wilson line is the transporter used to compare color frames and a closed Wilson loop is the trace of the consistency of that comparison.}

The conclusion is therefore balanced.  QRF by itself is not a dynamical derivation of the numerical string tension $\sigma$.  To compute $\sigma$ one still needs a mechanism: strong-coupling lattice expansion, center-vortex percolation, monopole condensation, dual superconductivity, a BF-type topological effective field theory, or another infrared dynamics.  What QRF supplies is the conceptual bridge
\begin{equation}
  \text{Wilson-loop area law}
  \quad\Longleftrightarrow\quad
  \text{loss of the area-density of relational color-frame information} .
  \label{eq:area-relational-bridge}
\end{equation}
This bridge is exactly what is missing if one compares the Wilson criterion only with the Kugo--Ojima state-space criterion.

\section{Comparison with generalized-symmetry diagnostics}
\label{sec:generalized-symmetry-comparison}

Generalized global symmetry gives \cite{Gaiotto:2014,BanksSeiberg:2010,KapustinSeiberg:2014} one of the sharpest modern formulations of confinement.  In pure $SU(N)$ Yang--Mills theory the electric center one-form symmetry $\mathbb{Z}_N^{(1)}$ acts on Wilson lines.  A Wilson line in representation $R$ carries the $N$-ality $q_R\in\mathbb{Z}_N$.  
\textbf{The confinement of external fundamental probes is naturally expressed by the unbroken realization of this one-form symmetry, equivalently by the area law for Wilson loops with $q_R\ne0$,}
\begin{equation}
  \langle W_R(C)\rangle\sim\exp[-\sigma_{q_R}\operatorname{Area}(C)],
  \quad q_R\ne0 .
  \label{eq:one-form-area-law}
\end{equation}
Generalized symmetry also organizes genuine line operators, line endability, boundary conditions, background two-form fields, anomaly inflow, and mixed anomalies.  These are genuine strengths.  They give topological constraints that are independent of a particular gauge choice and often survive strong coupling.

The QRF viewpoint is not a replacement for generalized symmetry.  It addresses a different but overlapping question.  
\textbf{Generalized symmetry asks how extended objects charged under higher-form symmetries transform.}  
\textbf{QRF asks whether a physical reference frame exists that allows a non-singlet color label to be read out at all.} 
 The difference is most visible in the following points.

\noindent
(i) 
First, center one-form symmetry detects $N$-ality, not the full Lie-algebra-valued color representation.  
In fact, a fundamental Wilson line is charged under the center, but an adjoint Wilson line is center neutral.  Gluons are adjoint fields and therefore have zero $N$-ality.  The statement that isolated gluons do not appear as physical particles is not a direct consequence of center one-form symmetry alone.  In QRF language, the gluonic field strength becomes a relational object
\begin{equation}
  \mathscr{F}_{h,\mu\nu}(x)=h(x)^{-1}\mathscr{F}_{\mu\nu}(x)h(x) .
  \label{eq:gluon-qrf-generalized}
\end{equation}
If the frame $h$ is not globally readable, the adjoint components of \eqref{eq:gluon-qrf-generalized} are eliminated by frame averaging, while singlet contractions such as
\begin{equation}
  \operatorname{tr}\mathscr{F}_{\mu\nu}\mathscr{F}^{\mu\nu},
  \quad
  \operatorname{tr}\mathscr{F}\wedge\mathscr{F}
  \label{eq:singlet-glueball-operators}
\end{equation}
remain.  This expresses the physical distinction between an unobservable single gluon and observable glueballs.

\noindent
(ii) 
Second, generalized symmetry becomes less directly diagnostic when fundamental dynamical matter is present, because the electric center one-form symmetry is explicitly broken and Wilson lines can end on dynamical matter.  QRF can still ask whether a long-distance relational color frame is available.  In a Higgs regime a scalar field can provide such a frame, and a composite such as $\Phi^\dagger\psi$ can be a gauge-invariant relational excitation.  In a confinement regime the analogous non-singlet relational excitation either cannot be globalized or is averaged to a singlet.  This distinction is not the same as a thermodynamic singularity in local gauge-invariant observables; it is an operational distinction in the availability of relational color.

\noindent
(iii) 
Third, generalized symmetry precisely controls topological charges of line operators and their endpoints, but it does not by itself specify the Lie-algebra-valued boundary color frame.  The Gauss law on a spatial region $\Sigma$ gives
\begin{equation}
  G[\epsilon]
  =\int_\Sigma d^3x\,2\operatorname{tr}\{\epsilon\mathscr{D}_i\mathscr{E}^i\}
  =-\int_\Sigma d^3x\,2\operatorname{tr}\{(\mathscr{D}_i\epsilon)\mathscr{E}^i\}
  +\int_{\partial\Sigma}dS_i\,2\operatorname{tr}\{\epsilon\mathscr{E}^i\} .
  \label{eq:gauss-boundary-generalized}
\end{equation}
The bulk term vanishes on physical states when $\epsilon$ is a redundancy.  If $\epsilon$ is nonzero at the boundary, the last term is a boundary charge.  To assign a component to this boundary charge one needs a boundary color frame.  QRF makes this explicit by writing
\begin{equation}
  Q_{\partial\Sigma,h}^A
  =\int_{\partial\Sigma}dS_i\,2\operatorname{tr}\{T^A h^{-1}\mathscr{E}^i h\} .
  \label{eq:qrf-boundary-generalized}
\end{equation}
\textbf{The disappearance of physically readable non-singlet boundary color is then not merely the absence of a one-form charged line; it is the disappearance of a physically readable boundary color frame supporting a non-singlet charge.}

\noindent
(iv) 
Fourth, defects have two complementary descriptions.  In generalized symmetry, a center vortex is naturally described by its center flux and by the phase it gives to a Wilson line.  In QRF, the same vortex is an obstruction to globalizing the color frame.  Around a loop $C$ linking the defect one may have
\begin{equation}
  h(2\pi)=h(0)z,
  \quad z\in Z_N,
  \label{eq:qrf-center-monodromy}
\end{equation}
or, more generally, a nontrivial $SU(N)/Z_N$ bundle.  Generalized symmetry captures the center element $z$ extremely well.  
\textbf{QRF further records the fact that the operation of comparing color orientations is globally obstructed.}  
This is why \textbf{QRF naturally connects center vortices, monopoles, Gribov-copy structure, residual-gauge-symmetry restoration, and the disappearance of colored relational observables.}

The comparison can be summarized as follows.
\begin{center}
\small
\begin{tabular}{p{0.27\linewidth}p{0.32\linewidth}p{0.32\linewidth}}
\toprule
Question & Generalized symmetry & QRF/relational observables \\
\midrule
Fundamental Wilson loop in pure Yang--Mills & Diagnosed by electric center one-form symmetry $\mathbb{Z}_N^{(1)}$ and $N$-ality; area law means unbroken $\mathbb{Z}_N^{(1)}$. & Read as loss of relational color-frame information across the spanning surface. \\
\midrule
Adjoint gluon confinement & Not directly implied by center one-form symmetry because adjoint fields have zero $N$-ality. 
& Non-singlet adjoint components disappear when no global color QRF is readable; 
the absence of a single-gluon-like non-singlet colored asymptotic pole; 
singlet glueballs remain. \\
\midrule
Boundary charges & Organizes line endability, background fields, and anomaly inflow. & Requires a boundary frame to define Lie-algebra-valued color components; Gauss law leaves only boundary relational charge. \\
\midrule
Fundamental matter & Center one-form symmetry is explicitly broken and Wilson lines can end. & Still distinguishes whether relational color can be read with respect to a physical Higgs or matter frame. \\
\midrule
Topological defects & Captures center flux and higher-form charges. & Interprets defects as obstructions to globalizing the color frame. \\
\bottomrule
\end{tabular}
\end{center}

Thus the two viewpoints are complementary rather than competing.  Generalized symmetry is superior when the question is the topological classification of genuine line operators, $N$-ality, anomaly constraints, and admissible boundary conditions.  QRF is superior when the question is the operational meaning of color itself: whether a reference frame exists that can read a nontrivial color representation in the physical Hilbert space.  The combined picture is
\begin{equation}
\boxed{
\begin{array}{ll}
  \text{generalized symmetry:} & \text{confinement of center charge and constraints on genuine lines},\\ %begin{align}0.2em]
  \text{QRF:} & \text{absence of globally readable non-singlet relational color}.
\end{array}}
\label{eq:generalized-qrf-final-box}
\end{equation}
This also clarifies the relation to Kugo--Ojima.  Kugo--Ojima expresses color confinement as the trivial action of the global color charge in the physical BRST cohomology.  Generalized symmetry expresses confinement of external probes through one-form symmetry.  QRF connects these by explaining what a global color charge or a Wilson line is operationally: both require the ability to compare color frames over an extended region.  Confinement is the failure of that operation for non-singlet color.

%\subsection{The comparison with generalized symmetry is good, but the limitations of center symmetry should be stated carefully}
%The comparison in Sec.~12 is well organized. In particular, it is important to point out that center one-form symmetry detects $N$-ality but does not directly diagnose adjoint-gluon confinement. However, in pure Yang--Mills theory, the statement ``unbroken electric one-form symmetry $\Leftrightarrow$ Wilson loop area law'' requires some care concerning the dimension of the correlator, the definitions of perimeter and area behavior, and the distinction from the Polyakov loop at finite temperature. It is safer to weaken the wording in line with the standard generalized-symmetry literature, for example: ``the area law is the expected diagnostic of the unbroken electric one-form symmetry in the confining phase.''

\section{Conclusion and discussion}
\label{sec:conclusion}

The aim of this paper has been to formulate \textbf{color confinement in Yang--Mills theory as a statement about quantum reference frames and relational observables}.  
The guiding point is simple but has far-reaching consequences.  
\textbf{A color label is never an absolute attribute of a local field.  
It is meaningful only after one has specified a color frame with respect to which the label is read.} 
 A quark, a gluon, a color-electric flux, or a boundary charge can therefore become a physical object only as a relational object.  
 In formulae, the replacement is not
\begin{equation}
  \psi(x) \quad \hbox{or} \quad \mathscr{A}_\mu(x)
\end{equation}
by themselves, but rather quantities such as
\begin{equation}
  \psi_h(x)=h^{-1}(x)\psi(x),\quad
  \mathscr{A}^h_\mu(x)=h^{-1}(x)\mathscr{A}_\mu(x)h(x)
  +ig_{\rm YM}^{-1}h^{-1}(x)\partial_\mu h(x),
\end{equation}
where the frame field $h(x)$ transforms as $h(x)\mapsto U(x)h(x)$.  
These combinations are invariant under the original local gauge transformation, but they still remember the right action of the chosen frame.  
Thus gauge invariance and frame independence are different requirements.  
Gauge invariance is the condition for being a Dirac observable; frame independence is the stronger condition that the observable not depend on the orientation, path, patch, or boundary convention used to read color.  
Confinement is precisely the regime in which non-singlet relational color cannot be made globally and asymptotically readable.
%Note that it is a relational observable invariant under the small gauge redundancy, but it is charged or covariant under the boundary or right-frame action.

This viewpoint reorganizes the confinement problem into three layers.  The first layer is kinematical.  The Gauss law removes local color charge from the physical bulk algebra.  On a spatial region $\Sigma$, a gauge transformation with parameter $\omega$ is generated by
\begin{equation}
  G[\omega]
  =\int_\Sigma d^dx\,2\operatorname{tr}\{\omega D_i\mathscr{E}^i\}
  =-\int_\Sigma d^dx\,2\operatorname{tr}\{(D_i\omega)\mathscr{E}^i\}
  +\int_{\partial\Sigma} dS_i\,2\operatorname{tr}\{\omega\mathscr{E}^i\} .
\end{equation}
If $\omega$ vanishes at the boundary $\omega|_{\partial\Sigma}=0$, $G[\omega]$ is a constraint and annihilates physical states.  
If $\omega$ does not vanish at the boundary $\omega|_{\partial\Sigma}\not=0$, a boundary flux remains.  The QRF version of the boundary charge is
\begin{equation}
  Q_{\partial\Sigma,h}^A
  =\int_{\partial\Sigma} dS_i\,2\operatorname{tr}\{T^A h^{-1}\mathscr{E}^i h\} .
\end{equation}
This is a gauge-invariant but frame-labeled color component.  Therefore the Gauss law alone does not prove confinement.  It only says where color charge can live: not as an independent local bulk observable, but as boundary flux or as a relational charge defined with respect to a frame.

The second layer is topological.  A frame may be definable only patchwise.  If the transition functions of the frame contain center elements, or if the normalized Higgs or color-direction field has nonzero winding, then no globally smooth color frame exists.  In this situation one can locally write a dressed colored field, but different patches or paths give inequivalent readings of the color label.  Around a defect one may have
\begin{equation}
  h(2\pi)=h(0)z,
  \quad z\in Z_N,
\end{equation}
or, in an Abelianized sector,
\begin{equation}
  {1\over 2\pi}\oint d\theta =m\in\mathbb{Z}.
\end{equation}
Such equations are not merely gauge-fixing singularities.  They are \textbf{obstructions to the global definability of the reference frame}.  In this sense center vortices, magnetic monopoles, Abelian vortices, Gribov copies, and residual-gauge-symmetry restoration all have a common QRF interpretation: they express the \textbf{impossibility of choosing a single global color coordinate system in the nonperturbative vacuum}.

The third layer is dynamical.  To obtain the Wilson-loop area law or a mass gap one must make a statement about the ensemble of defects and long-distance fluctuations.  A single defect obstructs a global frame, but a defect gas or a disordered ensemble is needed to produce exponential decay.  
\textbf{The QRF interpretation of the area law is that a Wilson loop measures the failure of relational color-frame comparison across an extended surface.}
  When defect sectors that pierce the surface are summed over, the center or frame-dependent phase is randomized, and one obtains
\begin{equation}
  \langle W_R(C)\rangle\sim
  \exp[-\sigma_R {\rm Area}(C)+\cdots]
\end{equation}
for representations with nontrivial center response.  
For adjoint or zero-$N$-ality probes, the area law is not guaranteed by center symmetry alone.  
The QRF criterion is more general: an adjoint gluon component is also not an asymptotic colored particle if no long-distance color frame exists with respect to which its non-singlet orientation can be measured.  
Singlet glueball operators remain physical because they do not require a nontrivial color frame.

The two-dimensional examples make this logic explicit.  In two-dimensional Yang--Mills theory, the Gauss law fixes the electric field on intervals and the Wilson line becomes the natural QRF comparing endpoint color frames.  On a circle, the only gauge-invariant degree of freedom is the conjugacy class of the holonomy.  Non-singlet color labels disappear after imposing the global Gauss constraint and integrating over the global color frame.  With external charges on the line, the Hamiltonian gives the linear potential.  Thus the minimal model already displays the separation between local Gauss-law elimination, boundary or endpoint color, and the energy cost of maintaining a relational color string.

The two-dimensional compact $U(1)$ gauge--Higgs model displays the Abelian version.  A Higgs phase variable can be used as a phase QRF.  Locally the charged matter field can be made gauge invariant by dressing it with this phase.  However, when compactness and vortices are included, the phase frame is not globally single-valued.  A charged relational operator is then sensitive to winding sectors.  Summing over vortices disorders the phase comparison and produces confinement of charges with nontrivial residual electric response.  This example makes clear that the central QRF question is not simply whether a gauge-invariant charged composite can be written locally, but whether the frame used in writing it is globally and asymptotically available.

The symmetric-instanton reductions supply a bridge between these low-dimensional laboratories and four-dimensional Yang--Mills theory.  The $SO(3)$-symmetric sector reduces four-dimensional self-duality to a two-dimensional Abelian vortex problem on $\mathbb{H}^2$.  The $SO(2)$-symmetric sector reduces it to a three-dimensional hyperbolic monopole problem on $\mathbb{H}^3$.  In the convention used in this paper, the $\mathbb{H}^2$ reduction uses polar coordinates $(r,\theta,\varphi)$ in $(x^1,x^2,x^3)$, while the $\mathbb{H}^3$ reduction uses polar coordinates $(\rho,\varphi)$ in $(x^1,x^2)$ with the remaining coordinate $z$ as the hyperbolic bulk direction.  In both reductions the field that descends from a component of the four-dimensional gauge potential becomes a lower-dimensional Higgs or phase field.  This reduced Higgs field is a natural QRF.  
\textbf{Vortex and monopole numbers then become obstructions to globalizing the QRF frame.  The reduced theories therefore give controlled sectors in which the language of defects, boundary data, and relational color can be made completely concrete.}

\begin{center}
\begin{tabular}{@{}llll@{}}
\toprule
reduction & QRF candidate & obstruction \\
\midrule
SO(2): $H^3\times S^1$ & adjoint Higgs direction $\hat\Phi$ & monopole bundle / Chern number\\
SO(3): $H^2\times S^2$ & phase of charge-two Higgs & odd vortex / $\mathbb{Z}_2$ monodromy \\
\bottomrule
\end{tabular}
\end{center}

The four-dimensional proposal is to replace an elementary Higgs field by a color-direction field extracted from the gauge field itself, for example through the CDG/Faddeev--Niemi decomposition.  In $SU(2)$ one introduces
\begin{equation}
  \bm{n}(x)=n^A(x)T_A,
  \quad 2\operatorname{tr}(\bm{n}(x)^2)=1,
\end{equation}
and decomposes the gauge field $\mathscr{A}$ into a restricted part $\mathscr{V}$ and a remaining part $\mathscr{X}$,
\begin{equation}
  \mathscr{A}_\mu(x)=\mathscr{V}_\mu(x)+\mathscr{X}_\mu(x),
  \quad D_\mu[\mathscr{V}]\bm{n}(x)=0, \quad \operatorname{tr}(\mathscr{X}(x)\bm{n}(x))=0.
\end{equation}
The field $\bm{n}(x)$ is not a complete $SU(2)$ frame; it is a partial QRF, namely a choice of local color axis.  This partial frame is enough to define Abelianized magnetic flux, monopole current, and defect sectors.  It also provides a gauge-invariant way to ask whether long-distance color orientation is ordered, disordered, or obstructed.  Therefore the CDG color-direction field provides a natural route from the reduced monopole/vortex mechanisms to the full Yang--Mills path integral.

The relation with the Kugo--Ojima criterion can now be stated sharply.  Kugo and Ojima aim to show that the global color charge acts trivially on the physical BRST cohomology.  
\textbf{QRF aims to show that the physical operation of reading non-singlet color requires a frame, and that this frame is not globally available in the confining vacuum.}  
In a covariant gauge without boundaries and without Gribov complications, the two statements are morally close.  However, \textbf{the QRF formulation does not require a particular gauge, an infrared ghost criterion, or the assumption that a global BRST charge remains well defined over the whole nonperturbative configuration space.}
 It also naturally incorporates boundary charges and edge modes, where the Kugo--Ojima formulation is not the most transparent language.

The relation with generalized symmetries is also clarified.  Generalized symmetry gives an extremely sharp diagnosis of center charge, genuine line operators, anomaly constraints, and line endability.  It is therefore indispensable for organizing Wilson loops and boundary conditions in pure Yang--Mills theory.  But it is less direct as a criterion for general color confinement, especially for adjoint gluons, for theories with fundamental matter where the center one-form symmetry is explicitly broken, and for the operational meaning of color Lie-algebra components at a boundary.  QRF fills precisely this gap.  
Generalized symmetry says how center charge is transported and screened; 
QRF says whether a color frame exists in the first place.

The main limitation of the present work should also be emphasized.  The QRF formulation by itself is not a rigorous proof of the four-dimensional Yang--Mills mass gap or the Wilson-loop area law.  It is an operational and structural criterion for color confinement.  The dynamical proof must still come from controlling the Yang--Mills path integral, the defect ensemble, the infrared behavior of the restricted field, or a lattice/continuum construction.  What QRF contributes is the correct gauge-invariant language for the target statement: the absence of globally readable non-singlet relational color.

Several directions are immediate.  

\noindent
(i) 
First, one can formulate a lattice version in which the color-direction field $n_x$, link variables, monopole currents, and center-vortex data define a defect-resolved QRF Hilbert space.  

\noindent
(ii)
Second, one can test whether the QRF disorder criterion correlates quantitatively with the string tension, the mass spectrum of relational gluon operators, and the infrared behavior of restricted-field correlators. 
 
\noindent
(iii)
Third, one can incorporate boundaries systematically by treating edge modes as boundary reference frames and asking which boundary conditions allow or forbid relational color charges.  

\noindent
(iv)
Fourth, one can connect the hyperbolic reductions more directly with holographic boundary data by identifying which AdS boundary frames are fixed, which are dynamical, and which are summed over.  Finally, one can compare the QRF criterion with residual-gauge-symmetry restoration in covariant gauges and with generalized-symmetry order parameters in a common lattice setup.

The final message is therefore the following.  Color confinement is not merely the statement that bare colored fields are not gauge invariant.  It is the statement that even after one uses all allowed relational dressings, line operators, boundary frames, and defect sectors, an isolated non-singlet color label cannot be defined as a global physical observable.  The physical vacuum does not merely hide color; it removes the global reference frame required to read color.

\section*{Acknowledgments}
\addcontentsline{toc}{section}{Acknowledgments}

This work was supported by Grant-in-Aid for Scientific Research, JSPS KAKENHI Grant
Number (C) No.23K03406. 

%The author thanks the participants of discussions on generalized symmetries, quantum reference frames, and nonperturbative Yang--Mills theory for useful comments and questions.  The detailed grant information and institutional acknowledgments will be completed in the final version.

\appendix

\section{Sharpening the QRF concept}
\label{app:QRF-concept}

\subsection*{A dressing field is not automatically a physical QRF}

In this paper, the following different objects are collectively called QRFs:
\begin{enumerate}
\item a field-dependent dressing constructed from the gauge field,
\item a local section of gauge fixing,
\item a Wilson line together with base-point data,
\item a boundary edge mode,
\item a dynamical Higgs field or matter reference system,
\item a CDG color-direction field $\bm n[\mathscr A]$.
\end{enumerate}
They may have the same transformation law, but their physical meanings are different. To call an object a physical QRF, one should at least state whether
\begin{itemize}
\item it has independent quantum degrees of freedom or an operationally accessible readout,
\item it satisfies the boundary conditions and finite-energy conditions,
\item the reference direction can be prepared, controlled, or conditioned upon,
\item it is more than a mere gauge-fixing section.
\end{itemize}

We give a following classification for QRFs.
\begin{longtable}{p{0.22\textwidth}p{0.32\textwidth}p{0.37\textwidth}}
\toprule
Type & Example & Role and caveat \\
\midrule
\endfirsthead
\toprule
Type & Example & Role and caveat \\
\midrule
\endhead
Mathematical dressing & Coulomb dressing, Wilson line & It makes a gauge-covariant quantity invariant, but it is not necessarily an independent physical system.\\
Boundary QRF & Edge mode, boundary frame & It represents boundary symmetries and charges. It depends on the boundary conditions.\\
Matter QRF & Higgs field, reference particle & It enables conditional observations, but zeros, defects, and quantum fluctuations must be taken into account.\\
Intrinsic partial QRF & $\bm n[\mathscr A]$ & It is an $SU(2)/U(1)$ axis, not a complete $SU(2)$ frame. It can have a Gribov problem.\\
\bottomrule
\end{longtable}

\begin{comment}
\subsection{Nontrivial holonomy is not equivalent to the nonexistence of a global frame}

Even a smooth connection on a trivial principal bundle can have nontrivial holonomy if its curvature is nonzero. Therefore
\begin{equation}
 \Omega_C\ne\mathbf 1
 \quad\Longrightarrow\quad
 \text{no global frame exists}
\end{equation}
is false in general. What should be distinguished are
\begin{enumerate}
\item path dependence due to the curvature of the connection,
\item an obstruction to a global section due to the nontriviality of a principal or associated bundle,
\item a center two-cocycle obstructing the lift of an $SU(N)/Z_N$ bundle to an $SU(N)$ bundle,
\item infrared disorder caused by a defect ensemble.
\end{enumerate}

The formula in Sec. 2.3,
\begin{equation}
 r_{\alpha\beta}r_{\beta\gamma}r_{\gamma\alpha}=z_{\alpha\beta\gamma}\in Z_N,
\end{equation}
can be understood as an obstruction to lifting a $PSU(N)$ bundle to an $SU(N)$ bundle. In that case, however, the global form of the bundle, the definition of transition functions, the Cech cocycle condition, and whether fundamental representations are allowed must be stated explicitly. One should not conflate the usual assumption of an $SU(N)$ principal bundle in pure $SU(N)$ theory with a $PSU(N)$ theory or with a singular center-vortex background.

\end{comment}

\subsection*{Path dependence is not specific to confinement}

The path dependence of a Wilson-line dressing occurs for any gauge field with nonzero curvature. Therefore, the mere fact that two paths give different values is not a diagnostic of confinement. To connect this with confinement, one must show at least one of the following:
\begin{itemize}
\item the energy associated with changing the path diverges with distance,
\item a defect average makes long-distance correlations decay exponentially or with an area law,
\item the finite-energy sector in a nontrivial representation disappears in the infinite-volume limit.
\end{itemize}

\subsection*{Multivalued components should be distinguished from global sections}

A relational component whose phase changes around a vortex may exist as a section of a nontrivial associated bundle rather than as a nonexistent physical quantity. More precisely, one should say
\begin{center}
``it cannot be represented as a single-valued complex scalar in the chosen trivialization''
\end{center}
and then separately show whether the defect sum disorders correlations or removes finite-energy charged states.

\subsection*{Gauge invariance versus frame independence}
%\label{app:gauge-frame}

Let the original gauge group act from the left.  A frame field $h(x)\in G$ transforms as
\begin{equation}
  h(x)\mapsto g(x)h(x) ,
  \quad g(x)\in G,
\end{equation}
For a matter field in a representation $R$, the relational field
\begin{equation}
  \psi_h(x):=R(h^{-1}(x))\psi(x)
\end{equation}
is invariant under the \textbf{left gauge transformation}.  However, the same construction has a \textbf{right-frame} redundancy.  If the chosen frame is changed as
\begin{equation}
  h(x)\mapsto h(x)r(x),
  \quad r(x)\in G,
\end{equation}
then
\begin{equation}
  \psi_h(x)\mapsto R(r^{-1}(x))\psi_h(x) .
\end{equation}
Thus $\psi_h$ is gauge invariant but not frame independent.  It is a colored object as seen in a chosen color frame.  
\textbf{A true frame-independent singlet is obtained only after all frame indices are contracted or after the relevant right action is projected out.}

For the gauge field, the corresponding relational connection is
\begin{equation}
  \mathscr{A}^h=h^{-1}\mathscr{A}h+ig_{\rm YM}^{-1}h^{-1}dh,
  \quad
  \mathscr{F}^h=h^{-1}\mathscr{F}h .
\end{equation}
Under a right-frame rotation $h\mapsto hr$,
\begin{equation}
  \mathscr{A}^h\mapsto
  r^{-1}\mathscr{A}^h r+ig_{\rm YM}^{-1}r^{-1}dr,
  \quad
  \mathscr{F}^h\mapsto r^{-1}\mathscr{F}^h r .
\end{equation}
Therefore $\mathscr{A}^h$ behaves as a connection for the right-frame group.  This is not a defect of the construction.  It expresses the fact that a relational description with respect to a frame still has a convention for the orientation of that frame.

One can define a \textbf{frame-independent projection} by averaging over the right action.  For a relational operator $\mathcal{O}_h$ transforming in a representation $R$ of the right-frame group,
\begin{equation}
  \mathcal{P}_{\rm singlet}\mathcal{O}_h
  =\int_G dr\,R(r)\mathcal{O}_{hr} .
\end{equation}
If $R$ is nontrivial and the frame is fully dynamical and integrated over, this projection kills the non-singlet component.  The QRF statement of color confinement is that in the confining vacuum the long-distance physical algebra contains only the right-frame singlet sector.  In a Higgs phase or in a boundary condition that fixes a color frame, the right-frame non-singlet relational components can remain meaningful relative to that frame.

This distinction is the reason why the \textbf{Elitzur  theorem} \cite{Elitzur:1975} alone is not enough.  
\textbf{Elitzur's theorem forbids the spontaneous breaking of local gauge symmetry and forbids gauge-variant local order parameters.}
\textbf{QRF asks a different question: after forming gauge-invariant relational operators, does the physical state or boundary condition supply a stable frame that makes non-singlet relational components readable?
  The answer can differ between confinement, Higgs, Coulomb, and boundary-Higgs regimes.}

\section{Summary of confinement criteria in the QRF language}
\label{app:qrf-criteria}

The QRF criterion may be summarized in four increasingly strong statements.

\paragraph{Kinematical gauge constraint.}
The Gauss law eliminates local bulk color charge:
\begin{equation}
  D_j\mathscr{E}^j|\mathrm{phys}\rangle=0
\end{equation}
for gauge transformations vanishing at the boundary.  This statement holds in any gauge phase and is not, by itself, confinement.

\paragraph{Relational color definability.}
A non-singlet color label can be read only after a frame $h$ has been specified.  The relational charge $Q_{\partial\Sigma,h}^A$ and the relational field $\psi_h$ are gauge invariant but frame dependent.  Color confinement requires that no globally and asymptotically stable non-singlet frame be available in the physical vacuum.

\paragraph{Defect obstruction.}
The frame is obstructed by center vortices, monopoles, compact Abelian vortices, Gribov copies, or residual-gauge-symmetry restoration.  Mathematically the obstruction appears as nontrivial transition functions, center cocycles, or winding numbers.  Physically it appears as the loss of a unique long-distance color reference frame.

\paragraph{Dynamical disorder.}
The Wilson-loop area law and the mass gap require a dynamical ensemble of defects or long-distance fluctuations.  In QRF terms this is the statement that the relative color-frame information across a large surface decays as
\begin{equation}
  \langle W(C)\rangle\sim e^{-\sigma {\rm Area}(C)} .
\end{equation}
The precise value of $\sigma$ is a dynamical quantity, not fixed by QRF kinematics alone.

The comparison with other criteria is then as follows:
\begin{center}
\small
\begin{tabular}{p{0.24\linewidth}p{0.32\linewidth}p{0.35\linewidth}}
\toprule
Criterion & What it directly says & What QRF adds \\
\midrule
Kugo--Ojima & Global color charge is trivial in the physical BRST cohomology under suitable assumptions. & The color charge is a relational boundary/frame charge; the obstruction is formulated without choosing a covariant gauge or relying on a ghost infrared condition. \\
\midrule
Wilson area law & External fundamental probes are confined by a linear potential. & The Wilson loop measures failure of long-distance color-frame comparison; area law is dynamical disorder of the frame. \\
\midrule
Center one-form symmetry & Genuine line operators and $N$-ality sectors are sharply classified. & Non-singlet color, including adjoint color, requires a readable frame even when center charge is trivial. \\
\midrule
CDG/color-direction field & Magnetic monopoles and restricted fields are defined gauge independently. & The color-direction field is a partial QRF whose global obstruction is the confinement-relevant defect data. \\
\bottomrule
\end{tabular}
\end{center}

Therefore the QRF approach should not be read as a replacement for the Kugo--Ojima criterion, the Wilson-loop criterion, or generalized symmetry.  Rather, it is a common operational language in which their domains of validity and their missing assumptions can be compared.

%\newpage
\section{Differentiable Gauss Generators, Boundary Charges, and Surface Terms}
\label{app:differentiable_generator}

This section collects, reorganizes, and removes overlaps among the explanations of the differentiable improved Gauss generator, the origin of the boundary charge, the Abelian analogue, and the relevant references.  
The main point is that \textbf{the Gauss-law functional on a space with boundary is not automatically a well-defined Hamiltonian generator when gauge parameters are allowed to be nonzero at the boundary.}  
\textbf{It must be supplemented by a surface term, and this surface term is precisely what becomes the boundary electric charge on the constraint surface.}

\subsection*{The non-Abelian Gauss constraint on an open interval}

Consider non-Abelian Yang--Mills theory on the open interval
$
  I=(0,L)
$
in the temporal gauge $\mathscr{A}_0^A(x)=0$.  Let the canonical variables be
$
  \mathscr{A}_1^A(x), \quad E^A(x)
$
where \(A\) is an adjoint color index.  The Gauss constraint is given by
\begin{equation}
  G^A(x):=(\mathscr{D}_1E)^A(x)-\rho^A(x)\simeq 0 .
  \label{eq:nonabelian-gauss-constraint}
\end{equation}
Here \(\mathscr{D}_1\) is the covariant derivative in the adjoint representation, and \(\rho^A\) is the matter color-charge density.  The naive smeared Gauss functional is  given by
\begin{equation}
  G_0[\omega]
  :=\int_0^L dx\,\omega^A(x)\left((\mathscr{D}_1E)^A(x)-\rho^A(x)\right),
  \label{eq:naive-gauss}
\end{equation}
where \(\omega^A(x)\) is a fixed smearing function.  Integrating by parts gives
\begin{equation}
  G_0[\omega]
  =-\int_0^L dx\,(\mathscr{D}_1\omega)^A E^A
   -\int_0^L dx\,\omega^A\rho^A
   +\left[\omega^AE^A\right]_0^L .
  \label{eq:partial-integration}
\end{equation}
The boundary term is
\begin{equation}
  \left[\omega^AE^A\right]_0^L
  =\omega^A(L)E^A(L)-\omega^A(0)E^A(0).
  \label{eq:boundary-term-identity}
\end{equation}
Equation \eqref{eq:partial-integration} is only an identity obtained by partial integration.  It is not yet the construction of the correct canonical generator.  The latter requires functional differentiability on the chosen phase space.

\subsection*{Functional differentiability}

The word ``differentiable'' in the phrase ``differentiable improved generator'' does not mean differentiability with respect to the coordinate \(x\), nor differentiability with respect to the smearing function \(\omega^A(x)\).  It means functional differentiability with respect to the canonical phase-space variables
$
  \mathscr{A}_1^A(x),\quad E^A(x),
$
and the matter variables.

A functional \(F[\mathscr{A}_1,E]\) is differentiable on the chosen phase space if, for every allowed variation \((\delta\mathscr{A}_1,\delta E)\), its first variation can be written as
\begin{equation}
  \delta F
  =\int_0^L dx\left(
      \frac{\delta F}{\delta \mathscr{A}_1^A(x)}\delta\mathscr{A}_1^A(x)
      +\frac{\delta F}{\delta E^A(x)}\delta E^A(x)
    \right)
    +\hbox{matter part} .
  \label{eq:functional-differential}
\end{equation}
Unprocessed boundary variations, such as
\begin{equation}
  \left.C_A\delta E^A\right|_{0,L},
  \quad
  \left.C_A\delta\mathscr{A}_1^A\right|_{0,L},
\end{equation}
must not remain on the right-hand side.  Such terms cannot be represented as bulk functional derivatives.  Thus differentiability means that the functional has a well-defined gradient on the phase space with the allowed boundary conditions.

Let us see where the naive Gauss generator fails.  Since the smearing function is fixed, \(\delta\omega^A=0\).  The part of the variation of \(G_0[\omega]\) with respect to \(E\) that contains a derivative is
\begin{equation}
  \delta_E G_0[\omega]
  =\int_0^L dx\,\omega^A(\mathscr{D}_1\delta E)^A .
\end{equation}
Integrating by parts gives
\begin{equation}
  \delta_E G_0[\omega]
  =\left[\omega^A\delta E^A\right]_0^L
   -\int_0^L dx\,(\mathscr{D}_1\omega)^A\delta E^A .
  \label{eq:variation-naive}
\end{equation}
The second term gives a legitimate bulk functional derivative.  The first term,
\begin{equation}
  \left[\omega^A\delta E^A\right]_0^L
  =\omega^A(L)\delta E^A(L)-\omega^A(0)\delta E^A(0),
  \label{eq:bad-boundary-variation}
\end{equation}
is the obstruction.  If \(\omega(0)=\omega(L)=0\), this term vanishes.  In that case \(G_0[\omega]\) is differentiable and generates a pure gauge redundancy.  If, however, \(\omega\) is allowed to be nonzero at the boundary and the boundary electric fluxes \(E(0),E(L)\) are kept as variable boundary data, then \(\delta E(0)\) and \(\delta E(L)\) are not zero in general.  Hence \(G_0[\omega]\) is not differentiable on this phase space.

\subsection*{The differentiable improved generator}

The boundary variation in \eqref{eq:bad-boundary-variation} is canceled by adding
\begin{equation}
  B[\omega]
  :=-\left[\omega^AE^A\right]_0^L
  =-\omega^A(L)E^A(L)+\omega^A(0)E^A(0).
  \label{eq:improvement-term}
\end{equation}
Since \(\omega\) is fixed,
\begin{equation}
  \delta B[\omega]
  =-\left[\omega^A\delta E^A\right]_0^L .
\end{equation}
Therefore the improved functional
\begin{equation}
  \widetilde G[\omega]
  :=G_0[\omega]+B[\omega]
  \label{eq:improved-defined}
\end{equation}
is differentiable.  Explicitly,
\begin{equation}
  \boxed{
  \widetilde G[\omega]
  =\int_0^L dx\,\omega^A\left((\mathscr{D}_1E)^A-\rho^A\right)
   -\omega^A(L)E^A(L)+\omega^A(0)E^A(0)
  }
  \label{eq:improved-gauss-interval}
\end{equation}
This is the differentiably improved Gauss generator on the open interval.  It is the expression denoted by Eq.~(4.27) in the main text.

More explicitly, after the boundary term is included, the variation contains no leftover boundary piece:
\begin{align}
  \delta\widetilde G[\omega]
  =&-\int_0^L dx\,(\mathscr{D}_1\omega)^A\delta E^A
    +\int_0^L dx\,\omega^Af^{ABC}\delta\mathscr{A}_1^B E^C
    -\int_0^L dx\,\omega^A\delta\rho^A .
  \label{eq:variation-improved}
\end{align}
The sign of the structure-constant term depends on the convention for the covariant derivative.  The important point is convention-independent: the improved generator has ordinary bulk functional derivatives and hence defines a Hamiltonian vector field on the chosen phase space.

The addition of \eqref{eq:improvement-term} is not an arbitrary modification of the constraint.  It is required once the phase space is chosen so that boundary electric fluxes are allowed to vary and gauge parameters nonzero at the boundary are admitted.  Up to constants that are not varied, or up to changes corresponding to different boundary conditions, the term is fixed by differentiability.

\subsection*{Relation to canonical transformations}

In the canonical formalism, the \textbf{symplectic 2-form} on the interval is given by
\begin{equation}
  \Omega
  =\int_0^L dx\,\delta E^A(x)\wedge\delta\mathscr{A}_1^A(x)
  +\hbox{matter part} .
  \label{eq:symplectic-form}
\end{equation}
A functional \(F\) generates a canonical transformation if its \textbf{Hamiltonian vector field} \(X_F\) exists and satisfies
\begin{equation}
  \iota_{X_F}\Omega=\delta F .
  \label{eq:hamiltonian-vector-field}
\end{equation}
Then the transformation of an observable \(O\) is given by
\begin{equation}
  \delta_FO=\{O,F\} .
\end{equation}
If unprocessed boundary variations remain in \(\delta F\), the right-hand side of \eqref{eq:hamiltonian-vector-field} cannot be represented as a \textbf{contraction} of the bulk symplectic form.  In that case \(X_F\) does not exist, or it is not an allowed tangent vector field on phase space.  Therefore \(F\) is not a well-defined generator of canonical transformations.

For \(\widetilde G[\omega]\), the problematic boundary variation has been canceled.  With the Poisson-bracket convention
\begin{equation}
  \{\mathscr{A}_1^A(x),E^B(y)\}=\delta^{AB}\delta(x-y),
\end{equation}
one obtains, up to sign conventions,
\begin{align}
  \{\mathscr{A}_1^A(x),\widetilde G[\omega]\}
  &=-(\mathscr{D}_1\omega)^A(x),\label{eq:gauge-on-A}\\
  \{E^A(x),\widetilde G[\omega]\}
  &=f^{ABC}\omega^B(x)E^C(x).\label{eq:gauge-on-E}
\end{align}
Thus \textbf{the improved functional generates the gauge transformation canonically.}  This is the sense in which functional differentiability is directly connected with the ability to generate transformations by Poisson brackets.

\subsection*{Boundary charge on the constraint surface}

On the constraint surface satisfying 
\begin{equation}
  (\mathscr{D}_1E)^A-\rho^A\simeq 0 .
\end{equation}
\eqref{eq:improved-gauss-interval} reduces to
\begin{equation}
  \widetilde G[\omega]
  \simeq
  \omega^A(0)E^A(0)-\omega^A(L)E^A(L)
  =:Q_\partial[\omega].
  \label{eq:boundary-charge}
\end{equation}
Thus the bulk Gauss law vanishes as a constraint, but a gauge transformation that is nonzero at the boundary does not disappear.  It generates a boundary symmetry whose generator is the boundary electric flux.

Equivalently, one may define the left and right \textbf{boundary charges} by
\begin{equation}
  Q_L^A:=E^A(0),
  \quad
  Q_R^A:=-E^A(L),
\end{equation}
so that
\begin{equation}
  Q_\partial[\omega]
  =\omega^A(0)Q_L^A+\omega^A(L)Q_R^A .
\end{equation}
Transformations satisfying \(\omega(0)=\omega(L)=0\) are pure gauge redundancies.  Transformations for which \(\omega(0)\) or \(\omega(L)\) is nonzero are boundary symmetries acting on the boundary electric flux.  Their generator is \(Q_\partial[\omega]\).

This is the precise sense in which the constraint and the boundary charge together form the well-defined canonical generator.  The boundary charge is not put in by hand; it is the surface part that remains after the bulk constraint is imposed.

In the non-Abelian theory, a further QRF point must be kept in mind.  The color component of a boundary electric charge is not an invariantly meaningful number unless a boundary color frame is specified.  Moreover, the left and right boundary charges act on an open Wilson line by noncommuting Lie-algebra actions from the two endpoints.  This is \textbf{the boundary version of the statement that color is meaningful only relationally.}

\subsection*{The Abelian gauge--Higgs analogue}

The same construction applies to the Abelian \(U(1)\) gauge--Higgs model on the interval.  Let the Abelian Gauss constraint be
\begin{equation}
  \partial_xE(x)-\rho(x)\simeq 0 .
\end{equation}
The naive smeared functional is
\begin{equation}
  G_0[\alpha]
  =\int_0^L dx\,\alpha(x)\left(\partial_xE(x)-\rho(x)\right).
\end{equation}
Integration by parts gives
\begin{equation}
  G_0[\alpha]
  =-\int_0^L dx\, (\partial_x\alpha)E
   -\int_0^L dx\,\alpha\rho
   +[\alpha E]_0^L .
\end{equation}
Its variation contains the boundary term
$
  [\alpha\delta E]_0^L 
$.
Hence one adds
\begin{equation}
  -[\alpha E]_0^L=-\alpha(L)E(L)+\alpha(0)E(0),
\end{equation}
and obtains the \textbf{improved generator}
\begin{equation}
  \boxed{
  \widetilde G[\alpha]
  =\int_0^L dx\,\alpha(\partial_xE-\rho)
   -\alpha(L)E(L)+\alpha(0)E(0)
  } .
  \label{eq:abelian-improved}
\end{equation}
This is the Abelian version of \eqref{eq:improved-gauss-interval}, denoted by Eq.~(6.5) in the main text.  On the constraint surface,
\begin{equation}
  \widetilde G[\alpha]
  \simeq \alpha(0)E(0)-\alpha(L)E(L).
\end{equation}

The status of \eqref{eq:abelian-improved} is exactly the same as that of \eqref{eq:improved-gauss-interval}.  Both are newly defined improved functionals obtained by imposing differentiability, not merely identities obtained by partial integration.  The difference is that in the Abelian theory the covariant derivative reduces to \(\partial_x\), and the boundary charges are simply the electric fluxes \(E(0)\) and \(-E(L)\).  In the gauge--Higgs model the bulk charge density \(\rho\) appears explicitly, and dynamical charged matter can screen electric flux.

\subsection*{General bounded region}

The open-interval formula is the one-dimensional specialization of the standard surface-term formula on a bounded spatial region \(R\).  Let the canonical variables be
\begin{equation}
  \mathscr{A}_j^A(x),\quad E^{jA}(x),
\end{equation}
and let the Gauss constraint be
\begin{equation}
  G^A=(\mathscr{D}_jE^j)^A-\rho^A .
\end{equation}
The naive functional
\begin{equation}
  G_0[\omega]=\int_R \omega^AG^A
\end{equation}
is not differentiable if \(\omega\) is allowed to be nonzero at \(\partial R\).  The correct integration-by-parts form is
\begin{equation}
  -\int_R (\mathscr{D}_j\omega)^A E^{jA}-\int_R\omega^A\rho^A .
\end{equation}
Rewriting it in terms of the Gauss constraint gives
\begin{equation}
  \widetilde G[\omega]
  =\int_R \omega^A\left((\mathscr{D}_jE^j)^A-\rho^A\right)
   -\int_{\partial R}\omega^AE_s^A,
  \label{eq:general-bounded-region}
\end{equation}
where
\begin{equation}
  E_s^A:=n_iE^{iA}
\end{equation}
is the electric flux in the outward normal direction.  On the interval \(I=(0,L)\), the outward normals are
\begin{equation}
  n(0)=-1,
  \quad
  n(L)=+1.
\end{equation}
Thus
\begin{equation}
  \int_{\partial I}\omega^AE_s^A
  =\omega^A(L)E^A(L)-\omega^A(0)E^A(0),
\end{equation}
and \eqref{eq:general-bounded-region} reduces precisely to \eqref{eq:improved-gauss-interval}.

This is also the content of the moment-map formula often written schematically as
\begin{equation}
  \langle J,\xi\rangle
  =\int_R \operatorname{Tr}(E^j\mathscr{D}_j\xi)
  =\int_R \operatorname{Tr}(G\xi)
   +\int_{\partial R}\operatorname{Tr}(E_s\xi),
\end{equation}
up to sign conventions.  The sign convention changes whether the boundary term is written with a plus or a minus sign, but the substance is invariant: the correct Hamiltonian generator on a bounded region is the integration-by-parts form, and it contains the boundary electric flux.

\subsection*{What is a choice, and what is fixed}

In a gauge theory with boundary, the following three pieces of data must be specified together:
\begin{enumerate}%[label=(\roman*)]
  \item the boundary conditions imposed on the allowed fields;
  \item the boundary behavior of the allowed gauge parameters;
  \item which boundary degrees of freedom are kept as physical or edge-mode data.
\end{enumerate}
Once this phase space has been chosen, the boundary term that makes the generator differentiable is fixed.  If one allows only \(\omega|_{\partial I}=0\), no improvement term is needed.  If one fixes the boundary electric flux and imposes \(\delta E|_{\partial I}=0\), the boundary variation also disappears.  If one allows \(\omega\) to be nonzero at the boundary and treats boundary electric flux as a variable boundary datum, then the improvement term in \eqref{eq:improved-gauss-interval} is necessary.
Thus Eq.~\eqref{eq:improved-gauss-interval} is not an arbitrary operation.  
%It is the answer to the question: what is the Gauss generator on this phase space and under these boundary conditions?

\subsection*{Bibliographic positioning}

The open-interval formula \eqref{eq:improved-gauss-interval} need not be cited as a separate classical formula written in exactly the same one-dimensional notation.  It is the specialization to \(R=I=(0,L)\) of the general principle that Hamiltonian generators of gauge transformations on bounded regions must be differentiable and therefore must include the appropriate surface terms.

Historically, the role of surface terms in non-Abelian gauge theory was treated directly by Gervais, Sakita and Wadia \cite{GervaisSakitaWadia1976}, and by Wadia \cite{Wadia1977}. Regge and Teitelboim \cite{ReggeTeitelboim1974} gave the classical general formulation of the principle that surface terms are required for differentiable Hamiltonian generators. The standard constrained-Hamiltonian treatment is given in the textbook by Henneaux and Teitelboim \cite{HenneauxTeitelboim1992}. In the covariant phase-space language, Lee and Wald \cite{LeeWald1990} formulated the relation between local symmetries, constraints and Hamiltonian generators.

More recent discussions of gauge theory with boundaries, edge modes and electric flux include Donnelly and Freidel\cite{Donnelly:2016auv},  Troessaert\cite{Troessaert2013}, and Henneaux--Troessaert\cite{HenneauxTroessaert2018}, Riello\cite{Riello2021}.  
Riello's treatment is especially close to the present formula.  In Yang--Mills theory on a bounded region, the relevant moment map is not merely the bulk Gauss constraint, but the integration-by-parts expression including the boundary electric flux.  Equation \eqref{eq:improved-gauss-interval} is precisely its specialization to a one-dimensional open interval, with the outward-normal convention stated above.

\begin{comment}

\end{comment}

\section{Operational formulation of the QRF confinement criterion}
\label{app:operational-qrf-revised}
%\label{app:operational_QRF}

The purpose of this appendix is to give a more precise operational formulation of the
QRF confinement criterion.  The main point is that one must not identify three different
operations: imposing the proper gauge constraint, changing the reference frame, and
averaging over an inaccessible reference frame.  A dressed colored field may be invariant
under proper gauge transformations and nevertheless remain charged under a right-frame
or boundary symmetry.  Conversely, the disappearance of a single non-singlet operator
under a singlet projection does not by itself imply the disappearance of an invariant
two-point function.  For this reason the criterion below is formulated first in terms of
field algebras and superselection sectors.  The absence of isolated non-singlet poles is then
used only as a spectral diagnostic in a gapped situation.

\subsection*{Dressed field algebra and frame-independent observable algebra}

Let $B$ denote the boundary condition and let $s$ denote possible bundle, boundary-flux,
center-vortex, monopole, or other defect data compatible with $B$.  After the Gauss law
for proper gauge transformations has been imposed, one still has to distinguish the algebra
of dressed or charged relational fields from the algebra of frame-independent observables.

Let $\mathcal G_0(B)$ be the group of proper gauge transformations, namely those gauge
transformations which are regarded as redundancies and which act trivially at the boundary
or preserve the chosen boundary frame.  A color QRF or dressing $h$ is a field or functional
which transforms under a local gauge transformation $g$ as
\begin{equation}
 h[\mathscr A^g](x)=g(x)h[\mathscr A](x)r_g^{-1} .
\end{equation}
For $g\in \mathcal G_0(B)$ one may take $r_g=1$.  For transformations nontrivial at the
boundary, $r_g$ represents a residual boundary or right-frame action.

For a fixed admissible dressing $h$ and sector $s$, define the dressed field algebra
$\mathfrak F_{B,s}(h)$ as the algebra generated by relational variables such as
\begin{align}
 \Psi_h(x) &= R(h^{-1}(x))\psi(x), \\
 \mathscr A^h_\mu(x) &= h^{-1}(x)\mathscr A_\mu(x)h(x)
     +\frac{i}{g_{\rm YM}}h^{-1}(x)\partial_\mu h(x), \\
 \mathscr F^h_{\mu\nu}(x) &= h^{-1}(x)\mathscr F_{\mu\nu}(x)h(x), \\
 W_h(\gamma;x,y)&=h^{-1}(x)\,U_\gamma[\mathscr A]h(y), \\
 Q_{\partial\Sigma,h}[\lambda]&=
  \int_{\partial\Sigma}dS_i\,2\operatorname{tr}\{\lambda\,h^{-1}E^i h\} .
\end{align}
Here $U_\gamma[\mathscr A]$ is the Wilson parallel transporter along the path $\gamma$.
The algebra $\mathfrak F_{B,s}(h)$ is invariant under proper gauge transformations, but it
is not in general invariant under a change of QRF.  It is therefore a dressed or charged
field algebra rather than the final algebra of frame-independent observables.

The admissible right-frame transformations are written as
\begin{equation}
 h(x)\longmapsto h(x)k(x),
 \qquad k\in \mathcal K_{B,s}^{\rm adm} .
\end{equation}
They act on $\mathfrak F_{B,s}(h)$ by automorphisms $\alpha_k$.  For example,
\begin{align}
 \Psi_h(x)&\longmapsto \Psi_{hk}(x)=R(k^{-1}(x))\Psi_h(x), \\
 \mathscr F^h_{\mu\nu}(x)&\longmapsto
       \mathscr F^{hk}_{\mu\nu}(x)=k^{-1}(x)\mathscr F^h_{\mu\nu}(x)k(x), \\
 W_h(\gamma;x,y)&\longmapsto
       W_{hk}(\gamma;x,y)=k^{-1}(x)W_h(\gamma;x,y)k(y).
\end{align}
The frame-independent observable algebra is the fixed-point algebra under the right-frame
transformations which are not physically fixed, together with the boundary sectors which
are selected by the boundary condition:
\begin{equation}
 \mathfrak A_B=\bigoplus_s
 \left(\mathfrak F_{B,s}(h)\right)^{\mathcal K_{B,s}^{\rm tw}}
 \quad \hbox{with the allowed boundary sectors kept fixed.}
\end{equation}
Here $\mathcal K_{B,s}^{\rm tw}\subseteq \mathcal K_{B,s}^{\rm adm}$ denotes only those
right-frame reorientations over which an operational twirling is required.  It is important
that $\mathcal K_{B,s}^{\rm tw}$ need not be the whole admissible right-frame group.  If a
boundary condition, a physical reference system, or a Higgs background selects a frame,
then the selected part is not averaged over.

This gives the basic algebraic distinction used in the following:
\begin{align}
%\boxed{
& \mathfrak F_B \text{ contains dressed charged relational fields}, \nonumber\\
& \text{whereas} 
 \mathfrak A_B \text{contains frame-independent observables.}
%}
\end{align}
Color confinement should be formulated as a statement about which non-singlet sectors of
$\mathfrak F_B$ survive as finite-energy, physically distinguishable sectors of the theory,
not merely as the statement that $\mathfrak A_B$ contains no single operator with an open
color index.

\subsection*{Four different symmetries and transformations}

The following four notions must be separated.

\begin{center}
\begin{tabular}{p{0.24\linewidth}p{0.31\linewidth}p{0.32\linewidth}}
\toprule
Name & Meaning & How it is treated \\
\midrule
Proper gauge group $\mathcal G_0(B)$
& Redundancy generated by the Gauss constraint with boundary-trivial parameters
& Quotient by it, or impose it as a constraint on physical states. \\
\addlinespace
Boundary symmetry $\mathcal G_\partial(B)$
& Transformations nontrivial at the boundary; their charges are boundary electric fluxes
& Do not automatically quotient.  They may label physical sectors or edge-mode states. \\
\addlinespace
Right-frame reorientation $\mathcal K_{B,s}^{\rm adm}$
& Change of the chosen QRF, $h\mapsto hk$, after the proper gauge redundancy has been removed
& A passive QRF change by itself is not an average.  Twirl only over the inaccessible part. \\
\addlinespace
Custodial or other global symmetry $G_{\rm cust}$
& Genuine global symmetry acting on matter multiplets or gauge-invariant composites
& Do not average over it unless the physical protocol explicitly discards that charge. \\
\bottomrule
\end{tabular}
\end{center}

This separation is essential in a Higgs regime.  A gauge-invariant composite such as a
Higgs-dressed matter field may transform under a custodial or flavor symmetry.  Such a
quantum number is not a gauge color label.  It should not be removed by a QRF twirling
operation unless the physical question explicitly treats the corresponding reference system
as inaccessible.

\subsection*{Accessibility and twirling}

A QRF change is a change of relational description.  It is not, by itself, a physical averaging
operation.  A twirling operation is introduced only when the orientation of the reference
frame is not accessible to the observer or is not fixed by the boundary condition.  If
$\omega$ is a state on $\mathfrak F_{B,s}(h)$ and $K\subseteq \mathcal K_{B,s}^{\rm adm}$ is the
inaccessible right-frame group, the twirled state is
\begin{equation}
 \omega_{\rm tw}(X)=\int_K dk\,\omega(\alpha_k(X)) .
\end{equation}
Equivalently, the twirled observable is
\begin{equation}
 P_K X=\int_K dk\,\alpha_k(X).
\end{equation}
This operation expresses the lack of access to an absolute right-frame orientation.  It is
not a replacement for the Gauss-law constraint.

There may also be an average or sum over defect sectors.  In a semiclassical gas of vortices
or monopoles this is often represented as a statistical average,
\begin{equation}
 \omega_{\rm av}(X)=\sum_s p_s\,\omega_s(X_s),
 \qquad p_s\geq 0,
 \qquad \sum_s p_s=1 .
\end{equation}
In a quantum sum over topological sectors, however, the weights need not be classical
probabilities; they may include phases such as $e^{i\theta Q_s}$.  Thus the phrase
``averaging over sectors'' must always be read as shorthand for the sector sum appropriate
to the dynamics and boundary condition under consideration.

\subsection*{Single-operator projection versus invariant two-point functions}

Let $O^I_R(x)\in\mathfrak F_B$ transform in a nontrivial irreducible representation $R$ of an
inaccessible compact right-frame group $K$:
\begin{equation}
 \alpha_k(O^I_R)=R(k)^I{}_{J}O^J_R .
\end{equation}
The singlet projection of a single operator is
\begin{equation}
 P_K O^I_R=\int_K dk\,R(k)^I{}_{J}O^J_R=0,
 \qquad R\ne {\bf 1} .
\end{equation}
This is the correct statement that a single open non-singlet field is not an element of the
frame-independent observable algebra.

However, this statement must not be confused with the behavior of two-point functions.  The
invariant contraction
\begin{equation}
 C_R(x-y)=\frac{1}{d_R}\sum_I
 \omega\bigl(O^I_R(x)O^{I\dagger}_R(y)\bigr)
\end{equation}
is a right-frame singlet and may be nonzero.  Indeed, twirling the pair gives
\begin{align}
 &\int_K dk\,R(k)^I{}_{I'}R(k)^{*J}{}_{J'}
 \omega\bigl(O^{I'}_R(x)O^{J'\dagger}_R(y)\bigr)  \\
 &\qquad =\frac{\delta^{IJ}}{d_R}
 \sum_L \omega\bigl(O^L_R(x)O^{L\dagger}_R(y)\bigr) .
\end{align}
Thus the projection of a single non-singlet operator and the invariant contraction of a
non-singlet pair are different operations.  The former may vanish identically, while the
latter may contain physical information about a charged or line-attached sector.  A valid
QRF confinement criterion must therefore ask whether such an invariant two-point function
corresponds to a finite-energy isolated colored sector, to a line-attached pair, or merely to a
conditional correlation relative to an externally fixed frame.

\subsection*{Revised operational confinement criterion}

The most robust formulation is in terms of finite-energy superselection sectors.  Let
$\mathcal H^B_{\rm phys}$ be the physical Hilbert space obtained after imposing the Gauss law
for $\mathcal G_0(B)$ and the boundary condition $B$.  Schematically one may write
\begin{equation}
 \mathcal H^B_{\rm phys}=\bigoplus_s\mathcal H^B_{{\rm phys},s},
\end{equation}
where $s$ labels admissible boundary and defect data.  Within a given sector, a nontrivial
right-frame or boundary representation $R\ne{\bf 1}$ is said to define a finite-energy charged
sector if there exists a sequence of finite-volume states $|\Psi_{R,L}\rangle$ such that
\begin{equation}
 E_{R,L}-E_{0,L}
\end{equation}
remains bounded as the infrared cutoff $L\to\infty$, and the states carry the corresponding
nontrivial representation of the accessible boundary or right-frame symmetry.

\begin{definition}[QRF confinement in superselection form]
A Yang--Mills phase is color confining in the QRF sense if, after imposing the Gauss law,
the boundary condition, and the admissibility conditions on dressings, no nontrivial color
representation $R\ne {\bf 1}$ defines a finite-energy, physically accessible, isolated
superselection sector in the infinite-volume limit.  Equivalently, every non-singlet relational
candidate either
\begin{enumerate}
\item is only a conditional operator relative to an externally fixed frame,
\item is attached to another charge or to a boundary by a Wilson line or flux tube,
\item has energy which diverges in the isolated-charge limit,
\item is removed from the frame-independent algebra by twirling over an inaccessible frame,
      or
\item is obstructed by defect monodromy or bundle data from defining a global long-distance
      color label.
\end{enumerate}
\end{definition}

This criterion is stronger and more physical than the statement that a single non-singlet
operator is absent from $\mathfrak A_B$.  The latter follows from frame independence, but by
itself it would also be true in many situations where charged correlations are meaningful.
The confinement criterion concerns the absence of an isolated finite-energy non-singlet color
sector.

\subsection*{The limited role of isolated poles}

If the infrared theory is massive and admits an ordinary particle interpretation, a useful
spectral diagnostic is the invariant contracted two-point function
\begin{equation}
 C_R(p)=\int d^dx\,e^{-ipx}\,\frac{1}{d_R}\sum_I
 \omega_{\rm phys}\bigl(O^I_R(x)O^{I\dagger}_R(0)\bigr) .
\end{equation}
In a gapped phase, the existence of an isolated pole
\begin{equation}
 C_R(p)= \frac{Z_R}{p^2+M_R^2}+\hbox{regular},
 \qquad Z_R>0,
 \qquad M_R<\infty,
\end{equation}
with $R\ne {\bf 1}$ and with an accessible frame or boundary charge would indicate a
finite-energy non-singlet sector.  Its absence is therefore a useful diagnostic of QRF
confinement in a gapped situation.

This pole criterion is not universal.  In a Coulomb phase a charged excitation need not have
an isolated LSZ pole because it may be accompanied by a long-range soft gauge-field cloud.
Such an infraparticle-type behavior is not, by itself, confinement.  For this reason the
absence of an isolated pole should be used as an auxiliary test only after the infrared nature
of the phase has been specified.  The primary criterion is the existence or non-existence of a
finite-energy accessible charged sector.

\begin{proposition}[Operational QRF criterion]
\label{prop:revised-qrf-criterion}
Consider an infrared phase of a Yang--Mills theory with boundary condition $B$.  Assume:
\begin{enumerate}
\item the Gauss law has been imposed for the proper gauge group $\mathcal G_0(B)$;
\item the dressed field algebra $\mathfrak F_B$ and the frame-independent observable algebra
      $\mathfrak A_B$ are defined as above;
\item the boundary symmetry, right-frame reorientation, and custodial or flavor symmetries
      have been identified separately;
\item twirling is performed only over those right-frame orientations which are physically
      inaccessible in the chosen protocol;
\item the infinite-volume finite-energy limit of charged sectors is well defined.
\end{enumerate}
Then a nontrivial color representation $R\ne {\bf 1}$ is QRF-confined if and only if it does
not occur as a finite-energy, physically accessible, isolated superselection sector of
$\mathfrak F_B$ after the above constraints, boundary conditions, and accessibility rules have
been imposed.

In a gapped phase with an ordinary spectral representation, this condition is equivalently
tested by the absence of an isolated finite-mass pole in the invariant contracted correlator
$C_R(p)$ associated with an isolated non-singlet sector.  This spectral reformulation is a
corollary of the superselection criterion, not the definition itself.
\end{proposition}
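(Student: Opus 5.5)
The first claim is essentially definitional once the objects are matched, so I will prove it by unpacking the Definition of QRF confinement in superselection form against assumptions 1--5 and showing that each listed alternative is exactly a way of failing to be a finite-energy, accessible, isolated sector. For the forward direction, suppose $R\neq\mathbf 1$ is QRF-confined in the sense of the Definition. Take any non-singlet relational candidate $O^I_R\in\mathfrak F_B$. By assumption~4, twirling acts only over the inaccessible group $K\subseteq\mathcal K^{\rm adm}_{B,s}$, and by the single-operator projection identity $P_KO^I_R=0$ we get alternative~4 whenever $R$ restricted to $K$ has no invariant vector. Otherwise the relevant orientation is accessible, and one of alternatives 1, 2, 3 or 5 must hold. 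In each case the sequence $|\Psi_{R,L}\rangle$ fails to be an admissible finite-energy accessible isolated sector. Case 3 makes $E_{R,L}-E_{0,L}$ unbounded. Case 2 ties the state to another charge or to the boundary, so it is not isolated. Case 1 makes it conditional on an external frame, so it is not physically accessible. Case 5 prevents a global label, so it gives no sector in the infinite-volume limit of assumption~5. For the converse, suppose no such sector exists and let $O_R$ be a candidate. I will argue by contraposition: if none of 1--5 holds, then $O_R$ is globally defined, unconditioned, unattached, of bounded energy, and survives twirling. Acting with $O_R$ on the vacuum of each finite volume and projecting onto the $R$-isotypic component of the accessible symmetry (assumption~3 makes this component well defined) then produces the required sequence $|\Psi_{R,L}\rangle$. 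This contradicts the hypothesis.

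The spectral corollary will go through the Källén--Lehmann representation. In a gapped phase, insert a complete set of states of $\mathcal H^B_{\rm phys}=\bigoplus_s\mathcal H^B_{{\rm phys},s}$ into $C_R(p)$. Only states carrying the representation $R$ of the accessible charge contribute, by the twirled two-point identity $\int_K dk\,R(k)R(k)^*\propto\delta^{IJ}/d_R$. An isolated pole with $Z_R>0$ and $M_R<\infty$ then corresponds to a one-particle subspace of bounded energy in that sector, hence to a finite-energy isolated sector. Conversely, such a sector gives a positive residue provided $O_R$ has nonzero overlap with it. Since the proposition states this only as a corollary in the gapped, ordinary-spectral case, I will assume an LSZ-type completeness of single-particle states there. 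I will also exclude infraparticle behaviour explicitly, as the preceding remark on Coulomb phases requires.

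The main obstacle is the converse step. There I must show that a non-singlet operator avoiding alternatives 1--5 really yields a sequence with bounded energy difference and a nonzero $R$-component in the limit. This step is not purely algebraic: it needs a cluster or locality property so that a bounded-energy local excitation stays bounded as $L\to\infty$. I will take this from assumption~5 as its precise content. Concretely, I will read assumption~5 as saying that bounded-energy states generated by locally dressed operators converge in the infinite-volume limit. If that reading is too weak, the fallback is to state the equivalence at the level of sectors already assumed to exist, which keeps the proposition essentially tautological relative to the Definition.
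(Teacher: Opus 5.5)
Your overall route is the paper's: the first claim is an unpacking of the Definition, and the pole statement comes from a spectral decomposition. Your converse (act with an unobstructed candidate on the vacuum, project onto the $R$-isotypic part, and read assumption~5 as the required cluster property) and your spectral caveats (nonzero overlap, LSZ-type completeness, no infraparticles) are more explicit than the paper's. The gap is in the forward direction. You go through candidates one at a time and conclude that ``the sequence $|\Psi_{R,L}\rangle$'' fails to be a sector. But to show that \emph{no} finite-energy, accessible, isolated sector exists, you must start from an arbitrary such sequence and exhibit a candidate in $\mathfrak F_B$ that creates it. This exhaustion statement is the first step of the paper's proof: after the quotient by $\mathcal G_0(B)$, a surviving non-singlet label cannot be a bare gauge index and must be carried by a dressed relational operator in $\mathfrak F_B$. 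Without it, your case analysis says nothing about sectors not reached by any candidate.

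Your handling of alternative~4 also fails as written. You assign a candidate to case~4 via $P_KO^I_R=0$ whenever $R|_K$ has no invariant vector. You then implicitly treat this as excluding a sector, though case~4 is missing from your list of cases. The appendix stresses that $P_KO^I_R=0$ holds for any operator in a nontrivial irreducible representation of an inaccessible $K$, while the invariant contraction $C_R$ survives twirling and can still reveal a finite-energy charged or line-attached sector. Accordingly, the paper's proof does not list twirling among the reasons a sector is absent; it decides the question through the finite-energy spectrum of the pair contractions.

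What excludes a sector in this case is the accessibility clause in the definition of a charged sector: the states must carry a nontrivial representation of the \emph{accessible} boundary or right-frame symmetry. That clause fails only when $R$ acts solely through $K$. If $R$ is also carried by the accessible boundary charge, the operator is twirled out of $\mathfrak A_B$, yet the sector can be accessible. For the same reason, ``otherwise the relevant orientation is accessible'' is a non sequitur: $R|_K$ containing the trivial representation says nothing about accessibility.

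A smaller point in the spectral part: the twirled two-point identity only shows that the twirled correlator equals the invariant contraction. The restriction of intermediate states to the $R$ sector comes instead from the accessible symmetry being unitarily implemented on $\mathcal H^B_{\rm phys}$ with an invariant vacuum.
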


\begin{proof}
The proper gauge group $\mathcal G_0(B)$ is first quotiented by imposing the Gauss law, so
any remaining non-singlet label cannot be a bare local gauge index.  It must be represented
by a dressed relational operator in $\mathfrak F_B$.  Such an operator may still transform under
right-frame reorientation or boundary symmetry.  If the relevant right-frame orientation is
inaccessible, twirling projects single non-singlet operators out of the frame-independent
algebra.  Nevertheless, invariant contractions of pairs may remain and can diagnose the
spectrum of the corresponding charged or line-attached sector.

Therefore the existence of a deconfined color sector is equivalent to the existence of a
nontrivial representation $R$ carried by finite-energy physical states in the infinite-volume
limit, after the boundary and accessibility rules have been specified.  If no such sector exists,
every non-singlet relational candidate is either conditional on a fixed external frame, tied to
another charge or to a boundary, energetically divergent when isolated, or globally obstructed
by defect data.  This is precisely QRF confinement.

If the phase is gapped and admits an ordinary particle spectral decomposition, a finite-energy
isolated sector produces an isolated pole in the corresponding invariant contracted two-point
function.  Conversely, an isolated finite-mass pole with nonzero residue in a nontrivial
accessible sector gives a finite-energy charged sector.  The pole criterion therefore follows in
that restricted situation.
\end{proof}

\begin{remark}[Relation to the previous pole formulation]
The earlier statement that confinement means the absence of an isolated non-singlet pole is
correct as a convenient diagnostic in a massive confining phase, but it is too strong as a
universal definition.  The revised formulation avoids confusing three different facts: a single
non-singlet operator is removed from the frame-independent observable algebra by twirling;
an invariant non-singlet pair correlator may still be nonzero; and a finite-energy isolated
charged sector may or may not exist depending on the infrared dynamics.
\end{remark}

\begin{remark}[Higgs regimes]
In a Higgs regime, a scalar condensate, a boundary condition, or an external reference system
may make a particular frame accessible.  Then the corresponding right-frame orientation is
not averaged over.  Gauge-invariant composites may have particle poles, but their physical
quantum numbers should be identified carefully: they may be custodial or flavor quantum
numbers rather than gauge color.  Thus the QRF criterion distinguishes a confining regime
from a Higgs regime only after the boundary condition and the accessibility of the relevant
reference frame have been specified.
\end{remark}

\begin{remark}[Defects and Wilson loops]
Defect monodromy is one mechanism by which a long-distance color QRF may fail to be
globally definable.  It should be distinguished from ordinary path dependence due to
curvature and from the mere presence of a topological defect.  For confinement one needs the
combination of a nontrivial pairing with the representation under consideration, an infrared
disorder or condensation mechanism, and the absence of a finite-energy isolated colored
sector.  Wilson-loop area-law behavior is then a dynamical diagnostic compatible with, but
not identical to, the algebraic QRF criterion.
\end{remark}

%\newpage
\section{An Exactly Solvable Test of the Operational QRF Confinement Criterion}\label{app:example_QRF}

\subsection*{Purpose of the example}

This appendix gives an exactly solvable test of the operational QRF confinement criterion in pure Yang--Mills theory in one spatial dimension.  The purpose is not to model all dynamical mechanisms of four-dimensional confinement.  The model has no transverse gluons, no monopoles, and no center-vortex world sheets.  Its value is more precise: it separates three logically different operations which are often conflated:
\begin{enumerate}
\item the construction of a gauge-invariant relational non-singlet operator by a Wilson-line QRF;
\item the spectral propagation of that operator in a sector with fixed endpoint frames;
\item the singlet projection required when the endpoint frames are not physical reference systems.
\end{enumerate}
The example therefore shows that the QRF criterion is not merely the statement that group averaging kills non-singlets.  Before the endpoint-frame projection is imposed, the relational non-singlet correlator is nonzero and has an exactly computable energy.  The nontrivial point is that this energy grows linearly with the endpoint separation and hence does not give a finite-mass isolated non-singlet particle in the infinite-separation limit.  If the endpoint frames are not physically fixed, the same non-singlet operator is then removed by the endpoint-frame singlet projection.

\subsection*{Yang--Mills theory on an interval}

Let the spatial manifold be the interval
$
 I=[0,L] .
$
Let the gauge group $G$ be compact and semisimple, with Hermitian generators $T^A$ normalized in the same convention as in the main text.  In temporal gauge,
$
 \mathscr{A}_0=0,
$
the Hamiltonian of pure Yang--Mills theory is
\begin{equation}
 H=\frac{g^2}{2}\int_0^L dx\,\mathscr{E}^A(x)\mathscr{E}^A(x),
\end{equation}
and the Gauss-law constraint is
$
 (\mathscr{D}_1\mathscr{E})^A=0 .
$
Solving the Gauss law removes all local propagating degrees of freedom.  The remaining variable is the holonomy along the interval.  For definiteness, define the transporter from the right endpoint to the left endpoint by
\begin{equation}
 U(t):=P\exp\left( i g\int_L^0 dx\,\mathscr{A}_1(t,x)\right)\in G .
\end{equation}
With this orientation it transforms under endpoint gauge transformations as
\begin{equation}
 U\longmapsto g_L\,U\,g_R^{-1},
 \quad g_L:=g(0),\quad g_R:=g(L).
\end{equation}
The reduced Hilbert space before imposing endpoint-frame singletness is
\begin{equation}
 \mathcal{H}_{\rm red}=L^2(G,dU),
\end{equation}
where $dU$ is the normalized Haar measure.  On this Hilbert space the Hamiltonian is the quadratic Casimir operator on the group manifold,
\begin{equation}
 H=\frac{g^2L}{2}\,\mathcal{C}_G,
\end{equation}
where
\begin{equation}
 \mathcal{C}_G D^R_{mn}(U)=C_2(R)D^R_{mn}(U)
\end{equation}
for the matrix elements of an irreducible representation $R$.  Thus
\begin{equation}
 H D^R_{mn}(U)=E_R(L)D^R_{mn}(U),
 \quad
 E_R(L)=\frac{g^2L}{2}C_2(R).
\end{equation}
This is the energy of the electric flux sector carrying the representation $R$ between the two endpoints.

\subsection*{Relational non-singlet operator with fixed endpoint frames}

A Wilson-line QRF makes it possible to write an open, relational non-singlet operator.  In the reduced description this operator is simply the matrix element
\begin{equation}
 O^R_{mn}(t):=D^R_{mn}(U(t)).
\end{equation}
It is invariant under small gauge transformations which are trivial at the endpoints.  Under endpoint-frame transformations it transforms as
\begin{equation}
 O^R_{mn}\longmapsto D^R(g_L)_{ma}\,O^R_{ab}\,D^R(g_R^{-1})_{bn} .
\end{equation}
Therefore $O^R_{mn}$ is a gauge-invariant relational operator, but it is not a frame-independent scalar.  It carries a left and a right endpoint-frame index.  If the endpoint frames are physically fixed, these indices are meaningful conditional labels, just as a spin component is meaningful relative to a chosen spin frame.

Let $|0\rangle$ be the vacuum represented by the constant wave function on $G$.  The fixed-endpoint-frame Euclidean correlator is
\begin{equation}
 G^R_{mn;pq}(\tau;L)
 :=\langle 0|O^R_{mn}(\tau)O^{R\dagger}_{pq}(0)|0\rangle .
\end{equation}
In the group representation,
\begin{equation}
 G^R_{mn;pq}(\tau;L)
 =\int_G dU\,D^R_{mn}(U)\,e^{-H|\tau|}\,D^R_{pq}(U)^* .
\end{equation}
Since $D^R_{pq}(U)^*$ is an eigenfunction with the same Casimir,
\begin{equation}
 e^{-H|\tau|}D^R_{pq}(U)^*
 =e^{-E_R(L)|\tau|}D^R_{pq}(U)^* .
\end{equation}
Using Schur orthogonality: 
\begin{equation}
 \int_G dU\,D^R_{mn}(U)D^S_{pq}(U)^*
 =\frac{1}{d_R}\delta_{RS}\delta_{mp}\delta_{nq},
\end{equation}
one obtains the exact nonzero result
\begin{equation}
 G^R_{mn;pq}(\tau;L)
 =\frac{1}{d_R}\delta_{mp}\delta_{nq}
 \exp\left[-E_R(L) |\tau|\right],  \quad E_R(L) = \frac{g^2L}{2}C_2(R) .
\end{equation}
Equivalently, the Fourier transform is given by
\begin{equation}
 G^R_{mn;pq}(\omega;L)
 =\frac{1}{d_R}\delta_{mp}\delta_{nq}
 \frac{2E_R(L)}{\omega^2+E_R(L)^2} .
\end{equation}
Thus, at fixed $L$ and with fixed endpoint frames, the relational non-singlet operator certainly propagates.  It has a discrete spectral line with energy $E_R(L)$.  This is the first important point: the criterion is not the trivial assertion that no relational colored operator can be written.

However, this spectral line is not a finite-mass asymptotic particle pole in the infinite-space sense.  For every nontrivial representation $R\neq 1$,
\begin{equation}
 E_R(L)=\frac{g^2L}{2}C_2(R)\longrightarrow \infty
 \quad (L\to\infty).
\end{equation}
The excitation is an electric flux string stretched between endpoint frames.  Its energy is proportional to the separation $L$.  Hence it does not survive as an isolated finite-mass non-singlet particle when the endpoint separation is taken to infinity.  This is the spectral content of confinement in this solvable model.

\subsection*{Unfixed endpoint frames and singlet projection}

The previous correlator was a conditional correlator: it assumed that the endpoint frames are part of the physical specification of the experiment.  If the endpoint frames are not physically fixed, their orientations are not observable labels.  The physical algebra must then be projected to the singlet part under the endpoint-frame group $G_L\times G_R$.

For a single non-singlet relational operator this projection gives
\begin{equation}
 P_{\rm singlet}O^R_{mn}(U)
 :=\int_G dg_L\int_G dg_R\,
 D^R_{mn}(g_LUg_R^{-1})=0,
 \quad R\neq 1 .
\end{equation}
Thus a single open non-singlet endpoint operator does not survive as a frame-independent observable when no endpoint reference frame is physically supplied.  This is the second important point: the singlet projection is not what creates the linear energy; rather, it is the operation which decides whether the endpoint-frame indices are physical or merely gauge/QRF labels.

A singlet contraction, by contrast, is allowed.  The simplest surviving operator is the traced product of two open Wilson lines, equivalently the rectangular Wilson loop,
\begin{equation}
 W_R(\tau,0;L)
 :=\frac{1}{d_R}\operatorname{tr}_R\{U(\tau)U(0)^{-1}\}.
\end{equation}
Its expectation value is obtained from the same heat-kernel calculation:
\begin{equation}
 \langle W_R(\tau,0;L)\rangle
 =\exp\left[-\frac{g^2}{2}C_2(R)L|\tau|\right].
\end{equation}
Writing the rectangular area as
\begin{equation}
 {\rm Area}(C)=L|\tau|,
\end{equation}
one obtains the standard two-dimensional Yang--Mills area law
\begin{equation}
 \langle W_R(C)\rangle
 =\exp[-\sigma_R {\rm Area}(C)],
 \quad
 \sigma_R=\frac{g^2}{2}C_2(R).
\end{equation}
The same exact calculation therefore shows both facts simultaneously: a non-singlet relational operator propagates in a fixed endpoint-frame sector, but the corresponding energy grows without bound as $L\to\infty$; once endpoint frames are not physical, only singlet contractions such as Wilson loops remain.

\subsection*{Verification of the operational QRF criterion}

The operational QRF confinement criterion asks whether an admissible long-distance color QRF allows a non-singlet relational observable to appear as a frame-independent isolated finite-mass particle.  In the present exactly solvable model the answer is negative in two complementary senses.

First, in a fixed endpoint-frame sector,
\begin{equation}
 O^R_{mn}=D^R_{mn}(U)
\end{equation}
is a legitimate relational non-singlet operator and its exact correlator is nonzero:
\begin{equation}
 G^R_{mn;pq}(\tau;L)
 =\frac{1}{d_R}\delta_{mp}\delta_{nq}e^{-E_R(L)|\tau|},
 \quad
 E_R(L)=\frac{g^2L}{2}C_2(R).
\end{equation}
For $R\neq 1$, however, $E_R(L)\to\infty$ as $L\to\infty$.  Therefore the finite-interval spectral line does not define a finite-mass non-singlet asymptotic particle.

Second, if the endpoint frames are not physical reference systems, the singlet projection gives
\begin{equation}
 P_{\rm singlet}O^R_{mn}=0,
 \quad R\neq 1.
\end{equation}
Only singlet contractions, closed Wilson loops, and boundary-flux sectors specified by boundary conditions survive in the frame-independent physical algebra.

Thus this example verifies the operational QRF confinement criterion in a nontrivial way:
\begin{quote}
A relational colored object can be constructed and its fixed-frame correlator is exactly nonzero, but it either represents a flux string whose energy diverges with separation, or it is removed by the endpoint-frame singlet projection when no endpoint frame is physically fixed.  Consequently it does not become a frame-independent finite-mass non-singlet particle.
\end{quote}
Equivalently,
\begin{equation}
\begin{aligned}
&\text{Gauss law}+\text{QRF dressing}+\text{fixed-frame propagation}\\
&\quad+\text{singlet projection when frames are unfixed}+E_R(L)\to\infty
\end{aligned}
\end{equation}
can all be checked exactly in this model.

This also states precisely the limitation of the example.  Two-dimensional pure Yang--Mills theory is not a proof of four-dimensional confinement.  What it proves is that the proposed operational criterion is mathematically nonempty: the relational non-singlet object is well-defined before frame projection, its correlator can be computed exactly, and the long-distance finite-mass non-singlet pole required for a deconfined colored particle is absent.

\end{document}